\newtheorem{lemma}{Lemma}
\newtheorem{theorem}{Theorem}
\newtheorem{corollary}[theorem]{Corollary}
\newtheorem{proposition}[theorem]{Proposition}
\theoremstyle{definition}
\newtheorem{definition}{Definition}
\newtheorem{example}{Example}
\def\blfootnote{\xdef\@thefnmark{}\@footnotetext}
\newcommand{\prob}[2][]{\text{\bf Pr}\ifthenelse{\not\equal{}{#1}}{_{#1}}{}\!\left[#2\right]}
\newcommand{\expect}[2][]{\text{\bf E}\ifthenelse{\not\equal{}{#1}}{_{#1}}{}\!\left[#2\right]}
\newcommand{\E}{\mathbb{E}}
\def\Pr{\ensuremath{\mathrm{Pr}}}
\newcommand{\notshow}[1]{{}}
\newcommand*{\pr}[2][]{\text{Pr}\ifx\\\left[#1\right]\\\else_{#1}\fi \left[#2\right]}
\newenvironment{numbredtheorem}[1]{%
\begin{theorem}}{\end{theorem}\addtocounter{theorem}{-1}}
\newtheorem*{claim}{Claim}
\newcommand{\headline}[2]{%
  \vspace{0.08in}%
  \noindent \textbf{#1} #2%
  \vspace{0.08in}%
  \par\nobreak
  \@afterindentfalse
  \@afterheading
}
\title{Interdependent Bilateral Trade: Information vs Approximation \thanks{The work of S. Dobzinski and A. Shaulker was supported by the BSF-NSF grant (BSF number: 2021655, NSF number: 2127781).
The work of A. Eden was supported by the Israel Science Foundation (grant No. 533/23). Alon Eden is the Incumbent of the Harry $\&$ Abe Sherman Senior Lectureship at the School of Computer Science and Engineering at the Hebrew University.
The work of K. Goldner and T. Tsilivis was supported by NSF CAREER Award CCF-2441071. }}
\begin{document}


\author{Shahar Dobzinski \thanks{Weizmann Institute of Science (shahar.dobzinski@weizmann.ac.il)} \and
 Alon Eden \thanks{Hebrew University (alon.eden@mail.huji.ac.il)} \and Kira Goldner \thanks{Boston University (goldner@bu.edu)} \and Ariel Shaulker \thanks{Weizmann Institute of Science (ariel.shaulker@weizmann.ac.il)} \and Thodoris Tsilivis  \thanks{Boston University (tsilivis@bu.edu)}}



\maketitle

\begin{abstract}
Welfare maximization in bilateral trade has been extensively studied in recent years. Previous literature obtained incentive-compatible approximation mechanisms only for the private values case. In this paper, we study welfare maximization in bilateral trade with interdependent values. Designing mechanisms for interdependent settings is much more challenging because the values of the players depend on the private information of the others, requiring complex belief updates and strategic inference.


We propose to classify information structures by quantifying the influence that a player's private signal has on their own valuation. We then paint a picture of where approximations are possible and impossible based on these information structures.  Finally, we also study the possible approximation ratios for a natural family of information structures.
\end{abstract}




\section{Introduction}



We investigate welfare maximization in bilateral trade.  Bilateral trade is the setting with one item, one buyer with value $v_b$, and one seller with value $v_s$. We aim to design mechanisms that satisfy budget-balance and incentive-compatibility (IC) constraints, and that (approximately) maximize expected social welfare. More specifically, in bilateral trade, the aim is to maximize the expected value of the player who retains the item at the end of the process.


We explore the extent to which the quality of mechanisms' approximations is tied with the information that the players hold. Without prior information regarding the value of the other player, it is not hard to see that no mechanism can always guarantee a reasonable approximation to the welfare. Therefore, 
more information is necessary to achieve good approximations. In particular, consider the following hierarchy of canonical information models:


\begin{enumerate}
\item \textbf{Independent private values.} The private value of each player is drawn from a known distribution, independently of the value of the other players.

\item \textbf{Correlated private values.} The private values of the players are drawn from a joint distribution.

\item \textbf{Interdependent private values.} Each player receives a private signal, and their value for the item is a public function of all signals. Thus, a player may not even know their own value, that is, they are not fully informed.
\end{enumerate}
Note that this list is not exhaustive, as the models can be further distinguished and refined. For example, interdependent values contain the special case of the common value model, where all players share the same underlying valuation but receive different signals about it. However, the common value model is generally less relevant in our context since the mechanism that never trades the item maximizes social welfare. 

We can now contextualize welfare maximization in bilateral trade within this hierarchy. For independent private values, the seminal result of \citet{MS83} dictates that for bilateral trade there exist no IC (Bayesian or Dominant-Strategy IC), individually rational mechanism that simultaneously maximizes social welfare and maintains budget balance. Mechanisms satisfying all three properties with constant approximation guarantees to the optimal social welfare have been proposed in \cite{BD14, CKLT16, BD21, KPV21, LRW23, CW23}. For the problem of bilateral trade with correlated values, \citet{DS24} provide a constant-factor approximation to the optimal social welfare.


Mechanisms for interdependent valuations were studied in both economics and computer science, primarily in the setting of auctions \cite{MilgromWeber82,JM01,DM00,RT-C16,EFFG18,EFFGK19,eden2021poa,EGZ22,EFFGKjournal,EFGMM23,eden2024private,chen2021cursed,CFK14}. In the context of bilateral trade, \cite{gresik, interBT,KZ22} examined conditions under which an optimal solution exists.
Brooks and Du \cite{brooksDu} analyze a setting with the additional assumption that, in every realization, the buyer's value exceeds the seller's by at least some known amount. They identify a condition under which there exist information structures where the gains-from-trade can be inapproximable.
In contrast, we study a much more general environment, and ask when approximation is even possible.
Settings with interdependent values are generally more challenging to analyze. The complexity arises because players have to infer their valuations based on others' private information, which may lead to complex belief updating and strategic inference~\citep{Akerlof70,MilgromWeber82}. To explain this, consider any posted price mechanism. In the independent private values model, incentive compatibility is immediate---each player's optimal strategy is simply to accept or reject the posted price based on their private valuation. The primary challenge lies, therefore, in proving that the mechanism achieves a good approximation ratio for welfare.

However, when values are interdependent, proving incentive compatibility becomes significantly more complex. In fact, an equilibrium does not necessarily exist. To see why, consider a price set by the mechanism designer. Trade occurs only if both players accept the price. Take, for instance, the seller's perspective. If the buyer rejects the offer, no trade will happen regardless of the seller's decision. Therefore, the seller should evaluate the offer assuming that the buyer accepts it, leading them to update their belief about their own value for the item. Similarly, the buyer must update their own value assuming that the seller accepts the trade. This process of mutual updating of beliefs might lead to a situation where no equilibrium exists. We discuss this with a concrete example in Appendix \ref{subsec-obstacle-posted-price}. 



\subsection*{Our Model}

We consider a model of bilateral trade where the players' valuation functions are additively separable in both signals. That is, if $\signal{s}$ and $\signal{b}$ are the private signals received by the seller and the buyer, respectively, then there exist functions $\buyervaluationbuyerfunc$ and $\buyervaluationsellerfunc$ such that the buyer's value can be expressed as:
    \begin{equation*}
        v_b(\signal{b}, \signal{s}) = \buyervaluationbuyerfunc{}(\signal{b}) + \buyervaluationsellerfunc(\signal{s})
    \end{equation*}
    and functions $\sellervaluationbuyerfunc$ and $\sellervaluationsellerfunc$ such that the seller's value can be expressed as:
    \begin{equation*}
        v_s(\signal{b}, \signal{s}) = \sellervaluationbuyerfunc{}(\signal{b}) + \sellervaluationsellerfunc(\signal{s}).
    \end{equation*}
We assume that $\buyervaluationbuyerfunc, \buyervaluationsellerfunc, \sellervaluationbuyerfunc,\sellervaluationsellerfunc$ are non-decreasing\footnote{We note that all of our positive results holds without this assumption.} and that $\signal{s},\signal{b}$ are independently drawn from the interval $[0,1]$. Each tuple of $\buyervaluationbuyerfunc, \buyervaluationsellerfunc, \sellervaluationbuyerfunc,\sellervaluationsellerfunc$ defines an \emph{information structure}, as it determines the influence of the buyer and seller's signals on both the self and the other. As we work in the combination of the interdependent values and two-sided settings, we seek mechanisms that are Bayesian incentive-compatible and interim individually rational.\footnote{Indeed, in Appendix \ref{subsec-obstacle-ex-post} we demonstrate that stronger notions like ex-post incentive compatibility and ex-post individual rationality have little power in our setting.} 

As we will see later, it is not hard to come up with information structures for which there is no Bayesian incentive-compatible mechanism that can achieve a good approximation. Thus, our goal is to obtain a better understanding of which information structures admit a Bayesian incentive-compatible, budget-balanced mechanism that achieves a good approximation to social welfare and which are not. 

In this paper, our technical contributions take two different approaches toward classifying information structures. Our main approach does not restrict $\buyervaluationbuyerfunc, \buyervaluationsellerfunc, \sellervaluationbuyerfunc,\sellervaluationsellerfunc$ at all and classifies information structures by the amount of influence of the signal that the player receives on their valuation. In addition, we also take another approach and identify some meaningful families of information structures where $\buyervaluationbuyerfunc, \buyervaluationsellerfunc, \sellervaluationbuyerfunc,\sellervaluationsellerfunc$ have a specific structure.

\subsection*{Main Approach: Amount of Information vs. Approximation}

In our first approach, we measure the influence of the signal that each player receives on their value. Towards this end, we say that a seller is $\informedcoeffseller$-informed if  $\informedcoeffseller = \frac{\mathbb{E}_{\signal{s}}[\sellervaluationsellerfunc(\signal{s})]}{\mathbb{E}_{\signal{s}}[\sellervaluationsellerfunc(\signal{s})]+ \mathbb{E}_{\signal{b}}[\sellervaluationbuyerfunc{}(\signal{b})]}$. Similarly, a buyer is $\beta$-informed if $\beta = \frac{\mathbb{E}_{\signal{b}}[\buyervaluationbuyerfunc{}(\signal{b})]}{\mathbb{E}_{\signal{b}}[\buyervaluationbuyerfunc{}(\signal{b})]+ \mathbb{E}_{\signal{s}}[\buyervaluationsellerfunc{}(\signal{s})]}$.\footnote{We note that the assumption that the functions are additively separable is used only to simplify the presentation. All our positive results also hold for the case when the definitions of $\alpha, \beta$ are generalized to $\alpha = \frac{\E_{\signal{s}}[v_s(0_b,\signal{s})]}{ \E_{\signal{b},\signal{s}}[v_s(\signal{b},\signal{s})]}
    \text{\ and\ } 
    \beta = \frac{\E_{\signal{b}}[v_b(\signal{b},0_s)]}{ \E_{\signal{b},\signal{s}}[v_b(\signal{b},\signal{s})]}$. Of course, our negative results continue to hold in this more general case.}

This approach allows us to classify every possible information structure by mapping each one to a pair of numbers $(\alpha,\beta)$. In other words, each possible information structure corresponds to a point in the unit square. See figure \ref{fig:intro-information-rectangle} for an illustration. Let us start with considering the four extreme points of the unit square: $(0,0), (0,1), (1,0)$, and $(1,1)$.

The extreme point that is the easiest to analyze is $(0,0)$. In this case, we say that both players are uninformed since the signal that each player receives has no effect on their value. Let $E_b=\mathbb{E}_{\signal{s}}[\buyervaluationsellerfunc(\signal{s})]$ denote the expected value of the buyer and $E_s=\mathbb{E}_{\signal{b}}[\sellervaluationbuyerfunc{}(\signal{b})]$ denote the expected value of the seller. When $E_s \geq E_b$, the algorithm that keeps the item with the seller already provides a $2$-approximation: the expected value of the optimal welfare is at most $E_s+E_b\leq 2\cdot E_s$ whereas the welfare of the algorithm is $E_s$. If $E_b > E_s$, then we set a price of $\frac {E_b+E_s} 2$ for trade. Both parties will always agree to trade regardless of their signals, and we get a $2$-approximation in this case as well.

The extreme point $(1,1)$ was already considered in the literature. Here, the two bidders are completely informed, so this point corresponds to the model of independent private values. As mentioned earlier, this model was extensively studied and a constant approximation can be achieved by a posted price \cite{BD14, BD21, KPV21, CKLT16, CW23, LRW23}.

The two remaining extreme points---a fully informed buyer and an uninformed seller, and a fully informed seller and an uninformed buyer---are more challenging to analyze. However, for both points we are able to prove that there are information structures for which no Bayesian incentive-compatible mechanism can guarantee any constant approximation. The $(0,1)$ impossibility holds even when the values are single crossing---that is, one can implement the welfare-maximizing mechanism in an incentive-compatible way, but the budget balance requirement prevents the existence of any incentive-compatible mechanism that maximizes welfare even approximately. In contrast, the $(1,0)$ impossibility does not hold when the values are single-crossing, for a good reason, as we will later see. 

We now move on to analyzing the ``edges'' between these extreme points. We start with an impossibility when the buyer is uninformed:

\begin{figure}[h]  
    \centering  
\begin{tikzpicture}
    \shade[top color=red, bottom color=blue] (0,0) rectangle (0.1,6);
    \fill[red] (0.1,6) rectangle (6.1,6.1);
    \shade[top color=blue, bottom color=red] (6,0) rectangle (6.1,6);
    \fill[red] (0,0) rectangle (6.1,0.1);
    \node at (0.05,0.05) [fill=blue, text=white, circle, inner sep=2pt ] {};
    \node[scale=0.8] at (-0.6,-0.4) {\shortstack{$(0,0)$ \\uninformed \\ players}};

    \node at (6.05,0.05) [fill=red, text=white, circle, inner sep=2pt] {};
    \node[scale=0.8] at (6.7,-0.4) {\shortstack{$(1,0)$ \\ fully informed seller \\ uninformed buyer}};

    \node at (0.05,6.05) [fill=red, text=white, circle, inner sep=2pt] {};
    \node[scale=0.8] at (-0.6,6.7) {\shortstack{ uninformed seller \\ fully informed buyer\\ $(0,1)$ }};

    \node at (6.05,6.05) [fill=blue, text=white, circle, inner sep=2pt] {};
    \node[scale=0.8] at (6.7,6.6) {\shortstack{\\private values\\ $(1,1)$ }};

    \node at (3, -0.8) {Increasingly informed seller};
    \draw[->, thick] (1, -0.5) -- (5, -0.5); 
   
   \node[rotate=90] at (-0.8, 3.2) {Increasingly informed buyer};
   \draw[->, thick] (-0.5, 1.2) -- (-0.5, 5.2); 
   
\end{tikzpicture}
\caption{\small An illustration of our results. Each point \((\informedcoeffseller, \informedcoeffbuyer)\) corresponds to an \(\informedcoeffseller\)-informed seller and a \(\informedcoeffbuyer\)-informed buyer. If a point $(\alpha,\beta)$ is colored blue, there is a constant-factor mechanism for every $(\alpha,\beta)$-information structure. If a point $(\alpha,\beta)$ is colored red, there exists an $(\alpha,\beta)$-information structure for which no mechanism can guarantee a constant approximation. For gradient edges, the approximation ratio we guarantee smoothly varies as the amount of information the players hold changes. The right edge is discussed in Section~\ref{sec:seller-informed}, the top edge in Section~\ref{sec:buyer-informed}, the left edge in Section~\ref{sec:seller-uninformed}, and the bottom edge in Section~\ref{sec:uninformed-buyer}.
}
\label{fig:intro-information-rectangle}
\end{figure}



\headline{Informal Theorem: }{Fix $1 \geq \alpha >0$. There exists an $(\alpha,0)$-information structure where no Bayesian incentive-compatible mechanism provides a constant approximation ratio.}

The theorem, along with our discussion above, highlights an interesting ``threshold'' phenomenon: if both the seller and the buyer are fully uninformed, there is a mechanism that provides a constant approximation ratio. However, there exist information structures where the signal that the seller receives is only slightly informative, but no mechanism can provide a constant approximation ratio for these structures (Section~\ref{sec:uninformed-buyer}).

In contrast, if the seller is uninformed, then the possible approximation ratio worsens as the buyer becomes more informed. In other words, ignorance is bliss: less information helps us to obtain better approximation ratios (Section~\ref{sec:seller-uninformed}):

\headline{Informal Theorem: }{Fix $1 \geq \beta >0$. For every $(0,\beta)$-information structure, there exists a Bayesian incentive-compatible mechanism that provides an approximation ratio of $O(\frac 1 {1-\beta}).$ Furthermore, for any $\informedcoeffbuyer \geq 0.9$, there exists an $(0, \informedcoeffbuyer)$-information structure where no Bayesian incentive-compatible mechanism has an approximation ratio better than $\Omega(\sqrt{\frac{1}{1-\informedcoeffbuyer}})$.}

We now proceed to consider information structures where the seller is fully informed and the buyer is partially informed. We show that, given a mechanism that provides a $c$-approximation in the independent private values model, we can construct another mechanism that provides a good approximation ratio in our case (Section~\ref{sec:seller-informed}):


\headline{Informal Theorem: }{Fix $1\geq \beta \geq 0$. Consider some mechanism where the seller always have a dominant strategy. Suppose that this mechanism provides an approximation ratio of $c$ in the $(1,1)$ private values model. For every $(1,\beta)$-information structure, there is a mechanism that provides an approximation ratio of $O(\frac c \beta)$. Moreover, there exists an $(1,\informedcoeffbuyer)$-information structure where no Bayesian incentive-compatible mechanism has an approximation ratio better than $\frac{2}{3\informedcoeffbuyer}$. } 

The approximation mechanism for the private values model can be, e.g., a posted price mechanism \cite{BD14, CKLT16, BD21, KPV21, LRW23, CW23} or the buyer offering mechanism \cite{DS24}. Each such mechanism provides a constant approximation ratio. Moreover, in the special case where the valuations satisfy the single-crossing condition, we strengthen our mechanism and show that a constant-factor approximation exists for this edge.


The last ``edge'' that we consider is the symmetric scenario where the buyer is fully informed and the seller is partially informed (Section~\ref{sec:buyer-informed}). Unlike the previous case (where the seller is fully informed and the buyer is partially informed), we now quickly encounter the impossibility barrier:

\headline{Informal Theorem: }{Fix $1 > \alpha >0$. There exists an $(\alpha,1)$-information structure where no Bayesian incentive-compatible mechanism provides a constant approximation ratio.}

In a sense, our results identify a fundamental challenge in designing mechanisms that approximate expected welfare in interdependent bilateral trade. The challenge stems, of course, from the necessity of each of the players to reassess their value conditioned on the other player accepting a trade. The specific point is that the strategic behavior of the buyer often reveals ``too much'' information about their private signal. That is, if the buyer accepts a trade then the seller can infer that the contribution of the buyer's signal to the seller's value is higher than the seller's ex-ante expected contribution of the signal. Thus, knowing that the buyer accepts a trade makes the seller less likely to do so. Similarly, if a seller accepts a trade, it means that the contribution of the seller's signal to the buyer's value is smaller than expected.

Our impossibilities leverage precisely this point---the buyer's strategic behavior leaks too much information---and show that this issue is unavoidable. Our positive results, however, are based on identifying cases where this leakage does not happen. One such example is the edge $(1,\informedcoeffbuyer)$ where the seller is fully informed and thus their estimation of their value is not affected by strategic calculations of the buyer. Our positive result for the edge $(\informedcoeffseller,0)$ is based on finding a posted price that the buyer always accepts. Thus, the estimation of the seller for their own value is also not affected by the strategic behavior of the buyer in this case as well.

Looking at Figure \ref{fig:intro-information-rectangle}, we see that the positive results that we got require the seller to either be fully informed or fully uninformed. In Section \ref{sec-interior} we prove that this is necessary: if the seller acquires or loses a tiny bit of information, the approximation ratio deteriorates significantly.

We note that there are some information structures that can be represented by several functions. For example, if $v_s(\signal{s}) = \signal{s}+1$, then both $\sellervaluationsellerfunc(x_s)=x_s, \sellervaluationbuyerfunc(x_s)=1$ and  $\sellervaluationsellerfunc(x_s)=x_s+0.5, \sellervaluationbuyerfunc(x_s)=0.5$ are valid choices that will give different values of $\alpha,\beta$. Of course, this makes our positive result only stronger as we can choose the decomposition to functions that will give the best results. As for our negative results, we make sure that all information structures in our impossibilities have a unique decomposition.
\subsection*{Approach II: Families of Information Structures}

While the primary focus of this paper is on quantifying the influence of signals on valuations, we also take a complementary approach by identifying meaningful families of information structures and analyzing their approximation ratios. This approach is common in the interdependent values literature in order to circumvent its many strong impossibility results.
Our first result in this direction concerns the linear setting:

\headline{Theorem:}{ Suppose that $\buyervaluationbuyerfunc, \buyervaluationsellerfunc, \sellervaluationbuyerfunc,\sellervaluationsellerfunc$ are linear. Then there is a posted price mechanism that is Bayesian incentive-compatible and provides a constant approximation to the social welfare.}

We also consider the more general case where the functions are polynomials of a bounded degree. Here, we prove both possibility and impossibility results: 

\headline{Informal Theorem:}
{ 
\begin{itemize}
\item Suppose that $\buyervaluationbuyerfunc, \buyervaluationsellerfunc, \sellervaluationbuyerfunc,\sellervaluationsellerfunc$ are polynomials with maximal degree $k$. There is a Bayesian incentive-compatible mechanism that provides an $O(k^2)$-approximation to the social welfare.
\item For every $k$, there are $\buyervaluationbuyerfunc, \buyervaluationsellerfunc, \sellervaluationbuyerfunc,\sellervaluationsellerfunc$ that are all polynomials of maximal degree $k$, such that no Bayesian incentive-compatible mechanism provides an approximation ratio better than $\Omega(k)$ to the social welfare.
\end{itemize}}

\subsection*{Future Work}

Our work raises many intriguing questions for future research. For example, we expect most points in the interior of the square depicted in Figure \ref{fig:intro-information-rectangle} to be red, indicating that they do not admit good approximation mechanisms. However, it would be interesting to find natural conditions or interesting families of information structures that do admit good approximations, similar to our findings for information structures with polynomial functions.

Additionally, while this paper focuses on welfare maximization in bilateral trade, another well-studied objective is maximizing the gains from trade \cite{M07,BD14,BCWZ17,DMSW22,F22,CaiGMZ20,BabaioffGG20, DKB}. A $c$-approximation for gains from trade directly implies a $c$-approximation for welfare maximization, hence all impossibility results established in this paper also apply to the objective of maximizing gains from trade. We leave open the question of whether there exist meaningful families of information structures that allow for non-trivial approximations of gains from trade.

\section{Preliminaries}\label{sec:preliminaries}


In the bilateral trade problem with interdependent values, there are two agents: a seller, who owns the item, and a buyer, who seeks to purchase it. Each agent holds a signal, $\signal{s}$ for the seller and $\signal{b}$ for the buyer, and their valuations are public functions of both signals.\footnote{The same way the distributions have to be public in bilateral trade, so do the valuation functions, as there are no prior-free mechanisms that provide a good approximation.}

\begin{definition}[Valuation Functions]\label{def:valuation-functions}
We assume that the agents' valuation functions are separable in both signals, as defined by Equation~\eqref{equ:def-buyer-value} for the buyer and Equation~\eqref{equ:def-seller-value} for the seller.
\begin{subequations}
    \begin{equation}\label{equ:def-buyer-value}
        v_b(\signal{b}, \signal{s}) = \buyervaluationbuyerfunc{}(\signal{b}) + \buyervaluationsellerfunc(\signal{s}),
    \end{equation}
        \begin{equation}\label{equ:def-seller-value}
        v_s(\signal{b}, \signal{s}) = \sellervaluationbuyerfunc{}(\signal{b}) + \sellervaluationsellerfunc(\signal{s}).
    \end{equation}
\end{subequations}
\end{definition}

We assume that the functions $\buyervaluationbuyerfunc{}, \buyervaluationsellerfunc{}, \sellervaluationsellerfunc{}, \sellervaluationbuyerfunc{}$  are monotone non-decreasing \footnote{All our positive results hold without this assumption.} and that the signals are drawn independently from a uniform distribution over $[0,1]$.\footnote{Note that the assumption that the signals are distributed uniformly over $[0,1]$ is without loss of generality, as we don't restrict the functions, except for Section~\ref{sec:poly-function}.} Each tuple of $\buyervaluationbuyerfunc{}, \buyervaluationsellerfunc{}, \sellervaluationsellerfunc{}, \sellervaluationbuyerfunc{}$ defines an information structure.

 A \emph{direct mechanism} $\mechanism$ 
 consists of two functions, $\mechanism = (\allocation, \payment)$ that receive agents' signals $\signal{s},\signal{b}$ as input, where $\allocation(\signal{b}, \signal{s})$  indicates the probability of trade, and $\payment(\signal{b}, \signal{s})$ is the price that the buyer pays the seller. We provide definitions for direct mechanisms, although we sometimes describe our mechanisms as indirect to make the presentation clearer. This is without loss of generality by the revelation principle. 

One restriction of the bilateral trade setting is of \emph{budget balance}, where the mechanism designer does not have to inject money. This may be formalized with distinct payment functions $\payment_b$ and $\payment_s$ for the buyer and seller respectively and requiring $\payment_b \geq \payment_s$. Because we use $\payment = \payment_b = \payment_s$, our mechanisms are all automatically \emph{strongly budget-balanced}: no money is left on the table.
 
The two other constraints of the mechanisms we design are Bayesian incentive-compatibility and interim individual rationality.

\begin{definition}[Bayesian Incentive Compatibility (BIC)]\label{def:BIC}
    A mechanism $\mechanism = (\allocation, \payment)$ is 
{\em Bayesian incentive-compatible} if for every $\signal{b},\signal{b}'$:
\begin{equation*}
\E_{\signal{s}}[x(\signal{b},\signal{s}) \cdot v_{b}(\signal{b},\signal{s}) - p(\signal{b},\signal{s})] \geq \E_{\signal{s}}[x(\signal{b}',\signal{s}) \cdot v_{b}(\signal{b},\signal{s}) - p(\signal{b}',\signal{s})],
\end{equation*}
and for every $\signal{s},\signal{s}'$:
\begin{equation*}
\E_{\signal{b}}[p(\signal{b},\signal{s}) - x(\signal{b},\signal{s}) \cdot v_{s}(\signal{b},\signal{s})] \geq \E_{\signal{b}}[p(\signal{b},\signal{s}') - x(\signal{b},\signal{s}') \cdot v_{s}  x(\signal{b},\signal{s})].
\end{equation*}
\end{definition}
\begin{definition}[Interim Individual Rationality (Interim IR)]\label{def:interim-IR}
    A mechanism $\mechanism = (\allocation, \payment)$ is 
{\em interim individually rational} if for every $\signal{b}$ and $\signal{s}$, 
\begin{equation*}
\E_{\signal{s}}[x(\signal{b},\signal{s}) \cdot v_{b}(\signal{b},\signal{s}) - p(\signal{b},\signal{s})] \geq 0 \hspace{1cm} \text{and} \hspace{1cm} \E_{\signal{b}}[p(\signal{b},\signal{s}) - x(\signal{b},\signal{s}) \cdot v_{s}(\signal{b},\signal{s})] \geq 0
\label{eq:IIR}
\end{equation*}
\end{definition}

For mechanisms that are BIC for the seller (respectively for the buyer), the classic Myersonian analysis can be applied. This allows us to retrieve a seller (buyer) payment formula, as well as argue that $\allocation$ and $\payment$ must be non-increasing functions of $\signal{s}$ (non-decreasing of $\signal{b}$). For a more in-depth discussion on Myersonian theory for interdependent values we refer to \cite{RT-C16}.

\begin{lemma}[Myersonian Seller Payment Formula]\label{lemma:seller-payment-formula}  
If a mechanism $(\allocation,\payment)$ is BIC for the seller then for every $\signal{s} \in [0,1]$ the seller's payment function must satisfy:
\begin{equation*}
    \mathbb{E}_{\signal{b}} \left[ \payment(\signal{b},\signal{s}) \right] = \mathbb{E}_{\signal{b}} \left[ \payment(\signal{b},0) + v_s(\signal{b},\signal{s}) \cdot \allocation(\signal{b},\signal{s}) - v_s(\signal{b},0) \cdot \allocation(\signal{b},0) - \int_{0}^{\signal{s}} {\allocation(\signal{b},z) \left. \frac{\partial v_s(\signal{b},\signal{s})}{\partial \signal{s}} \right|_{\signal{s}=z}} \, dz \right].
\end{equation*}
\end{lemma}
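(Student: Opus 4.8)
The plan is to run the classical Myersonian envelope argument, adapted to the two-sided interdependent setting by exploiting that the seller's value $v_s(\signal{b},\signal{s})=\sellervaluationbuyerfunc(\signal{b})+\sellervaluationsellerfunc(\signal{s})$ depends on the seller's \emph{own} report only through the scalar quantity $\sellervaluationsellerfunc(\signal{s})$, which will play the role of the seller's ``effective type.''

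First I would fix a mechanism $(\allocation,\payment)$ that is BIC for the seller and write the seller's interim utility when their true signal is $\signal{s}$ but they report $\signal{s}'$:
\[
U_s(\signal{s},\signal{s}') \;=\; \E_{\signal{b}}\!\big[\payment(\signal{b},\signal{s}')\big]\;-\;\E_{\signal{b}}\!\big[\allocation(\signal{b},\signal{s}')\,\sellervaluationbuyerfunc(\signal{b})\big]\;-\;\sellervaluationsellerfunc(\signal{s})\cdot\E_{\signal{b}}\!\big[\allocation(\signal{b},\signal{s}')\big].
\]
Writing $A(\signal{s}'):=\E_{\signal{b}}[\allocation(\signal{b},\signal{s}')]$ for the interim trade probability and $Q(\signal{s}'):=\E_{\signal{b}}[\payment(\signal{b},\signal{s}')-\allocation(\signal{b},\signal{s}')\sellervaluationbuyerfunc(\signal{b})]$, this reads $U_s(\signal{s},\signal{s}')=Q(\signal{s}')-\sellervaluationsellerfunc(\signal{s})\,A(\signal{s}')$, an affine function of $\sellervaluationsellerfunc(\signal{s})$. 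The seller's truthful interim utility $u_s(\signal{s}):=U_s(\signal{s},\signal{s})$ is, by the BIC constraints, the pointwise maximum $\max_{\signal{s}'}\big(Q(\signal{s}')-\sellervaluationsellerfunc(\signal{s})A(\signal{s}')\big)$, hence a convex and non-increasing function of $\sellervaluationsellerfunc(\signal{s})$.

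Second, I would record the usual monotonicity consequence: adding the BIC inequality for type $\signal{s}$ deviating to $\signal{s}'$ to the one for $\signal{s}'$ deviating to $\signal{s}$ gives $\big(\sellervaluationsellerfunc(\signal{s})-\sellervaluationsellerfunc(\signal{s}')\big)\big(A(\signal{s}')-A(\signal{s})\big)\ge 0$, so $A$ is non-increasing in $\signal{s}$ (since $\sellervaluationsellerfunc$ is non-decreasing). Then, by the envelope theorem for convex functions, $-A(\signal{s})$ is a subgradient of $u_s$ in the variable $\sellervaluationsellerfunc(\signal{s})$; integrating the effective type from its value at $\signal{s}=0$ up to its value at $\signal{s}$ yields
\[
u_s(\signal{s}) \;=\; u_s(0)\;-\;\int_0^{\signal{s}} A(z)\, d\sellervaluationsellerfunc(z),
\]
a Riemann--Stieltjes integral that, when $\sellervaluationsellerfunc$ is differentiable (as the statement's use of $\partial v_s/\partial\signal{s}$ presumes), equals $\int_0^{\signal{s}}A(z)\,\sellervaluationsellerfunc'(z)\,dz=\E_{\signal{b}}\big[\int_0^{\signal{s}}\allocation(\signal{b},z)\,\tfrac{\partial v_s(\signal{b},\signal{s})}{\partial\signal{s}}\big|_{\signal{s}=z}\,dz\big]$, using $\partial v_s/\partial\signal{s}=\sellervaluationsellerfunc'$ (which in particular does not depend on $\signal{b}$) and Tonelli to swap the $\signal{b}$-expectation with the $z$-integral. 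Finally, expanding $u_s(\signal{s})=\E_{\signal{b}}[\payment(\signal{b},\signal{s})]-\E_{\signal{b}}[\allocation(\signal{b},\signal{s})v_s(\signal{b},\signal{s})]$ and $u_s(0)$ analogously, substituting into the displayed identity, and solving for $\E_{\signal{b}}[\payment(\signal{b},\signal{s})]$ gives exactly the claimed formula after collecting terms.

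The main obstacle is purely technical: justifying the integral (envelope) representation when $\sellervaluationsellerfunc$ is only assumed monotone rather than differentiable, and making sense of the change of variables into the effective type when $\sellervaluationsellerfunc$ is not strictly increasing. This is handled by the standard facts that a convex function of one variable is the integral of its subgradient and that a bounded monotone $A$ integrates against $d\sellervaluationsellerfunc$ in the Stieltjes sense — flat pieces of $\sellervaluationsellerfunc$ contribute nothing, and on such pieces the BIC inequalities already force equality of interim utilities, so no ambiguity remains. Alternatively, since the statement is phrased with $\partial v_s/\partial\signal{s}$, one may simply impose differentiability and run the classical argument verbatim. I would stress that the interdependence — $v_s$ also depending on $\signal{b}$ — causes no extra difficulty here, precisely because $\signal{b}$ is integrated out before the envelope step and the $\signal{s}$-derivative of $v_s$ is independent of $\signal{b}$.
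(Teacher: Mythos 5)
Your proof is correct. The paper does not actually prove Lemma~\ref{lemma:seller-payment-formula}; it states it and refers the reader to \cite{RT-C16} for the Myersonian theory for interdependent values, so there is no ``paper's own proof'' to compare against. What you write is precisely the standard envelope argument, adapted to this setting in the natural way: you correctly identify $\sellervaluationsellerfunc(\signal{s})$ as the seller's one-dimensional effective type (possible exactly because the seller's value is additively separable, so $U_s(\signal{s},\signal{s}')$ is affine in $\sellervaluationsellerfunc(\signal{s})$), recover monotonicity of the interim trade probability, apply the envelope/subgradient characterization, change variables back from the effective type to $\signal{s}$, and rearrange to obtain the stated payment identity. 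Your observation that interdependence causes no additional difficulty here because $\partial v_s/\partial\signal{s}=\sellervaluationsellerfunc'$ is independent of $\signal{b}$, so that the expectation over $\signal{b}$ commutes with the envelope step, is the right thing to stress. The technical caveat you flag about non-differentiable or flat $\sellervaluationsellerfunc$ is handled correctly via the Stieltjes formulation, and in any case the lemma's statement already presupposes differentiability by writing $\partial v_s/\partial\signal{s}$, so the classical version suffices for the paper's purposes.
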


\begin{lemma}[Myersonian Buyer Payment Formula]\label{lemma:buyer-payment-formula}  
If a mechanism $(\allocation,\payment)$ is BIC for the buyer then for every $\signal{b} \in [0,1]$ the buyer's payment function must satisfy:
\begin{equation*}
\mathbb{E}_{\signal{s}} \left[ \payment(\signal{b},\signal{s}) \right] = \mathbb{E}_{\signal{s}} \left[ \payment(0,\signal{s}) + v_b(\signal{b},\signal{s}) \cdot \allocation(\signal{b},\signal{s}) - v_b(0,\signal{s}) \cdot \allocation(0,\signal{s}) - \int_{0}^{\signal{b}} {\allocation(z,\signal{s}) \left. \frac{\partial v_b(\signal{b},\signal{s})}{\partial \signal{b}} \right|_{\signal{b} = z}} \, dz \right].
\end{equation*}
\end{lemma}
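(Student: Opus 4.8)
The plan is to run the classical Myersonian envelope argument, but on the buyer's \emph{interim} quantities, exploiting that the buyer's own signal $\signal{b}$ enters $v_b$ through the known function $\buyervaluationbuyerfunc$. Fix the buyer and define the truthful interim utility
\[
U(\signal{b}) \;=\; \E_{\signal{s}}\!\left[\allocation(\signal{b},\signal{s})\,v_b(\signal{b},\signal{s}) - \payment(\signal{b},\signal{s})\right].
\]
The BIC condition for the buyer (Definition~\ref{def:BIC}) says precisely that for every true type $\signal{b}$, reporting $\signal{b}$ maximizes $\E_{\signal{s}}[\allocation(\signal{b}',\signal{s})\,v_b(\signal{b},\signal{s}) - \payment(\signal{b}',\signal{s})]$ over $\signal{b}'$. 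Writing this constraint once for $(\signal{b},\signal{b}')$ and once with the roles reversed, and subtracting, yields the two-sided sandwich: for $\signal{b} > \signal{b}'$,
\[
\E_{\signal{s}}\!\left[\allocation(\signal{b},\signal{s})\bigl(v_b(\signal{b},\signal{s})-v_b(\signal{b}',\signal{s})\bigr)\right] \;\ge\; U(\signal{b}) - U(\signal{b}') \;\ge\; \E_{\signal{s}}\!\left[\allocation(\signal{b}',\signal{s})\bigl(v_b(\signal{b},\signal{s})-v_b(\signal{b}',\signal{s})\bigr)\right].
\]
In the separable model $v_b(\signal{b},\signal{s})-v_b(\signal{b}',\signal{s}) = \buyervaluationbuyerfunc(\signal{b})-\buyervaluationbuyerfunc(\signal{b}')$ is independent of $\signal{s}$, which simplifies both sides, but the argument does not need this.

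From the sandwich, since $\allocation \in [0,1]$ and $v_b$ is Lipschitz in $\signal{b}$ (the derivative $\partial v_b/\partial \signal{b} = \buyervaluationbuyerfunc'$ is bounded on $[0,1]$), $U$ is Lipschitz, hence absolutely continuous, so $U(\signal{b}) = U(0) + \int_0^{\signal{b}} U'(z)\,dz$. It remains to identify $U'(z)$ for a.e.\ $z$: dividing the sandwich by $\signal{b}-\signal{b}'$ and letting $\signal{b}',\signal{b}\to z$, the dominated convergence theorem (the difference quotients of $v_b$ converge pointwise in $\signal{s}$ to $\partial_{\signal{b}} v_b$ and are uniformly bounded) together with monotonicity of the interim allocation $\signal{b}\mapsto \E_{\signal{s}}[\allocation(\signal{b},\signal{s})]$ (which itself follows from BIC) pushes both bounds to the common limit $\E_{\signal{s}}\!\left[\allocation(z,\signal{s})\,\partial_{\signal{b}}v_b(\signal{b},\signal{s})\big|_{\signal{b}=z}\right]$ at every continuity point of that interim allocation, i.e.\ a.e. Alternatively one can simply invoke the envelope theorem for interdependent single-parameter settings exactly as in Lemma~\ref{lemma:seller-payment-formula}.

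Finally, substitute $U(\signal{b}) = \E_{\signal{s}}[\allocation(\signal{b},\signal{s})\,v_b(\signal{b},\signal{s})] - \E_{\signal{s}}[\payment(\signal{b},\signal{s})]$ and the analogous expression for $U(0)$ into the integral identity, solve for $\E_{\signal{s}}[\payment(\signal{b},\signal{s})]$, and use Fubini to pull the $z$-integral inside $\E_{\signal{s}}[\cdot]$; this is exactly the claimed formula. I expect the only nontrivial point to be the a.e.\ identification of $U'$ — making the difference-quotient limit rigorous when $\allocation(\cdot,\signal{s})$ need not be continuous for each fixed $\signal{s}$ — which is handled via the interim-allocation monotonicity implied by BIC and is entirely standard; everything else is bookkeeping, and the derivation mirrors the seller's payment formula of Lemma~\ref{lemma:seller-payment-formula} with the direction of monotonicity reversed.
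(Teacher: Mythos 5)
Your proof is correct and is exactly the classical Myersonian envelope argument that the paper itself invokes without proof (the paper states this lemma as a direct consequence of ``the classic Myersonian analysis'' and simply cites Roughgarden--Talgam-Cohen for the interdependent-values version). The two directions of the BIC constraint give the sandwich $\E_{\signal{s}}[\allocation(\signal{b}',\signal{s})(v_b(\signal{b},\signal{s})-v_b(\signal{b}',\signal{s}))] \le U(\signal{b})-U(\signal{b}') \le \E_{\signal{s}}[\allocation(\signal{b},\signal{s})(v_b(\signal{b},\signal{s})-v_b(\signal{b}',\signal{s}))]$, and in the separable model the increment $v_b(\signal{b},\signal{s})-v_b(\signal{b}',\signal{s})=\buyervaluationbuyerfunc(\signal{b})-\buyervaluationbuyerfunc(\signal{b}')$ is $\signal{s}$-free, so your limit really is just $\buyervaluationbuyerfunc'(z)$ times the interim allocation $\E_{\signal{s}}[\allocation(z,\signal{s})]$, whose a.e.\ continuity follows from BIC-monotonicity exactly as you say; rearranging and applying Fubini recovers the stated identity.

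One small caveat worth flagging for completeness: your Lipschitz claim for $U$ implicitly assumes $\buyervaluationbuyerfunc'$ is bounded on $[0,1]$. The paper only assumes monotone non-decreasing (and implicitly differentiable, since the lemma involves $\partial v_b/\partial\signal{b}$). If you want to avoid the boundedness assumption, note that the sandwich already shows $U$ is absolutely continuous with respect to $\buyervaluationbuyerfunc$ (its increments are sandwiched between $0$ and $\buyervaluationbuyerfunc(\signal{b})-\buyervaluationbuyerfunc(\signal{b}')$ using $\allocation\in[0,1]$), which is all you need for the envelope representation once $\buyervaluationbuyerfunc$ is itself absolutely continuous with integrable derivative — the minimal regularity under which the displayed integral makes sense in the first place. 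This is a cosmetic strengthening; the argument as written is sound under the paper's implicit regularity assumptions.
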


We now introduce some notions that describe classes of information structures.

\begin{definition}[Single-Crossing]\label{def:single-crossing}
    A bilateral trade instance $(v_s,v_b)$ is \emph{Single-Crossing (SC)}  if for every signal profile $(\signal{s},\signal{b})$, 
    \begin{equation*}
    \frac{\partial v_b(\signal{b},\signal{s})}{\partial \signal{b}} \geq \frac{\partial v_{s}(\signal{b},\signal{s})}{\partial \signal{b}} \hspace{1cm} \text{and} \hspace{1cm} \frac{\partial v_s(\signal{b},\signal{s})}{\partial \signal{s}} \geq \frac{\partial v_{b}(\signal{b},\signal{s})}{\partial \signal{s}}.
    \label{eq:SC}
\end{equation*}
\end{definition}

\begin{definition}[Seller is $\informedcoeffseller$-informed]\label{def:seller-informed}
    A seller is $\informedcoeffseller$-informed, for $\informedcoeffseller \in [0,1]$, if it holds that $\informedcoeffseller = \frac{\mathbb{E}_{\signal{s}}[\sellervaluationsellerfunc(\signal{s})]}{\mathbb{E}_{\signal{s}}[\sellervaluationsellerfunc(\signal{s})]+ \mathbb{E}_{\signal{b}}[\sellervaluationbuyerfunc{}(\signal{b})]}$.
\end{definition}

\begin{definition}[Buyer is $\informedcoeffbuyer$-informed]\label{def:buyer-informed}
    A buyer is $\informedcoeffbuyer$-informed, for $\informedcoeffbuyer \in [0,1]$, if it holds that $\informedcoeffbuyer = \frac{\mathbb{E}_{\signal{b}}[\buyervaluationbuyerfunc{}(\signal{b})]}{\mathbb{E}_{\signal{b}}[\buyervaluationbuyerfunc{}(\signal{b})]+ \mathbb{E}_{\signal{s}}[\buyervaluationsellerfunc{}(\signal{s})]}$.
\end{definition}

When $\informedcoeffbuyer = 1$, we say that the buyer is fully informed, as their value is only affected by their own signal. Conversely, when $\informedcoeffbuyer =0$, we say that the buyer is uninformed, as their value is determined solely by the seller's signal. We use similar notation for the seller.





\section{A Fully Informed Seller: The $(1,\informedcoeffbuyer)$ Edge}\label{sec:seller-informed}


In this section, we examine the case where the seller is fully informed and the buyer is $\informedcoeffbuyer$-informed for some $\informedcoeffbuyer \in [0,1]$. In subsection~\ref{sec:edge-seller-informed-algs}, we derive an $O(\frac{1}{\beta})$ approximation ratio for $\informedcoeffbuyer \in (0,1]$ by reducing to the private values case ($\informedcoeffbuyer=\informedcoeffseller=1$). Moreover, when the valuation functions satisfy the single-crossing condition (Definition~\ref{def:single-crossing}), we show a constant approximation is achievable regardless of the buyer's level of informedness (Corollary~\ref{cor:single-crossing-constant-approximation}). We complement these results by showing that if the single-crossing condition does not hold, then no mechanism has an approximation ratio better than $\frac{2}{3\informedcoeffbuyer}$. In particular, for $\informedcoeffbuyer = 0$ (i.e., the seller is fully informed but the buyer is fully uninformed) no constant-factor mechanism exists (Subsection~\ref{sec:impossbility-seller-fully-informed-buyer-un}).

\subsection{Mechanisms for the $(1,\informedcoeffbuyer)$ Edge} \label{sec:edge-seller-informed-algs}

In this section, we show a reduction from the case of a fully informed seller and $\informedcoeffbuyer$-informed buyer to the private values case.


\newcommand{\appratio}{\gamma}
\newcommand{\instansepv}{\mathcal{I_{\text{PV}}}}

\begin{theorem}\label{thm:possibility-seller-informed}
    Let $\mechanism$ be a mechanism for the private values case that is one-sided DSIC for the seller and has an approximation ratio of $\appratio$. Consider an information structure with $\informedcoeffbuyer > 0$ and a fully informed seller ($\informedcoeffseller =1$). Then there exists a mechanism that is BIC for the buyer and DSIC for the seller with an approximation ratio of $\frac{2\appratio}{\informedcoeffbuyer}$. 
\end{theorem}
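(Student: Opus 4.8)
The plan is to transform the given private-values mechanism $\mechanism$ into a mechanism for the $(1,\informedcoeffbuyer)$ instance by having the designer simulate a fictitious private-values world. Since the seller is fully informed ($\informedcoeffseller = 1$), we have $\sellervaluationbuyerfunc \equiv 0$, so the seller's value is $v_s(\signal{b},\signal{s}) = \sellervaluationsellerfunc(\signal{s})$, depending only on $\signal{s}$; thus the seller already behaves exactly as a private-values agent, and there is nothing to simulate on the seller's side. The buyer's value is $v_b(\signal{b},\signal{s}) = \buyervaluationbuyerfunc(\signal{b}) + \buyervaluationsellerfunc(\signal{s})$. The key idea is that, conditioned on trade happening, the buyer's "relevant" value is $\buyervaluationbuyerfunc(\signal{b}) + \E_{\signal{s}}[\buyervaluationsellerfunc(\signal{s}) \mid \text{seller accepts at this price}]$ — but since the seller is DSIC and her behavior depends only on $\signal{s}$, conditioning on the seller's decision only restricts which $\signal{s}$ are relevant, and one can handle the $\buyervaluationsellerfunc(\signal{s})$ term via the Myersonian buyer payment formula (Lemma~\ref{lemma:buyer-payment-formula}), since its contribution to the buyer's incentives is signal-independent for the buyer. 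Concretely, I would feed $\mechanism$ the "virtual" private value $\hat v_b(\signal{b}) := \buyervaluationbuyerfunc(\signal{b}) + \E_{\signal{s}}[\buyervaluationsellerfunc(\signal{s})]$ for the buyer (a deterministic function of $\signal{b}$) and the true value $v_s(\signal{s})$ for the seller, and run $\mechanism$'s allocation rule; then patch the payments to restore exact budget balance and buyer-BIC / interim-IR using the Myersonian formulas.

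The steps, in order: (1) Using $\informedcoeffseller = 1$, reduce to the normalized form $v_s = \sellervaluationsellerfunc(\signal{s})$, $v_b = \buyervaluationbuyerfunc(\signal{b}) + \buyervaluationsellerfunc(\signal{s})$ with $E_b^{\text{self}} := \E[\buyervaluationbuyerfunc(\signal{b})]$, $E_b^{\text{other}} := \E[\buyervaluationsellerfunc(\signal{s})]$, and $\informedcoeffbuyer = E_b^{\text{self}}/(E_b^{\text{self}} + E_b^{\text{other}})$. (2) Define the private-values instance $\instansepv$ with buyer value $\hat v_b(\signal{b}) = \buyervaluationbuyerfunc(\signal{b})$ drawn from the induced distribution and seller value $v_s(\signal{s})$, and run $\mechanism$ on it to get an allocation rule $\hat x$ that is DSIC for the seller and $\appratio$-approximates $\E[\max(\hat v_b(\signal{b}), v_s(\signal{s}))]$. (3) Use $\hat x$ as the allocation rule for the true instance, and define payments so that the seller's payment is exactly $\mechanism$'s seller payment (this keeps the seller DSIC, since her value only depends on $\signal{s}$ and is unchanged), and so that the buyer's interim payment equals the Myersonian formula of Lemma~\ref{lemma:buyer-payment-formula} applied with the true $v_b$; check $p_b = p_s$ is achievable (the extra $\buyervaluationsellerfunc(\signal{s})$ surplus flows entirely to the buyer, so interim-IR for the buyer only improves, and one verifies the budget-balance identity holds because the allocation is unchanged). (4) Bound the welfare: the algorithm's welfare on the true instance is $\E[\hat x \cdot v_b + (1-\hat x) v_s] = \E[\hat x \cdot \buyervaluationbuyerfunc(\signal{b})] + \E[\hat x]\cdot E_b^{\text{other}} + \E[(1-\hat x)v_s] \geq \frac{1}{\appratio}\E[\max(\buyervaluationbuyerfunc(\signal{b}), v_s(\signal{s}))]$, while $\opt = \E[\max(v_b, v_s)] \leq \E[\max(\buyervaluationbuyerfunc(\signal{b}),v_s(\signal{s}))] + E_b^{\text{other}}$. (5) Relate $E_b^{\text{other}}$ to the algorithm's welfare: since $\informedcoeffbuyer = E_b^{\text{self}}/(E_b^{\text{self}}+E_b^{\text{other}})$, we get $E_b^{\text{other}} = \frac{1-\informedcoeffbuyer}{\informedcoeffbuyer}E_b^{\text{self}} \leq \frac{1-\informedcoeffbuyer}{\informedcoeffbuyer}\E[\max(\buyervaluationbuyerfunc(\signal{b}),v_s)] \leq \frac{1-\informedcoeffbuyer}{\informedcoeffbuyer}\appratio\cdot\alg$; combine to get $\opt \leq (\appratio + \frac{(1-\informedcoeffbuyer)\appratio}{\informedcoeffbuyer})\alg = \frac{\appratio}{\informedcoeffbuyer}\alg$, which is even slightly better than the claimed $\frac{2\appratio}{\informedcoeffbuyer}$ — the factor $2$ presumably absorbs the need, in the "always trade" or degenerate sub-cases, to separately account for the $\E[\hat x]\cdot E_b^{\text{other}}$ term, so I would split on whether $\E[\hat x]$ is bounded below and handle the low-trade-probability regime by observing the algorithm captures the seller's value $\E[(1-\hat x)v_s]$ which is then comparable to $\opt$ up to the $\appratio$ factor.

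The main obstacle I anticipate is step (3): verifying that the Myersonian buyer payment obtained from plugging $\hat x$ and the true $v_b$ into Lemma~\ref{lemma:buyer-payment-formula} is (a) consistent with setting $p_b = p_s$ where $p_s$ is $\mechanism$'s seller payment — i.e., that the designer is not forced to inject or burn money — and (b) actually implementable as a BIC-for-buyer mechanism, which requires $\hat x(\signal{b},\signal{s})$ to be non-decreasing in $\signal{b}$ (inherited from $\mechanism$'s monotonicity in the private-values world, since $\buyervaluationbuyerfunc$ is non-decreasing) and the resulting interim utility to be non-negative (interim-IR). The subtlety is that $\mechanism$'s own payment structure in the private-values world need not equal the Myersonian buyer formula for the true interdependent $v_b$ — the $\buyervaluationsellerfunc(\signal{s})$ term shifts things — so I expect to need a lemma showing the difference between "the buyer's Myersonian payment for $v_b$" and "the buyer's Myersonian payment for $\hat v_b = \buyervaluationbuyerfunc(\signal{b}) + E_b^{\text{other}}$" is exactly $\E_{\signal{s}}[(\buyervaluationsellerfunc(\signal{s}) - E_b^{\text{other}})\hat x(\signal{b},\signal{s})]$, which integrates to zero in expectation over $\signal{s}$ only if $\hat x$ does not depend on $\signal{s}$ in a way correlated with $\buyervaluationsellerfunc$ — if it does, this residual must be absorbed into the seller's side without breaking seller-DSIC, which works precisely because the seller's DSIC constraint is in expectation over $\signal{b}$ and the residual has the right sign or can be smoothed. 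I would isolate this as a self-contained claim before assembling the rest.
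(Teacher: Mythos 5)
Your step (3) is where the proposal diverges from the paper and where a genuine gap opens up. You fix the private-values allocation rule $\hat x$ and then try to impose two payment requirements at once: (i) the seller's payment equals $\mechanism$'s seller payment (to preserve seller DSIC) and (ii) the buyer's interim payment equals the Myersonian formula for the true $v_b$. These two constraints on the single budget-balanced price function $p(\signal{b},\signal{s})$ are not automatically compatible, and you acknowledge as much. Your proposed fix---``absorb the residual into the seller's side''---does not work here, because the theorem demands DSIC (not BIC) for the seller: the seller's payment is pinned down \emph{pointwise} in $\signal{b}$, so there is no slack over $\signal{b}$ to absorb the $\E_{\signal{s}}\!\left[\left(\buyervaluationsellerfunc(\signal{s})-E_b^{\text{other}}\right)\hat x(\signal{b},\signal{s})\right]$ residual, which is indeed nonzero in general because $\hat x$ is a take-it-or-leave-it threshold in $\signal{s}$ and hence negatively correlated with $\buyervaluationsellerfunc(\signal{s})$. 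So the mechanism you want in step (3) is not shown to exist.

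The paper avoids this problem entirely by not fixing the allocation. Because $\mechanism$ is one-sided DSIC for the seller, it is a \emph{menu} of take-it-or-leave-it prices $\{p_b(\signal{b}')\}_{\signal{b}'}$; the natural indirect mechanism lets the buyer pick the offer, and budget balance holds trivially since the trade price \emph{is} the offer. The only question is where the buyer's BIC choice lands, and Corollary~\ref{corollary-reducion-pv-offer-is-larger} shows (using only the buyer's BIC in $\instansepv$ plus monotonicity of $\buyervaluationsellerfunc$) that the interdependent buyer's preferred offer $p'_b(\signal{b})$ is weakly higher than $p_b(\signal{b})$, after which the welfare comparison goes through. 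Your steps (4)--(5) use essentially the same bookkeeping as the paper (lower bound $\alg$ by the private-values welfare, upper bound $\opt$ by adding $\E[\buyervaluationsellerfunc]$, then absorb $\E[\buyervaluationsellerfunc]$ via the $\informedcoeffbuyer$-informedness identity), and would yield a slightly tighter constant, but they are computed for an allocation rule whose existence as a budget-balanced, seller-DSIC, buyer-BIC mechanism is exactly the unresolved part. The main missing idea is therefore: exploit the seller-DSIC structure to reduce to a menu of posted prices so that budget balance is free, and prove the ``buyer offers weakly more'' comparison instead of a payment-identity reconciliation.
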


\begin{proof}[Proof of Theorem~\ref{thm:possibility-seller-informed}]
Let $I_o$ be an information structure $\mathcal{I}_{o} = (\buyervaluationbuyerfunc{}, \buyervaluationsellerfunc{}, \sellervaluationsellerfunc{}, \sellervaluationbuyerfunc{})$, where
the buyer is $\informedcoeffbuyer$-informed and the seller is fully informed, i.e., $\sellervaluationbuyerfunc{}(\signal b)=0$ for all $\signal b$, and
$\E_{\signal{s}}[\buyervaluationsellerfunc{}(\signal{s})] = \left(\frac{1-\informedcoeffbuyer}{\informedcoeffbuyer}\right)\cdot\E_{\signal{b}}[\buyervaluationbuyerfunc{}(\signal{b})]$.

We divide the analysis into two. If $\E_{\signal{s}}[v_s(\signal{s})] \geq {\E_{{\signal{s}, \signal{b}}}}[v_b(\signal{b}, \signal{s})]$, not trading the item results in an approximation ratio of at most 2. In the rest of the proof, we consider the second case, where $\E_{\signal{s}}[v_s(\signal{s})] \leq {\E_{{\signal{s}, \signal{b}}}}[v_b(\signal{b}, \signal{s})]$.   

We define the following private values information structure $\instansepv$, where the buyer's value is $v_b(\signal{b}) = \buyervaluationbuyerfunc{}(\signal{b})$ and the seller's value is $v_s(\signal{s}) = \sellervaluationsellerfunc{}(\signal{s})$.
Note that since $\mechanism$ is one-sided DSIC for the seller, for every value of the buyer $\buyervaluationbuyerfunc{}(\signal{b})$, the seller faces a take-it-or-leave-it offer in $\instansepv$.
For every $\signal{b}$, let $p_b(\signal{b})$ be the offer given to the seller in $\instansepv$ according to $\mechanism$. 

We claim that there exists a BIC mechanism $\mechanism_o$ for $\mathcal{I}_o$ such that, for every $\signal{b}$, the seller is faced with a take-it-or-leave-it offer $p'_b(\signal{b})$ that is at least as high as the offer $p_b(\signal{b})$ in the private values information structure $\instansepv$ for the same realization of $\signal{b}$.
Since the seller is fully informed, they accept if and only if their value is at most the offer, regardless of whether we consider the private values information structure or the original one. The value of the buyer in $\mathcal{I}_o$ is at least their value in $\instansepv$, for every $\signal b$. As a result, they may make a different offer $p'_b(\signal{b})$ for the seller. However, this offer can only be larger than $p_b(\signal{b})$, which in turn can only increase the welfare. We thus have:

\begin{corollary}\label{corollary-reducion-pv-offer-is-larger}
    For every $\signal{b}$, it holds that $p'_b(\signal{b}) \geq p_b(\signal{b})$.
\end{corollary}

We turn to proving the approximation guarantee. Since $\mechanism$ provides an approximation ratio of $\appratio$ for $\mathcal I_{PV}$, the welfare of $\mechanism$ is at least the expected value of the buyer in $\instansepv$ divided by the approximation ratio, i.e., $\frac{1}{\appratio}\cdot \E_{\signal{b}}[\buyervaluationbuyerfunc{}(\signal{b})]$. 
By Corollary~\ref{corollary-reducion-pv-offer-is-larger}, the welfare of the resulting mechanism, $\mechanism_o$, for $\mathcal{I}_o$ is at least the welfare of $\mechanism$ for $\instansepv$. Combining everything, we get that the approximation ratio of the resulting mechanism $\mechanism_o$ for $\mathcal{I}_o$ is: 

$$
\frac{OPT}{ALG} \leq \frac{\E_{\signal{s}}[v_s(\signal{s})] + \E_{\signal{b}}[\buyervaluationbuyerfunc{}(\signal{b})] + \E_{\signal{s}}[\buyervaluationsellerfunc{}(\signal{s})]}{\frac{1}{\appratio}\cdot \E_{\signal{b}}[\buyervaluationbuyerfunc{}(\signal{b})]} \leq \frac{2( \E_{\signal{b}}[\buyervaluationbuyerfunc{}(\signal{b})] + \E_{\signal{s}}[\buyervaluationsellerfunc{}(\signal{s})])}{\frac{1}\appratio{}\cdot \E_{\signal{b}}[\buyervaluationbuyerfunc{}(\signal{b})]} \leq \frac{\frac{2}{\informedcoeffbuyer}\cdot\E_{\signal{b}}[\buyervaluationbuyerfunc{}(\signal{b})]}{\frac{1}{\appratio}\cdot \E_{\signal{b}}[\buyervaluationbuyerfunc{}(\signal{b})]} = \frac{2\appratio}{\informedcoeffbuyer}
$$
where the second inequality follows because in this case we assume that $\E_{\signal{s}}[v_s(\signal{s})] \leq \E_{\signal{s}, \signal{b}}[v_b(\signal{b} \signal{s})]$.

We conclude the proof of the theorem by providing the proof of Corollary~\ref{corollary-reducion-pv-offer-is-larger}.

\begin{proof}[Proof of Corollary~\ref{corollary-reducion-pv-offer-is-larger}]    
Fix a realization of $\signal{b}$. Since $\mechanism$ is BIC for the buyer in the private values case, the buyer's profit from an offer  $p_b(\signal{b})$ to the seller is at least as large as their profit from an offer $p_b(\signal{b'})$ to the seller. That is,
\begin{equation}\label{ineq:higher-val-higher-price}
\Pr[v_s(\signal{s})\leq p_b(\signal{b})]\cdot\left(\buyervaluationbuyerfunc{}(\signal{b}) - p_b(\signal{b})\right) \geq \Pr[v_s(\signal{s})\leq p_b(\signal{b'})]\cdot\left(\buyervaluationbuyerfunc{}(\signal{b}) - p_b(\signal{b'})\right) .
\end{equation}
Now, if in $\mathcal{I}_o$, the buyer has a higher expected profit from an offer $p_b(\signal{b'})$ to the seller than from an offer $p_b(\signal{b})$, then we have:
\begin{multline*}
\Pr[v_s(\signal{s})\leq p_b(\signal{b})]\cdot\left(\buyervaluationbuyerfunc{}(\signal{b}) + \E_{\signal{s}}[\buyervaluationsellerfunc{}(\signal{s})| \, v_s(\signal{s}) \leq p_b(\signal{b})] - p_b(\signal{b})\right) \leq \\
\Pr[v_s(\signal{s})\leq p_b(\signal{b'})]\cdot\left(\buyervaluationbuyerfunc{}(\signal{b})  + \E_{\signal{s}}[\buyervaluationsellerfunc{}(\signal{s})| \, v_s(\signal{s}) \leq p_b(\signal{b'})] - p_b(\signal{b'})\right).
\end{multline*}
By Inequality~\ref{ineq:higher-val-higher-price}, it follows that
$$
\Pr[v_s(\signal{s})\leq p_b(\signal{b})]\cdot\E_{\signal{s}}[\buyervaluationsellerfunc{}(\signal{s})| \, v_s(\signal{s}) \leq p_b(\signal{b})] \leq \Pr[v_s(\signal{s})\leq p_b(\signal{b'})]\cdot\E_{\signal{s}}[\buyervaluationsellerfunc{}(\signal{s})| \, v_s(\signal{s}) \leq p_b(\signal{b'})].
$$
This is equivalent to
$$
\E_{\signal{s}}[\buyervaluationsellerfunc{}(\signal{s}) \cdot \indicator_{[v_s(\signal{s})\leq p_b(\signal{b})]}] \leq \E_{\signal{s}}[\buyervaluationsellerfunc{}(\signal{s}) \cdot \indicator_{[v_s(\signal{s})\leq p_b(\signal{b'})]}]
$$
which is only possible if $p_b(\signal{b}) \leq p_b(\signal{b'})$. Hence in $\mathcal{I}_o$, the buyer can only prefer higher prices than those in $\instansepv$, implying that $p_b'(\signal{b})\geq p_b(\signal{b})$.
\end{proof}
\end{proof}

\begin{corollary}\label{cor:single-crossing-constant-approximation}
    If the seller is fully informed, and the valuation functions satisfy the single-crossing condition, there exists a mechanism with a constant approximation ratio.
\end{corollary}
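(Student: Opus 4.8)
The plan is to upgrade the $O(1/\informedcoeffbuyer)$ guarantee of Theorem~\ref{thm:possibility-seller-informed} to an absolute constant by combining two observations: first, when the seller's expected value already accounts for a constant fraction of the total expected value, \emph{never trading} is a good mechanism; second, when it does not, single-crossing forces the buyer's ``own'' signal---after a harmless re-decomposition of $v_b$---to be responsible for more than half of the buyer's expected value, so the informedness parameter $\tilde\informedcoeffbuyer$ we feed into Theorem~\ref{thm:possibility-seller-informed} is bounded below by $1/2$, turning the $2\appratio/\tilde\informedcoeffbuyer$ bound into a constant.

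\emph{Step 1 (easy regime).} Since the information structure is public, the designer can compute $\E_{\signal{s}}[\sellervaluationsellerfunc(\signal{s})]$---which is exactly the welfare of never trading, because a fully informed seller has $v_s = \sellervaluationsellerfunc(\signal{s})$---as well as $D := \E_{\signal{b},\signal{s}}[v_b(\signal{b},\signal{s})] = \E_{\signal{b}}[\buyervaluationbuyerfunc(\signal{b})] + \E_{\signal{s}}[\buyervaluationsellerfunc(\signal{s})]$. If $\E_{\signal{s}}[\sellervaluationsellerfunc(\signal{s})] \geq D/2$, the mechanism never trades; this is trivially BIC, interim IR and strongly budget-balanced, and since $\opt \leq \E[v_s] + \E[v_b] = \E_{\signal{s}}[\sellervaluationsellerfunc(\signal{s})] + D \leq 3\,\E_{\signal{s}}[\sellervaluationsellerfunc(\signal{s})]$, it is a $3$-approximation.

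\emph{Step 2 (hard regime, via single-crossing).} Now assume $\E_{\signal{s}}[\sellervaluationsellerfunc(\signal{s})] < D/2$. Since $\sellervaluationbuyerfunc \equiv 0$, the first single-crossing inequality of Definition~\ref{def:single-crossing} is vacuous, while the second becomes $\frac{\partial v_s(\signal{b},\signal{s})}{\partial \signal{s}} \geq \frac{\partial v_b(\signal{b},\signal{s})}{\partial \signal{s}}$; integrating in $\signal{s}$ from $0$ and taking expectations gives $\E[\buyervaluationsellerfunc(\signal{s})] - \buyervaluationsellerfunc(0) \leq \E[\sellervaluationsellerfunc(\signal{s})] - \sellervaluationsellerfunc(0) \leq \E[\sellervaluationsellerfunc(\signal{s})] < D/2$. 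I would then re-decompose the buyer's valuation as $v_b(\signal{b},\signal{s}) = \bigl(\buyervaluationbuyerfunc(\signal{b}) + \buyervaluationsellerfunc(0)\bigr) + \bigl(\buyervaluationsellerfunc(\signal{s}) - \buyervaluationsellerfunc(0)\bigr)$, a legal decomposition of the same information structure with both summands still monotone. Writing $N := \E[\buyervaluationbuyerfunc(\signal{b})] + \buyervaluationsellerfunc(0)$, we get $D - N = \E[\buyervaluationsellerfunc(\signal{s})] - \buyervaluationsellerfunc(0) < D/2$, hence $N > D/2$; thus under this decomposition the buyer is $\tilde\informedcoeffbuyer$-informed with $\tilde\informedcoeffbuyer = N/D > 1/2$.

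\emph{Step 3 (invoke the reduction).} Finally I would apply Theorem~\ref{thm:possibility-seller-informed} to the re-decomposed instance, taking as the base private-values mechanism any constant-factor mechanism that is one-sided DSIC for the seller---e.g.\ a posted-price mechanism \cite{BD14,CKLT16,BD21,KPV21,LRW23,CW23} or the buyer-offering mechanism \cite{DS24}, with approximation factor $\appratio = O(1)$. Because $\opt$ depends only on $v_b,v_s$ and is unchanged by the re-decomposition, the theorem yields a mechanism that is BIC for the buyer, DSIC for the seller and budget-balanced, with ratio $2\appratio/\tilde\informedcoeffbuyer < 4\appratio = O(1)$; choosing the appropriate branch in each regime gives an absolute constant $\max\{3,4\appratio\}$. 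I do not expect a genuine obstacle here: the one point to check carefully is that re-decomposing $v_b$ preserves the hypotheses of Theorem~\ref{thm:possibility-seller-informed}---a fully informed seller, a strictly positive buyer-informedness, and the same optimal welfare---which is exactly the freedom highlighted by the remark on non-unique decompositions in the introduction.
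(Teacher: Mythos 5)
Your proof is correct and follows essentially the same route as the paper's: both re-decompose the buyer's valuation by shifting the constant $\buyervaluationsellerfunc{}(0)$ from $\buyervaluationsellerfunc{}$ into $\buyervaluationbuyerfunc{}$, and both deduce from single-crossing (with $v_s(0)\geq 0$) that $\E_{\signal{s}}[\buyervaluationsellerfunc{}'(\signal{s})]\le\E_{\signal{s}}[v_s(\signal{s})]$ before combining no-trade with Theorem~\ref{thm:possibility-seller-informed}. The only difference is cosmetic---you do an explicit case split on whether $\E[v_s]\ge D/2$, which pins $\tilde\informedcoeffbuyer>1/2$, whereas the paper keeps $\informedcoeffbuyer'$ as a parameter and takes $\min\{1+\tfrac{1}{1-\informedcoeffbuyer'},\tfrac{4}{\informedcoeffbuyer'}\}\le 5.5$, yielding a slightly tighter constant than your $\max\{3,4\appratio\}$.
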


\newcommand{\crossingvalue}{t}

\begin{proof}[Proof of Corollary~\ref{cor:single-crossing-constant-approximation}]
    Under the single-crossing condition, if there exists a point $t$ where $v_s(t) \geq \buyervaluationsellerfunc{}(t)$, then for every $\signal{s}\geq t$, it holds that $v_s(\signal{s}) \geq \buyervaluationsellerfunc{}(\signal{s})$.
    If $\buyervaluationsellerfunc{}(0) > 0$, we normalize the information structure by defining:  
    $$\buyervaluationsellerfunc{}'(\signal{s}) = \buyervaluationsellerfunc{}(\signal{s}) - \buyervaluationsellerfunc{}(0), \qquad \buyervaluationbuyerfunc{}{}'(\signal{b}) = \buyervaluationbuyerfunc{}(\signal{b}) + \buyervaluationsellerfunc{}(0), \qquad \informedcoeffbuyer' = \frac{\E_{\signal{b}}[\buyervaluationbuyerfunc{}'(\signal{b})]}{\E_{\signal{b}}[\buyervaluationbuyerfunc{}'(\signal{b})] + \E_{\signal{s}}[\buyervaluationsellerfunc{}'(\signal{s})]}.$$
    Since the single-crossing condition holds for $v_s$ and $\buyervaluationsellerfunc{}$, it also holds for $v_s$ and $\buyervaluationsellerfunc{}'$.
    Moreover, since $v_s(0)\geq 0$ and $\buyervaluationsellerfunc{}'(0)=0$, it follows that $v_s(\signal{s}) \geq \buyervaluationsellerfunc{}'(\signal{s})$ for every $\signal{s} \geq 0$.
    We now claim that either not trading the item results in a good approximation or applying the mechanism from Theorem~\ref{thm:possibility-seller-informed} does.
    The approximation ratio when not trading the item is at most:
    $$
    \frac{OPT}{\E_{\signal{s}}[v_s(\signal{s})]} \leq \frac{\E_{\signal{s}}[v_s(\signal{s})]+ \E_{\signal{b}}[v_b(\signal{b})]}{\E_{\signal{s}}[v_s(\signal{s})]} = 1+ \frac{{\E_{\signal{b}}[\buyervaluationbuyerfunc{}'(\signal{b})] + \E_{\signal{s}}[\buyervaluationsellerfunc{}'(\signal{s})]}}{\E_{\signal{s}}[v_s(\signal{s})]} = 1+ \frac{\frac{1}{1-\informedcoeffbuyer'}\cdot\E_{\signal{s}}[\buyervaluationsellerfunc{}'(\signal{s})]}{\E_{\signal{s}}[v_s(\signal{s})]} \leq 1+ \frac{1}{1-\informedcoeffbuyer'}
    $$
    while the approximation ratio obtained from applying Theorem~\ref{thm:possibility-seller-informed} to the normalized information structure is $\frac{2\appratio}{\informedcoeffbuyer'}$, where $\appratio$ is the approximation ratio of a mechanism for the private values case. For instance, consider applying a posted price mechanism with an approximation ratio of $2$ (\cite{BD14}). The approximation guarantee follows since $\min\{1+ \frac{1}{1-\informedcoeffbuyer'}, \frac{4}{\informedcoeffbuyer'}\} \leq 5.5$.
\end{proof}

\subsection{An Impossibility for the $(1,\informedcoeffbuyer)$ Edge}\label{sec:impossbility-seller-fully-informed-buyer-un}

We now provide an impossibility result for this edge, complementing the approximation ratio obtained in Section~\ref{sec:edge-seller-informed-algs}. The formal statement of this impossibility is stated in the second part of Theorem~\ref{thm:uninformed-buyer-impossibility} and follows from the the first part. The first part is also used to obtain Theorem~\ref{thm:impossibility-alpha-0-edge}.

\begin{theorem}\label{thm:uninformed-buyer-impossibility}The following two statements hold:
\begin{enumerate}
\item For every $k, m \in \mathbb{N}$, there exists an information structure where the seller is $\informedcoeffseller$-informed and the buyer is $\informedcoeffbuyer$-informed such that $\informedcoeffseller > 0, \informedcoeffbuyer < 1, k\geq 4, m\geq 5$, and no BIC and interim IR mechanism can have an approximation ratio better than:
    \begin{equation*}
\frac{\informceoffbuyer+1}{\frac{1}{k}(1+ \informceoff)+ \informceoffbuyer  +\frac{4}{m}+\informceoffbuyer\cdot \frac{2}{k}+\frac{2}{k^2}\cdot\informceoff},
    \end{equation*}
    where $\informceoffbuyer =\frac{\informedcoeffbuyer}{1-\informedcoeffbuyer}$ and $ \sellerratioinformed = \frac{1-\informedcoeffseller}{\informedcoeffseller}$.
\item As a corollary. For every $\informedcoeffbuyer \in (0,1)$, there exists an information structure where the seller is fully informed and the buyer is $\informedcoeffbuyer$-informed, and no BIC and interim IR mechanism can provide an approximation ratio better than $\frac{2}{3\informedcoeffbuyer}$. Moreover, for $\informedcoeffbuyer =0$, for every $c>1$, there exists an information structure where the seller is fully informed and the buyer is uninformed and no BIC and interim IR mechanism can attain an approximation ratio better than $c$.
\end{enumerate}
\end{theorem}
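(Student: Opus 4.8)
I would prove the first, quantitative statement by exhibiting, for every $k,m$, a single step-function information structure that is $\informedcoeffseller$-informed on the seller side and $\informedcoeffbuyer$-informed on the buyer side for suitable parameters and on which no BIC, interim IR, strongly budget-balanced mechanism does well; the second statement then follows by specializing to $\informedcoeffseller=1$ (so $\informceoff=0$) and tuning $k,m$ as functions of $\informedcoeffbuyer$ (or letting $k,m\to\infty$ when $\informedcoeffbuyer=0$). The lower bound on $OPT$ is the easy half: since $\max\{v_b,v_s\}\ge v_b$ pointwise, $OPT\ge\E[v_b]=\E_{\signal{b}}[\buyervaluationbuyerfunc(\signal{b})]+\E_{\signal{s}}[\buyervaluationsellerfunc(\signal{s})]$, which under the normalization below is exactly the numerator $\informceoffbuyer+1$.

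\noindent\textbf{The instance.} The idea is to concentrate \emph{both} cross-dependences on a thin top slice $\signal{s}\in[1-\delta,1]$ of the seller's signal, with width $\delta$ and heights calibrated in terms of $k$ and $m$: let $\sellervaluationsellerfunc$ and $\buyervaluationsellerfunc$ vanish on $[0,1-\delta)$ and jump at $1-\delta$, $\sellervaluationsellerfunc$ to a moderate height $H$ and $\buyervaluationsellerfunc$ to a large height $H'$, chosen so that $H<H'<H/\delta$, $\E_{\signal{s}}[\buyervaluationsellerfunc(\signal{s})]=1$ and $\E_{\signal{s}}[\sellervaluationsellerfunc(\signal{s})]=\tfrac1k$; and let $\buyervaluationbuyerfunc$, $\sellervaluationbuyerfunc$ be simple step functions with $\E_{\signal{b}}[\buyervaluationbuyerfunc(\signal{b})]=\informceoffbuyer$ and $\E_{\signal{b}}[\sellervaluationbuyerfunc(\signal{b})]=\informceoff/k$. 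All four functions are monotone, they realize the target $(\informedcoeffseller,\informedcoeffbuyer)$, their decomposition is unique (as the introduction requires), and because $\buyervaluationsellerfunc$ jumps strictly more than $\sellervaluationsellerfunc$ at $1-\delta$ the single-crossing condition of Definition~\ref{def:single-crossing} fails here --- which is precisely why the positive result of Corollary~\ref{cor:single-crossing-constant-approximation} cannot be invoked.

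\noindent\textbf{Bounding every mechanism.} Fix a BIC, interim IR, strongly budget-balanced $(x,p)$ and set $\bar x(\signal{s})=\E_{\signal{b}}[x(\signal{b},\signal{s})]$. Then $ALG=\E[v_s]+\E[x(v_b-v_s)]\le\E[v_s]+\E[x\,\buyervaluationbuyerfunc(\signal{b})]+\E[x\,\buyervaluationsellerfunc(\signal{s})]$; the first term is $\tfrac{1+\informceoff}{k}$, the second is $\le\informceoffbuyer$ trivially, so the whole task is to bound the ``slice term'' $\E[x\,\buyervaluationsellerfunc(\signal{s})]=H'\int_{1-\delta}^1\bar x(\signal{s})\,d\signal{s}$. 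The key point --- and the reason interim IR alone is too weak, since by itself it permits heavy trade on the slice --- is that seller-side BIC makes $\bar x$ non-increasing, so that the Myersonian seller payment formula (Lemma~\ref{lemma:seller-payment-formula}), anchored at the highest seller type and fed the Dirac mass of $\sellervaluationsellerfunc'$ at $1-\delta$, forces the seller's expected payment to be at least of order $\tfrac H\delta\int_{1-\delta}^1\bar x(\signal{s})\,d\signal{s}$, the factor $1/\delta$ being the seller's information rent for being able to mimic the jump type. On the other hand strong budget balance equates the buyer's and seller's expected payments, and buyer interim IR gives $\E[p]\le\E[x v_b]\le\informceoffbuyer+H'\int_{1-\delta}^1\bar x(\signal{s})\,d\signal{s}$. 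Comparing the two estimates and using $H'<H/\delta$ forces $\int_{1-\delta}^1\bar x(\signal{s})\,d\signal{s}$ down to an $O(1/m)$-fraction of full, so the slice term is small; substituting back bounds $ALG$ by the stated denominator, the residual slack producing the $\tfrac4m$, $\tfrac{2\informceoffbuyer}{k}$ and $\tfrac{2\informceoff}{k^2}$ corrections (which also absorb the effect of the buyer being only $\informedcoeffbuyer$-informed, where I would additionally invoke Lemma~\ref{lemma:buyer-payment-formula}, and of the $\sellervaluationbuyerfunc$ summand of $v_s$). Dividing the $OPT$ bound by this yields part~1. For part~2, $\informedcoeffseller=1$ makes $\informceoff=0$, and with $\informceoffbuyer=\tfrac{\informedcoeffbuyer}{1-\informedcoeffbuyer}$ one chooses $k,m=\Theta(1/\informedcoeffbuyer)$ large enough (in particular $k\ge4$, $m\ge5$) that the denominator is at most $\tfrac32\informceoffbuyer$, giving ratio at least $\tfrac{2}{3\informedcoeffbuyer}$; at $\informedcoeffbuyer=0$ the ratio is $1/(\tfrac1k+\tfrac4m)$, which is unbounded as $k,m\to\infty$.

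\noindent\textbf{Main obstacle.} The delicate step is this mechanism upper bound: turning the informal statement ``a seller who accepts reveals that their signal is low, so the buyer expects almost nothing'' into a clean inequality by correctly chaining the Myersonian payment formula with seller-side monotonicity, strong budget balance and interim IR, and then tracking the constants carefully enough --- including the partially-informed-buyer and $\sellervaluationbuyerfunc$ contributions --- to recover the precise stated expression rather than merely an asymptotic bound. Calibrating $\delta,H,H'$ against $k,m$ so that $H<H'<H/\delta$ while exactly realizing $(\informedcoeffseller,\informedcoeffbuyer)$ and keeping the decomposition unique is fiddly, but routine.
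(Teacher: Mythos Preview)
Your approach is genuinely different from the paper's, and if carried out carefully it actually proves something \emph{stronger} than Part~1 --- but your account of where the $\tfrac4m$, $\tfrac{2\informceoffbuyer}{k}$, $\tfrac{2\informceoff}{k^2}$ terms come from is wrong.

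The paper does not use a single jump. It builds an $m$-level equal-revenue-like instance: the seller's signal is supported on $\{0,1,\ldots,m\}$ with $\Pr[\signal{s}=i]=k^{-i}$, $\buyervaluationsellerfunc(\signal{s})=k^{\signal{s}}$, $\sellervaluationsellerfunc(\signal{s})=k^{\signal{s}-1}$, and $\buyervaluationbuyerfunc,\sellervaluationbuyerfunc$ are small constants tuned to hit $(\informedcoeffseller,\informedcoeffbuyer)$. The core lemma telescopes the seller payment identity across all $m$ levels to show $\sum_{\signal{s}=1}^{m}\E_{\signal{b}}[x(\signal{b},\signal{s})]\le 4+\tfrac{2}{k}(\informceoffbuyer\mu+\E[\sellervaluationbuyerfunc])$; the $\tfrac4m$ and the $\tfrac{2\informceoffbuyer}{k},\tfrac{2\informceoff}{k^2}$ corrections in the denominator are exactly this bound divided by $m$. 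The parameter $m$ is intrinsic to the construction.

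Your single-jump instance has no $m$ in it. Run your chain cleanly (it is easiest with two discrete seller types ``low'' and ``high'' of mass $1-\delta,\delta$): seller BIC for the low type plus seller IR for the high type give $E[p]\ge H\bar x_{H}+E[x\,\sellervaluationbuyerfunc]$, while buyer ex-ante IR gives $E[p]\le E[x\,\buyervaluationbuyerfunc]+H'\delta\,\bar x_{H}$; subtracting and using $H-H'\delta=\tfrac{1-k\delta}{k\delta}>0$ yields $\bar x_{H}\le\tfrac{k\delta}{1-k\delta}\,\informceoffbuyer$. The resulting denominator is $\tfrac{1+\informceoff}{k}+\informceoffbuyer\cdot\tfrac{1-\delta}{1-k\delta}$, which as $\delta\to0$ is strictly smaller than the paper's denominator for every $m$. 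In particular at $\informceoffbuyer=0$ your argument forces $\bar x_{H}=0$ exactly --- no $1/m$ slack appears anywhere. So your sentence about the slice integral being ``an $O(1/m)$-fraction of full'' and the extra terms being ``residual slack'' is simply incorrect: none of those terms arise from your construction. The honest way to finish is to observe that for $\delta$ small enough (in terms of $k,m,\informceoff,\informceoffbuyer$) your denominator is at most the paper's, so Part~1 follows a fortiori; in fact you have proved a sharper statement with no $m$.

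Two technical nits. First, ``anchored at the highest seller type'' is backwards: Lemma~\ref{lemma:seller-payment-formula} anchors at $\signal{s}=0$, and the step you need is seller interim IR at $\signal{s}=1$ to lower-bound the low-type utility. Second, feeding a literal Dirac into the continuous Myerson formula requires care about which side of the jump $\bar x$ is evaluated on; the two-type discrete formulation avoids this entirely and is what I would use.
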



The first part of the theorem implies the second one. For $\informedcoeffbuyer \in (0,1)$, let $\informedcoeffseller=1, \, m=4k, \,  k=\lceil\frac{4}{\informedcoeffbuyer}\rceil$. Then for the information structure $\mathcal{I}_{k,m, \informedcoeffseller, \informedcoeffbuyer}$ there is no BIC and interim IR mechanism with an approximation ratio better than:
\begin{align*}
\frac{\informceoffbuyer+1}{\frac{1}{k}(1+ \informceoff)+ \informceoffbuyer  +\frac{4}{m}+\informceoffbuyer\cdot \frac{2}{k}+\frac{2\informceoff}{k^2}} 
= \frac{\frac{1}{1-\informedcoeffbuyer}}{\frac{1}{k}+ \informceoffbuyer  +\frac{1}{k}+\informceoffbuyer\cdot \frac{2}{k}} = \frac{\frac{1}{1-\informedcoeffbuyer}}{\frac{2}{k}(\frac{1}{1-\informedcoeffbuyer})  +\frac{\informedcoeffbuyer}{1-\informedcoeffbuyer}} \geq \frac{\frac{1}{1-\informedcoeffbuyer}}{\frac{\informedcoeffbuyer}{1-\informedcoeffbuyer}\cdot(\frac{3}{2})}=\frac{2}{3\informedcoeffbuyer}.
\end{align*}

For $\informedcoeffbuyer =0$, we get $\informceoffbuyer = 0$,  for every $c\geq 2$,  let $\informedcoeffseller=1, \, k = \lceil 2\cdot c \rceil, \, m=4k^2$. Now, the approximation ratio of every BIC and interim IR mechanism for $\mathcal{I}_{k,m,\informedcoeffseller, \informedcoeffbuyer}$ is at least:

\begin{align*}
\frac{\informceoffbuyer+1}{\frac{1}{k}(1+ \informceoff)+ \informceoffbuyer  +\frac{4}{m}+\informceoffbuyer\cdot \frac{2}{k}+\frac{2\informceoff}{k^2}} = \frac{1}{\frac{1}{k}+ \frac{4}{m}} = \frac{1}{\frac{2}{k}} \geq c.
\end{align*}


We defer the proof of the first part of Theorem~\ref{thm:uninformed-buyer-impossibility} to Appendix~\ref{sec:app-uninformed-buyer}.

\section{A Fully Informed Buyer: The $(\informedcoeffseller,1)$ Edge} \label{sec:buyer-informed}

This section considers a fully informed buyer, i.e., $v_b(\signal{b}, \signal{s}) = \buyervaluationbuyerfunc{}(\signal{b})$, and a partially-informed seller. When the seller is fully informed ($\informedcoeffseller = 1$), we are at the extreme point $(1,1)$, which aligns with the private values setting where posted price mechanisms provide a constant approximation ratio. The main result in this section shows that the $(1,1)$ extreme point is unique among all points on the edge $(\informedcoeffseller,1)$: in no other point can we always achieve a constant approximation to welfare.

The formal statement of this impossibility is stated in the second part of Theorem \ref{thm:buyer-informed-inpossibility} and is an almost immediate corollary of the first part. The first part is also used in order to obtain Theorem \ref{thm:imposs-uninformed-seller}.


\begin{theorem}\label{thm:buyer-informed-inpossibility}The following two statements hold:
\begin{enumerate}
    \item For every $k, m \in \mathbb{N}$, there exists an information structure where  the buyer is $\informedcoeffbuyer$-informed and the seller is $\informedcoeffseller$-informed such that $ \informedcoeffseller < 1, \informedcoeffbuyer > 0, k\geq 2, m\geq 3$, and no BIC and interim IR mechanism has an approximation ratio better than:
    \begin{equation*}
        \frac{\frac{1}{\informedcoeffbuyer}}{\frac{1}{k}\frac{1}{1-\informedcoeffseller} + \frac{1-\informedcoeffbuyer}{\informedcoeffbuyer} + 2k\frac{1}{m}+2k\frac{1-\informedcoeffbuyer}{\informedcoeffbuyer}}
    \end{equation*}
    \item As a corollary, for every $0\leq \informedcoeffseller < 1$ and $c>1$, there exists an information structure where the buyer is fully informed and the seller is $\informedcoeffseller$-informed, where no BIC and interim IR mechanism provides an approximation ratio better than $c$. Moreover, this information structure satisfies the single-crossing condition.
\end{enumerate}
\end{theorem}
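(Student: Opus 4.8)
First I would reduce the corollary (part~2) to part~1. Plugging in $\informedcoeffbuyer=1$ makes $\tfrac{1-\informedcoeffbuyer}{\informedcoeffbuyer}=0$, so the bound in part~1 collapses to $\bigl(\tfrac{1}{k}\tfrac{1}{1-\informedcoeffseller}+\tfrac{2k}{m}\bigr)^{-1}$; for any fixed $\informedcoeffseller<1$ this exceeds a given $c>1$ once $k$ is large enough and then $m$ is taken much larger than $k$. It then only remains to verify that the instance produced by part~1 with $\informedcoeffbuyer=1$ is single-crossing: a fully informed buyer has $\buyervaluationsellerfunc\equiv 0$, so $\sellervaluationsellerfunc' \ge \buyervaluationsellerfunc' = 0$ holds trivially, and $\buyervaluationbuyerfunc' \ge \sellervaluationbuyerfunc'$ holds because the construction will take $\sellervaluationbuyerfunc$ to be a $\tfrac1k$-multiple of $\buyervaluationbuyerfunc$.

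For part~1, fix an arbitrary BIC, interim IR mechanism $(x,p)$. Since the buyer is informed about their own signal, and likewise the seller, the Myersonian characterizations (Lemmas~\ref{lemma:buyer-payment-formula} and~\ref{lemma:seller-payment-formula}) apply, pinning down both interim payment schedules in terms of $x$ up to the binding boundary utilities $U_b(0)$ and $U_s(1)$. Writing $\E[p]$ from the buyer's formula and from the seller's formula, equating the two since $p=p_b=p_s$ (budget balance), and using interim IR — which binds at $\signal b=0$ for the buyer and at $\signal s=1$ for the seller (the highest-value, lowest-utility type) — I obtain the single necessary inequality
\[
\E_{\signal b,\signal s}\bigl[x(\signal b,\signal s)\,\Phi(\signal b,\signal s)\bigr]\ \ge\ 0,\qquad
\Phi \;=\; v_b - v_s - (1-\signal b)\,\tfrac{\partial v_b}{\partial \signal b} - \signal s\,\tfrac{\partial v_s}{\partial \signal s},
\]
the ``interdependent virtual surplus'' of trading. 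Since dropping the monotonicity of $x$ only relaxes this program, the welfare of \emph{any} BIC, interim IR, budget-balanced mechanism is at most $\E[v_s]+\mathrm{LP}$ with $\mathrm{LP}=\sup\{\E[x(v_b-v_s)] : 0\le x\le 1,\ \E[x\Phi]\ge 0\}$, whereas $\mathrm{OPT}=\E[\max(v_b,v_s)]\ge \E[v_b]$. Thus the approximation ratio is at least $\E[v_b]/(\E[v_s]+\mathrm{LP})$, and the job is to construct an instance making this quotient as large as the stated bound.

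The instance is engineered so that $\Phi<0$ (while $v_b-v_s$ is large) over the bulk of $[0,1]^2$, which forces the LP to keep $x\approx 0$ there. Normalize $\E_{\signal b}[\buyervaluationbuyerfunc]=1$; take $\buyervaluationbuyerfunc$ ``equal-revenue-like'', shaped so that $\buyervaluationbuyerfunc(\signal b)-(1-\signal b)\buyervaluationbuyerfunc'(\signal b)\le 0$ except on a thin region of measure $\Theta(1/m)$ near $\signal b=1$ on which $\buyervaluationbuyerfunc$ is essentially constant; take $\sellervaluationbuyerfunc=\tfrac1k\buyervaluationbuyerfunc$, which is the information leak (a buyer willing to trade reveals a high $\signal b$, which inflates the seller's own value) and makes $\E[\sellervaluationbuyerfunc]=\tfrac1k$ and single-crossing automatic; take $\sellervaluationsellerfunc$ linear with $\E[\sellervaluationsellerfunc]=\tfrac{\informedcoeffseller}{k(1-\informedcoeffseller)}$, so that $\E[v_s]=\tfrac{1}{k(1-\informedcoeffseller)}$ and the seller is exactly $\informedcoeffseller$-informed; and, when $\informedcoeffbuyer<1$, add a $\buyervaluationsellerfunc$ term with $\E[\buyervaluationsellerfunc]=\tfrac{1-\informedcoeffbuyer}{\informedcoeffbuyer}$, so the buyer is $\informedcoeffbuyer$-informed and $\E[v_b]=\tfrac1{\informedcoeffbuyer}$. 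On the equal-revenue bulk $\Phi$ reduces to $-\tfrac1k\buyervaluationbuyerfunc(\signal b)+\buyervaluationsellerfunc(\signal s)-\sellervaluationsellerfunc(\signal s)-\signal s\,\sellervaluationsellerfunc'(\signal s)$, which is negative there; hence $\{\Phi>0\}$ consists only of the region near $\signal b=1$ (where $\buyervaluationbuyerfunc$ is capped so $\buyervaluationbuyerfunc'=0$) and, when $\informedcoeffbuyer<1$, a region near $\signal b=0$ (where $\buyervaluationbuyerfunc(0)=0$ but $\buyervaluationsellerfunc(\signal s)$ is comparatively large). Setting $x=1$ on these two regions contributes the $\tfrac{1-\informedcoeffbuyer}{\informedcoeffbuyer}$ term (plus a negligible $O(1/m)$ from the top) to the objective, while ``water-filling'' the small virtual budget these regions create back into the equal-revenue bulk, where the gain-to-cost ratio $(v_b-v_s)/(-\Phi)$ is $\Theta(k)$, contributes the $\tfrac{2k}{m}$ and $2k\tfrac{1-\informedcoeffbuyer}{\informedcoeffbuyer}$ terms; together with the unavoidable $\E[v_s]=\tfrac{1}{k(1-\informedcoeffseller)}$ this reproduces the denominator of the claimed bound.

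The hardest part will be the exact evaluation of $\mathrm{LP}$ hand in hand with the choice of the four functions: I have to show that the optimum is indeed this water-filling solution, that for the chosen pieces of $\buyervaluationbuyerfunc$ and $\sellervaluationsellerfunc$ the gain-to-cost ratio on $\{\Phi\le 0\}$ is \emph{uniformly} $O(k)$ (so the LP cannot accumulate a large objective by exploiting the thin strip along $\{\Phi=0\}$), and that the virtual budget available from $\{\Phi>0\}$ is as small as asserted; pinning the constants down so that the denominator comes out exactly as written is what forces the precise shape of $\buyervaluationbuyerfunc$ and the exact values of $\E[\sellervaluationbuyerfunc],\E[\sellervaluationsellerfunc],\E[\buyervaluationsellerfunc]$. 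Secondary items are to check that all four functions are non-decreasing, that this instance has a \emph{unique} additively separable decomposition (so the realized $(\informedcoeffseller,\informedcoeffbuyer)$ are the intended ones, as flagged in the introduction), and that the side conditions $k\ge 2$, $m\ge 3$, $\informedcoeffseller<1$, $\informedcoeffbuyer>0$ are satisfied.
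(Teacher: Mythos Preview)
Your plan is correct and lands on the same construction the paper uses: an equal-revenue-like $\buyervaluationbuyerfunc$ (so the buyer's virtual value vanishes on the bulk), $\sellervaluationbuyerfunc=\tfrac{1}{k}\buyervaluationbuyerfunc$, and constants chosen to hit the target $(\informedcoeffseller,\informedcoeffbuyer)$; your reduction of part~2 to part~1 and the single-crossing check are likewise the same. The analytical packaging differs. You derive the two-sided Myerson--Satterthwaite necessary condition $\E[x\Phi]\ge 0$ with $\Phi=v_b-v_s-(1-\signal b)\partial_{\signal b}v_b-\signal s\,\partial_{\signal s}v_s$ and then bound $\mathrm{ALG}\le \E[v_s]+\mathrm{LP}$ via a water-filling argument on the relaxed LP. The paper instead works on a \emph{discrete} instance (signals in $\{0,1,\dots,m\}$ with $\Pr[\signal b=i]=k^{-i}$, $\buyervaluationbuyerfunc(i)=k^i$), takes $\sellervaluationsellerfunc$ to be a \emph{constant} rather than linear, uses only the buyer's payment identity together with the seller's interim IR, and proves directly that $\sum_{\signal b=1}^m \E_{\signal s}[x(\signal b,\signal s)]\le 2k(1+\E[\buyervaluationsellerfunc])$ (their Lemma~\ref{lemma:buyer-has-signal-impossibility-small-alloc}); plugging this into $\mathrm{ALG}=\E[v_s]+\E[x(v_b-v_s)]$ gives the denominator verbatim. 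Because $\sellervaluationsellerfunc$ is constant in their instance, the seller-side term $\signal s\,\partial_{\signal s}v_s$ in your $\Phi$ drops out and your LP inequality collapses to exactly their inequality, so the two routes are equivalent here; your framework is more general (it would accommodate a nontrivial $\sellervaluationsellerfunc$), while theirs is shorter since the discrete equal-revenue structure turns the integrals into telescoping sums and avoids the water-filling optimization you flag as the hardest step.
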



Given $\informedcoeffseller$ and $c$, to prove the second part of the theorem, we choose parameters $k = \lceil c \cdot(\frac{2-\informedcoeffseller}{1-\informedcoeffseller})\rceil, \, m=2k^2$ such that, by the first part, there exists an information structure $\mathcal{I}_{k,m,\informedcoeffseller, \informedcoeffbuyer}$ where the buyer is fully informed ($\informedcoeffbuyer = 1$) and the seller is $\informedcoeffseller$-informed, and no BIC and interim IR mechanism can attain an approximation ratio better than: 

    $$
    \frac{\frac{1}{\informedcoeffbuyer}}{\frac{1}{k}\frac{1}{1-\informedcoeffseller} + \frac{1-\informedcoeffbuyer}{\informedcoeffbuyer} + 2k\frac{1}{m}+2k\frac{1-\informedcoeffbuyer}{\informedcoeffbuyer}} = \frac{1}{\frac{1}{k}\frac{1}{1-\informedcoeffseller} + 2k\frac{1}{m}} \geq c.
    $$

Before we prove the first part of Theorem~\ref{thm:buyer-informed-inpossibility}, we provide some intuition to the case described in the second part. 
In the information structure $\mathcal{I}_{k, m,\informedcoeffseller, \informedcoeffbuyer}$, the buyer has an informative signal $\signal{b}$. This signal is a discrete random variable with support $[m]\cup \{0\}$. The buyer's valuation function is defined as $v_b(\signal{b}, \signal{s}) = \buyervaluationbuyerfunc{}(\signal{b})$, in this case the buyer is fully informed, hence $\buyervaluationsellerfunc{}(\signal{s})=0$. The seller's valuation is a function of the buyer's signal plus a constant: $v_s(\signal{b}, \signal{s}) = \sellervaluationbuyerfunc{}(\signal{b}) + \sellervaluationsellerfunc{}(\signal{s})$, where $\sellervaluationsellerfunc{}(\signal{s})$ is a constant function $\gamma \cdot \mu$ and $\mu = \frac{m}{k}$. We now define the signal probability mass function $\pdfbuyersignal(\signal{b})$, the buyer's valuation function $v_b(\signal{b})$, and the seller's valuation function $v_s(\signal{b})$:
    \begin{equation*}\label{k-k-welfare-prob-values}
    \pdfbuyersignal(\signal{b}) = \begin{cases}
                 \frac{1}{k^{\signal{b}}} &  \signal{b} \in [m];\\
        1-\sum_{i=1}^m\frac{1}{k^i}  & \signal{b}=0.
           \end{cases} \quad \quad
    v_b(\signal{b}) = \begin{cases}
                 k^{\signal{b}}  &   \signal{b} \in [m] ;\\
                 0  &  \signal{b} \notin [m].
           \end{cases} \quad \quad
    v_s(\signal{b}) = \begin{cases}
                 k^{\signal{b}-1} +\reversecoeff \cdot \mu  &   \signal{b} \in [m] ;\\
                 \reversecoeff \cdot \mu &  \signal{b} \notin [m].
           \end{cases}           
\end{equation*}
    To simplify the presentation, the signal is not distributed uniformly over $[0,1]$ in the information structure $\mathcal{I}_{k, m,\informedcoeffseller, \informedcoeffbuyer}$. However, we can easily alter the information structure to ensure that the signals are distributed uniformly.\footnote{To do that, the buyer's valuation function will be a step function. Let $q_0, q_1, \dots{}, q_m, q_{m+1}$ be defined such that $q_j = \sum_{i=0}^j \pdfbuyersignal(j-1)$ for every $j \in [m]$, with $q_0=0$ and $q_{m+1} =1$.  Suppose the buyer's signal $x$ is drawn from a uniform distribution on $[0,1]$. Then, the buyer's valuation function is given by $k^j$ when $x \geq q_1$, where $j$ is the index satisfying $q_j\leq x\leq q_{j+1}$, and by $0$ when $0\leq x<q_1$. Similar construction can be applied to the seller's valuation.} We also note that the decomposition to functions $\sellervaluationsellerfunc, \sellervaluationbuyerfunc,\buyervaluationsellerfunc, \buyervaluationbuyerfunc$ is not necessarily unique, but the functions can be easily altered to obtain a unique decomposition.\footnote{For example, we can change the definition of $\sellervaluationsellerfunc$ (and, similarly, the other functions) at a single point $\sellervaluationsellerfunc(0)=0$. The decomposition is now unique and the analysis remains the same since $x_s=0$ with probability $0$.}

The theorem states that for some parameters -- in our case,  $\reversecoeff =   \frac{\informedcoeffseller}{1-\informedcoeffseller}>0, \, k = 10\cdot c \cdot(\reversecoeff+1), \, m=2k(k-1)$ -- no BIC and interim IR mechanism provides an approximation ratio better than $c$ for the $(\informedcoeffseller,1)$-information structure $\mathcal{I}_{k, m, \informedcoeffseller, \informedcoeffbuyer}$.


In this intuitive discussion, we consider the extreme case where the seller is fully uninformed and the buyer is fully informed, i.e., the point $(0,1)$. Intuitively, at this point, only the buyer has an informative signal and is therefore the only one who can, to some extent, dictate the chosen outcome.
We start by considering a mechanism where the buyer is allowed to make any take-it-or-leave-it offer. Since the value of the seller is fully determined by $\signal{b}$, the buyer fully knows the value of the seller. Naively, the offer that the buyer makes can be equal to the value of the seller, and the seller---who is completely uninformed about their own value---should always accept. However, it is easy to see that this is not an equilibrium: if the seller always accepts, then the buyer is better off offering a lower price. 

More generally, since the seller has no information, the buyer can dictate the chosen equilibrium. The buyer will dictate the best equilibrium according to $\signal b$. Thus, using again the assumption that the seller is fully uninformed, each equilibrium $e$ boils down to a pair $(q_e,p_e)$, where $q_e$ is the probability of trade and $p_e$ is the trade price. The buyer will always choose the equilibrium $e$ that maximizes $(v_b-p_e)\cdot q_e$. 

To simplify this intuitive discussion, assume now that the mechanism is deterministic, i.e., that the probability of trade $q_e$ is either $1$ or $0$ for every equilibrium $e$. This limits the set of possible outcomes of the mechanism: the buyer will either choose the equilibrium $e$ where $p_e$ is minimal or will choose not to trade the item at all (if their value is less than $p_e$). Overall, we have that every deterministic mechanism either trades the item at some fixed price or does not trade the item at all. The information structure $\mathcal I_{k,m, \informedcoeffseller, \informedcoeffbuyer}$ is designed to ensure that both not trading the item and using any fixed price, each leads to a poor approximation ratio. Specifically, in $\mathcal{I}_{k,m, \informedcoeffseller, \informedcoeffbuyer}$, for all ``relevant'' values of $p$, if $v_b \geq p$, then $E[v_s|v_b \geq p]<p$. This property ensures that fixed price mechanisms cannot simultaneously be interim IR for the seller and provide a good approximation in $\mathcal{I}_{k,m, \informedcoeffseller, \informedcoeffbuyer}$.

We now proceed to turn this intuition into a full and formal proof of Theorem \ref{thm:buyer-informed-inpossibility} that in particular applies to all levels of informedness of the seller.

\subsection{Proof of Part I of Theorem \ref{thm:buyer-informed-inpossibility}}\label{sec:construction:buyer:has:signal}


\begin{proof}
We construct the information structure $\mathcal{I}_{k,m,\informedcoeffseller, \informedcoeffbuyer}$. In the information structure $\mathcal{I}_{k, m, \informedcoeffseller, \informedcoeffbuyer}$, the buyer has an informative signal $\signal{b}$. This signal is a discrete random variable with support $[m]\cup \{0\}$. The seller has a less informative signal $\signal{s}$.  The buyer's valuation function is defined as $v_b(\signal{b}, \signal{s}) = \buyervaluationbuyerfunc{}(\signal{b}) + \buyervaluationsellerfunc{}(\signal{s})$. The seller's valuation is a function of the buyer's signal plus a constant: $v_s(\signal{b}, \signal{s}) = \sellervaluationbuyerfunc{}(\signal{b}) + \sellervaluationsellerfunc{}(\signal{s})$, where $\sellervaluationsellerfunc{}(\signal{s})$, is a constant function $\gamma \cdot \mu$, with $\mu = \E_{\signal{b}}[\sellervaluationbuyerfunc{}(\signal{b})], \, \informceoff=\frac{\informedcoeffseller}{1-\informedcoeffseller}$, and here $\mu = \frac{m}{k}$. 
\begin{equation*}
    \pdfbuyersignal(\signal{b}) = \begin{cases}
                 \frac{1}{k^{\signal{b}}} &  \signal{b} \in [m];\\
        1-\sum_{i=1}^m\frac{1}{k^i}  & \signal{b}=0.
           \end{cases} \hspace{.35cm}
    v_b(\signal{b},\signal{s}) = \begin{cases}
                 k^{\signal{b}} + \buyervaluationsellerfunc{}(\signal{s})  &   \signal{b} \in [m] ;\\
                 \buyervaluationsellerfunc{}(\signal{s})  &  \signal{b} \notin [m].
           \end{cases} \hspace{.35cm}
    v_s(\signal{b}) = \begin{cases}
                 k^{\signal{b}-1} +\reversecoeff \cdot \mu &   \signal{b} \in [m] ;\\
                 \reversecoeff \cdot \mu &  \signal{b} \notin [m].
           \end{cases}           
\end{equation*}

Let $\mechanism =(\allocation, \payment)$ be a BIC and interim IR mechanism for the information structure $\mathcal{I}_{k,m,\informedcoeffseller, \informedcoeffbuyer}$.  To prove the theorem, we apply Lemma~\ref{lemma:buyer-has-signal-impossibility-small-alloc} to show that any BIC and interim IR mechanism for 
$\mathcal{I}_{k,m,\informedcoeffseller, \informedcoeffbuyer}$ must have a relatively small expected sum of allocation function values. We then use this result to bound the best possible approximation ratio for $\mathcal{I}_{k,m,\informedcoeffseller, \informedcoeffbuyer}$.
First, we prove the theorem using Lemma~\ref{lemma:buyer-has-signal-impossibility-small-alloc}, and then we turn to proving the lemma itself.




\begin{lemma}\label{lemma:buyer-has-signal-impossibility-small-alloc}
    If $\mechanism$ is BIC and interim IR then:
    $$   \sum_{\signal{b}=1}^m \E_{\signal{s}}[\allocation(\signal{b}, \signal{s})] \leq  2k(1+\E_{\signal{s}}[\buyervaluationsellerfunc{}(\signal{s})] ).
    $$
\end{lemma}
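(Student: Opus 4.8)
The goal is to upper-bound the total expected allocation $\sum_{\signal{b}=1}^m \E_{\signal{s}}[\allocation(\signal{b},\signal{s})]$ for any BIC, interim-IR mechanism on the instance $\mathcal{I}_{k,m,\informedcoeffseller,\informedcoeffbuyer}$. The key structural feature of this instance is that the seller's value $v_s(\signal{b}) = k^{\signal{b}-1} + \reversecoeff\mu$ (for $\signal{b}\in[m]$) is almost a factor $k$ \emph{below} the buyer's value $v_b(\signal{b},\signal{s}) = k^{\signal{b}} + \buyervaluationsellerfunc{}(\signal{s})$ at the same signal, but the probability mass $\pdfbuyersignal(\signal{b}) = k^{-\signal{b}}$ decays geometrically, so the seller's \emph{expected} value conditioned on "$\signal{b}$ is large" is dominated by the single largest realized signal. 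I would first invoke the Myersonian buyer payment formula (Lemma~\ref{lemma:buyer-payment-formula}) to express $\E_{\signal{s}}[\payment(\signal{b},\signal{s})]$ in terms of the allocation rule and the buyer's marginal values, using that the buyer's value increments are $v_b(\signal{b}) - v_b(\signal{b}-1) = k^{\signal{b}} - k^{\signal{b}-1}$; and in parallel use interim-IR for the seller, $\E_{\signal{b}}[\payment(\signal{b},\signal{s})] \geq \E_{\signal{b}}[\allocation(\signal{b},\signal{s})\cdot v_s(\signal{b},\signal{s})]$, which says the seller must be paid at least their expected lost value.

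The heart of the argument is to combine these: the seller's interim-IR constraint forces large payments whenever the allocation probability on high-$\signal{b}$ realizations is large, because those are precisely the realizations where $v_s$ is large; but the buyer's BIC/payment formula caps how much total payment the mechanism can extract from the buyer, since a buyer with signal $\signal{b}$ will not pay more than roughly their value $k^{\signal{b}}$ times their allocation, and the Myerson "virtual-value"-style telescoping loses a factor because consecutive $k^{\signal{b}}$ differ multiplicatively. Concretely, I expect to show that $\E_{\signal{b},\signal{s}}[\payment(\signal{b},\signal{s})] \leq \frac{1}{k}\cdot(\text{something})$ on one side while the seller-IR side gives $\E_{\signal{b},\signal{s}}[\payment] \geq \E_{\signal{b}}[\allocation(\signal{b},\signal{s})\cdot k^{\signal{b}-1}] + \reversecoeff\mu\cdot\E[\allocation] - (\text{buyer-side }\buyervaluationsellerfunc{}\text{ terms})$; matching the two and using $k^{\signal{b}-1} = \frac{1}{k}k^{\signal{b}}$ together with $\pdfbuyersignal(\signal{b})k^{\signal{b}} = 1$ yields the bound $\sum_{\signal{b}=1}^m\E_{\signal{s}}[\allocation(\signal{b},\signal{s})] \leq 2k(1+\E_{\signal{s}}[\buyervaluationsellerfunc{}(\signal{s})])$. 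The extra additive $\E_{\signal{s}}[\buyervaluationsellerfunc{}(\signal{s})]$ term is exactly the slack coming from the buyer's signal-independent-of-self contribution $\buyervaluationsellerfunc{}(\signal{s})$ to $v_b$, which inflates the buyer's willingness to pay, and the factor $2$ absorbs the telescoping/boundary terms ($\payment(0,\signal{s})$, the $v_b(0,\signal{s})\allocation(0,\signal{s})$ term) and the $\reversecoeff\mu$ constant shift.

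The main obstacle will be handling the buyer-side payment formula carefully: the integral term $\int_0^{\signal{b}}\allocation(z,\signal{s})\,\partial_{\signal{b}}v_b\,dz$ over the discretized signal becomes a sum $\sum_{j\leq \signal{b}}\allocation(j,\signal{s})(k^{j}-k^{j-1})$, and bounding $\E_{\signal{s}}[\payment(\signal{b},\signal{s})] \geq v_b(\signal{b})\allocation(\signal{b},\signal{s}) - \sum_{j\le\signal{b}}\allocation(j,\signal{s})(k^j-k^{j-1}) - (\text{boundary})$ from below, and then summing against $\pdfbuyersignal(\signal{b})$ and swapping the order of summation, requires tracking that $\sum_{\signal{b}\ge j}\pdfbuyersignal(\signal{b}) = \sum_{\signal{b}\ge j}k^{-\signal{b}} \leq \frac{2}{k^{j}}$ (a clean geometric tail bound, using $k\geq 2$), so each allocation term $\allocation(j,\signal{s})$ is ultimately charged a coefficient of order $\frac{1}{k}\cdot k^j \cdot \frac{2}{k^j}\cdot k = O(1)$ rather than $O(k)$ — getting the constants to land at exactly $2$ and the $1+\E[\buyervaluationsellerfunc{}]$ to come out clean is where the care is needed. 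I would organize the write-up as: (i) state the two chains of inequalities (buyer payment upper bound, seller IR lower bound) as displayed equations; (ii) equate and rearrange; (iii) use the geometric identities $\pdfbuyersignal(\signal{b})k^{\signal{b}}=1$ and the tail bound to collapse the double sum; (iv) conclude.
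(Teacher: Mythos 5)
Your plan is essentially the paper's proof: the paper also applies the buyer's Myersonian payment formula (Lemma~\ref{lemma:buyer-payment-formula}), uses buyer interim IR at $\signal{b}=0$ to eliminate the boundary term $\E[\payment(0,\signal{s})-\allocation(0,\signal{s})v_b(0,\signal{s})]$ and thereby obtain an \emph{upper} bound on $\E_{\signal{s}}[\payment(\signal{b},\signal{s})]$, substitutes this into the seller's interim-IR constraint $\E_{\signal{b},\signal{s}}[\payment-v_s\cdot\allocation]\geq 0$, and then swaps the order of summation against $\pdfbuyersignal(\signal{b})=k^{-\signal{b}}$ with a geometric-tail estimate (packaged as Lemma~\ref{lemma:buyer-signal-tec-summation}) to extract a $-\frac{1}{2k}\sum_{\signal{b}}\E_{\signal{s}}[\allocation(\signal{b},\signal{s})]$ term. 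One caution: in your last paragraph you write that you would bound $\E_{\signal{s}}[\payment(\signal{b},\signal{s})]$ from \emph{below} by $v_b\allocation-\sum_{j\le\signal{b}}\allocation(j,\signal{s})(k^j-k^{j-1})-(\text{boundary})$; the direction you actually need (and correctly state earlier in your plan) is an \emph{upper} bound, since buyer IR at $\signal{b}=0$ forces the boundary term to be nonpositive, and it is the seller's IR that forces payments to be large---combining two lower bounds on $\E[\payment]$ in the same direction would yield nothing. Treated as a slip rather than a wrong route, your plan follows the paper's.
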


We start by computing the expected value of both agents. Note that since the  buyer is $\informedcoeffbuyer$-informed, we have $\E_{\signal{s}}[\buyervaluationsellerfunc{}(\signal{s})] = \frac{1-\informedcoeffbuyer}{\informedcoeffbuyer}\cdot\E_{\signal{b}}[\buyervaluationbuyerfunc{}(\signal{b})].$ Then

\begin{align*}
    \E_{\signal{b},\signal{s}}[v_b(\signal{b}, \signal{s})] &= \left(1 + \frac{1-\informedcoeffbuyer}{\informedcoeffbuyer} \right) \E_{\signal{b}}[\buyervaluationbuyerfunc{}(\signal{b})] \quad \text{and} 
    & \E[v_s(\signal{b})] &= \reversecoeff \cdot \mu + \sum_{\signal{b}=1}^m \frac{1}{k^{\signal{b}}}\cdot k^{\signal{b}-1} & \informedcoeffbuyer\text{-informed} \\
    &= \frac{1}{\informedcoeffbuyer} \sum_{\signal{b}=1}^m \frac{1}{k^{\signal{b}}}\cdot k^{\signal{b}} 
    & &= \frac{\informedcoeffseller}{1-\informedcoeffseller} \cdot \frac{m}{k} + \frac{1}{k} \cdot m & \gamma \cdot \mu  = \frac{\informedcoeffseller}{1-\informedcoeffseller} \cdot \frac{m}{k}  \\
    & = \frac{m}{\informedcoeffbuyer}
    & &= \frac{m}{k}(\frac{1}{1-\informedcoeffseller})
\end{align*}    


We compute the expected welfare of $\mechanism$ as the seller’s value plus the gains from trade in the event there is a trade:
    \begin{align*}
         ALG & = \E_{\signal{b}}[v_s(\signal{b})] + \sum_{\signal{b}=0}^m \pdfbuyersignal(\signal{b})\cdot \E_{\signal{s}}[\allocation(\signal{b}, \signal{s})\left( v_b(\signal{b}, \signal{s}) -v_s(\signal{b}) \right)]\\
         & = \left(\frac{m}{k}(\frac{1}{1-\informedcoeffseller})\right) + \underbrace{\pdfbuyersignal(0)\cdot \E_{\signal{s}}\left[\allocation(0, \signal{s}) \cdot (\buyervaluationsellerfunc{}(\signal{s})-\mu\cdot\reversecoeff)\right]}_{\signal{b}=0} 
         \\ & \hspace{1cm}
         + \sum_{\signal{b}=1}^m \frac{1}{k^{\signal{b}}}\cdot \E_{\signal{s}}\left[\allocation(\signal{b}, \signal{s})\left( \buyervaluationsellerfunc{}(\signal{s})+ k^{\signal{b}}- k^{\signal{b}-1}- \reversecoeff\cdot\mu \right)\right]\\
         & \leq \frac{m}{k}(\frac{1}{1-\informedcoeffseller}) + \underbrace{\E_{\signal{b}}[\E_{\signal{s}}[\allocation(\signal{b}, \signal{s})\cdot\buyervaluationsellerfunc{}(\signal{s})]]}_{\text{includes $\signal{b}=0$}} + (1-\frac{1}{k})\cdot\sum_{\signal{b}=1}^m \E_{\signal{s}}[\allocation(\signal{b}, \signal{s})] & - \mu \cdot \gamma \leq 0\\
         & \leq \frac{m}{k}(\frac{1}{1-\informedcoeffseller}) + m\cdot\frac{1-\informedcoeffbuyer}{\informedcoeffbuyer} + \sum_{\signal{b}=1}^m \E_{\signal{s}}[\allocation(\signal{b}, \signal{s})]  & \allocation(\signal{b},\signal{s})\leq 1, 1-\frac{1}{k} \leq 1,  \\
         & \leq \frac{m}{k}(\frac{1}{1-\informedcoeffseller}) + m\cdot\frac{1-\informedcoeffbuyer}{\informedcoeffbuyer} + 2k(1+m\cdot\frac{1-\informedcoeffbuyer}{\informedcoeffbuyer}). & \text{Lemma~\ref{lemma:buyer-has-signal-impossibility-small-alloc}} 
    \end{align*}
Using that $OPT = \E[\max \{v_b(\signal{b}, \signal{s}),v_s(\signal{b})\}] \geq \E[v_b(\signal{b}, \signal{s})] = \frac{m}{\informedcoeffbuyer}$, we now bound the approximation ratio of $\mechanism$:
    \begin{align*}
            \frac{OPT}{ALG} \geq \frac{m\frac{1}{\informedcoeffbuyer}}{\frac{m}{k}\frac{1}{1-\informedcoeffseller} + m\cdot\frac{1-\informedcoeffbuyer}{\informedcoeffbuyer} + 2k(1+m\cdot\frac{1-\informedcoeffbuyer}{\informedcoeffbuyer})} = \frac{\frac{1}{\informedcoeffbuyer}}{\frac{1}{k}\frac{1}{1-\informedcoeffseller} + \frac{1-\informedcoeffbuyer}{\informedcoeffbuyer} + 2k\frac{1}{m}+2k\frac{1-\informedcoeffbuyer}{\informedcoeffbuyer}} 
    \end{align*}
\end{proof}
Before proving Lemma~\ref{lemma:buyer-has-signal-impossibility-small-alloc}, we state a technical lemma we use for the proof. 

\begin{lemma}\label{lemma:buyer-signal-tec-summation}
For every $\signal{s}$,
    $$
   \sum_{\signal{b}=1}^{m} \frac{\int_{1}^{\signal{b}} {\allocation(b,\signal{s})\cdot k^b\cdot\ln{k}} \d b}{k^{\signal{b}}} \geq \sum_{\signal{b}=1}^{m-1}\allocation(\signal{b}, \signal{s})(1-\frac{1}{k^{m-\signal{b}}}) .$$
\end{lemma}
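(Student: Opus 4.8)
The plan is to fix the seller's signal $\signal{s}$, reduce the claim to a one-dimensional statement about the single-variable function $b \mapsto \allocation(b,\signal{s})$, and then exchange the order of the (discrete) outer sum and the (continuous) integral. Throughout, write $a(b) := \allocation(b,\signal{s})$. The one structural fact I would invoke is that, since $\mechanism$ is BIC for the buyer, Lemma~\ref{lemma:buyer-payment-formula} and the accompanying Myersonian monotonicity imply $\allocation(\cdot,\signal{s})$ is non-decreasing in the buyer's signal, i.e. $a$ is non-decreasing.

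First I would chop each integral $\int_{1}^{\signal{b}} a(b)\,k^b \ln k \, db$ into the unit pieces $[j-1,j]$ for $j = 2,\dots,\signal{b}$, and set $c_j := \int_{j-1}^{j} a(b)\, k^b \ln k \, db$. Then the left-hand side becomes $\sum_{\signal{b}=1}^{m} k^{-\signal{b}} \sum_{j=2}^{\signal{b}} c_j$, and exchanging the order of summation rewrites this as $\sum_{j=2}^{m} c_j \sum_{\signal{b}=j}^{m} k^{-\signal{b}}$. The inner geometric sum evaluates to $\sum_{\signal{b}=j}^{m} k^{-\signal{b}} = \frac{k^{1-j} - k^{-m}}{k-1}$. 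For the $c_j$ factor, I use $\frac{d}{db} k^b = k^b \ln k$ together with monotonicity of $a$ to get $c_j \geq a(j-1) \int_{j-1}^{j} k^b \ln k \, db = a(j-1)(k^j - k^{j-1}) = a(j-1)\,k^{j-1}(k-1)$.

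Substituting both estimates, the factors $(k-1)$ and $k^{j-1}$ cancel against $\frac{k^{1-j} - k^{-m}}{k-1}$, and what remains is $\sum_{j=2}^{m} a(j-1)\,(1 - k^{\,j-1-m})$; reindexing by $\signal{b} = j-1$ gives precisely $\sum_{\signal{b}=1}^{m-1} \allocation(\signal{b},\signal{s})\,(1 - k^{-(m-\signal{b})})$, the right-hand side. The only point requiring a little care — hardly an obstacle — is that after the swap the integer index must satisfy $\signal{b} \geq j$ with $j$ ranging from $2$ to $m$, and that the lower bound on $c_j$ legitimately uses the left-endpoint value $a(j-1)$, which is exactly where non-decreasing monotonicity of the allocation is needed; the rest is elementary geometric-series bookkeeping.
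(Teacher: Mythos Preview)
Your proof is correct and follows essentially the same approach as the paper's: split the inner integral into unit pieces, swap the order of summation with the outer sum, evaluate the resulting geometric tail, and lower-bound each unit integral by the left endpoint value of $\allocation(\cdot,\signal{s})$ using Myersonian monotonicity. The only cosmetic difference is that the paper indexes the unit pieces as $[i,i+1]$ for $i=1,\dots,\signal{b}-1$ whereas you index them as $[j-1,j]$ for $j=2,\dots,\signal{b}$, which amounts to the shift $j=i+1$.
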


We defer the proof of Lemma~\ref{lemma:buyer-signal-tec-summation} to Appendix~\ref{sec:app:lemma:aux-lemma}.

\begin{proof}[Proof of Lemma~\ref{lemma:buyer-has-signal-impossibility-small-alloc}]
By Myerson's lemma for the buyer (Lemma~\ref{lemma:buyer-payment-formula})\footnote{Without loss of generality, we assume that the buyer's support is in the interval $[0,2^m]$ and we let the values in the interval that are not in the original support arbitrary small probabilities so they won't affect the properties of the original instance.}, for every $\signal{b} \in [m]$, the payment identity gives that:
\begin{align}\label{ineq:buyer-signal-payment-ineq}
    \E_{\signal{s}}[(p(\signal{b}, \signal{s}))] & = \E_{\signal{s}}[p(0, \signal{s}) -\allocation(0, \signal{s})\cdot v_b(0, \signal{s}) + v_b(\signal{b}, \signal{s})\cdot \allocation(\signal{b}, \signal{s})  - \int_{0}^{\signal{b}} {\allocation(b,\signal{s}) \left. \frac{\partial v_b(\signal{b},\signal{s})}{\partial \signal{b}} \right|_{\signal{b} = b}} \d b] \nonumber\\
    & \leq \E_{\signal{s}}[p(0, \signal{s}) -\allocation(0, \signal{s})\cdot v_b(0, \signal{s}) + v_b(\signal{b}, \signal{s})\cdot \allocation(\signal{b}, \signal{s})  - \int_{1}^{\signal{b}} {\allocation(b,\signal{s})\cdot k^b\cdot\ln{k}} \d b] \nonumber\\
    & \leq \E_{\signal{s}}[v_b(\signal{b}, \signal{s})\cdot \allocation(\signal{b}, \signal{s})  - \int_{1}^{\signal{b}} {\allocation(b,\signal{s})\cdot k^b\cdot\ln{k}} \d b].
\end{align}
Where the last inequality uses that a buyer's utility at $(0, \signal{s})$ must be non-negative due to interim IR.
If the mechanism $\mechanism$ is interim IR, then the seller’s expected utility must be non-negative:
$\E_{\signal{b}}[ \E_{\signal{s}}[\payment(\signal{b}, \signal{s}) -v_s(\signal{b})\cdot\allocation(\signal{b}, \signal{s})]] \geq 0$. We leverage this constraint and further analyze the components of the seller’s
expected utility in this instance. Plugging in our instance for the seller’s expected utility, we get:
\begin{align*}
    \E_{\signal{b}} & [ \E_{\signal{s}} [\payment(\signal{b}, \signal{s}) -v_s(\signal{b})\cdot\allocation(\signal{b}, \signal{s})]]  \\
     & \leq  \pdfbuyersignal(0)\cdot \E_{\signal{s}}\left[\allocation(0, \signal{s}) \cdot \buyervaluationsellerfunc{}(\signal{s}) -\reversecoeff \cdot \mu\cdot \allocation(0,\signal{s})\right] \\
     & \hspace{2cm}+ \sum_{\signal{b}=1}^{m} \frac{1}{k^{\signal{b}}}\cdot \E_{\signal{s}}[\payment(\signal{b}, \signal{s}) -v_s(\signal{b})\cdot\allocation(\signal{b}, \signal{s})] & \text{buyer interim IR}\\
    & \leq  \pdfbuyersignal(0)\cdot \E_{\signal{s}}\left[\allocation(0, \signal{s}) \cdot \buyervaluationsellerfunc{}(\signal{s})\right] 
    \\ & \hspace{0.5cm} 
    + \sum_{\signal{b}=1}^{m} \frac{1}{k^{\signal{b}}}\cdot \E_{\signal{s}}\left[\allocation(\signal{b}, \signal{s})\left( k^{\signal{b}-1}(k-1) + \buyervaluationsellerfunc{}(\signal{s}) - \reversecoeff \cdot \mu \right) - \int_{1}^{\signal{b}} {\allocation(b,\signal{s})\cdot k^b\cdot\ln{k}} \d b\right] & \text{Inequality~\ref{ineq:buyer-signal-payment-ineq}}\\
    & \leq \E_{\signal{s}}[\buyervaluationsellerfunc{}(\signal{s})] + \sum_{\signal{b}=1}^{m} (1-\frac{1}{k})\cdot \E_{\signal{s}}[\allocation(\signal{b}, \signal{s})] -\E_{\signal{s}}\left[\sum_{\signal{b}=1}^{m} \frac{\int_{1}^{\signal{b}} {\allocation(b,\signal{s})\cdot k^b\cdot\ln{k}} \d b}{k^{\signal{b}}}\right] & \allocation(\signal{b},\signal{s})\leq1, -\gamma \cdot \mu \leq 0\\
    & \leq \E_{\signal{s}}[\buyervaluationsellerfunc{}(\signal{s})] +  \E_{\signal{s}}\left[\sum_{\signal{b}=1}^{m}(1-\frac{1}{k})\cdot\allocation(\signal{b}, \signal{s}) \right] - \E_{\signal{s}}\left[\sum_{\signal{b}=1}^{m-1}\allocation(\signal{b}, \signal{s})(1-\frac{1}{k^{m-\signal{b}}}) \right] & \text{Lemma~\ref{lemma:buyer-signal-tec-summation}}\\
    & \leq \E_{\signal{s}}[\buyervaluationsellerfunc{}(\signal{s})] + 1-\frac{1}{k} +\E_{\signal{s}}\left[\sum_{\signal{b}=1}^{m-1}\allocation(\signal{b}, \signal{s})\left(\frac{1}{k^{m-\signal{b}}}-\frac{1}{k}\right) \right] & \allocation(m, \signal{s}) \leq 1 \\
    &\leq \E_{\signal{s}}[\buyervaluationsellerfunc{}(\signal{s})] + 1 -\frac{1}{2k}\cdot\sum_{\signal{b}=1}^m \E_{\signal{s}}[\allocation(\signal{b}, \signal{s})].
\end{align*}
The last inequality holds because $\frac{1}{k^{m-{\signal{b}}}}-\frac{1}{k} \leq -\frac{1}{2k}$ for $1 \leq \signal{b} \leq m-2$ when $k\geq 2$ and $m\geq 3$. Additionally, $\frac{1}{k^{m-{\signal{b}}}}-\frac{1}{k} = 0$ for $\signal{b} = m-1$.

Thus, if $\mechanism$ is interim IR for the seller, then $\E_{\signal{s}}[\buyervaluationsellerfunc{}(\signal{s})] + 1 -\frac{1}{2k}\cdot\sum_{\signal{b}=1}^m \E_{\signal{s}}[\allocation(\signal{b}, \signal{s})] \geq 0.$ 
\end{proof}

\section{An Uninformed Seller: The $(0,\informedcoeffbuyer)$ Edge}\label{sec:seller-uninformed}

Recall that if the buyer is fully informed ($\informedcoeffbuyer =1$), no BIC and interim IR mechanism can guarantee a constant approximation ratio (Theorem~\ref{thm:buyer-informed-inpossibility}). In Subsection \ref{subsec-mechanism-0-beta}, we complement this result with a possibility: when the buyer is $\beta$-informed for $\beta \in [0,1)$, there exists a BIC and interim IR mechanism with an approximation ratio of $O(\frac{1}{1-\informedcoeffbuyer})$. That is, the less information the buyer has, the better the approximation ratio that we get. In Subsection \ref{subsec-impossibility-0-beta}, we show that, indeed, the optimal approximation ratio must improve as the value of $\informedcoeffbuyer$ decreases.


\subsection{A Mechanism for the $(0,\informedcoeffbuyer)$ Edge}\label{subsec-mechanism-0-beta}

\begin{theorem}\label{thm:seller-uninformed-possibility}
    Suppose that the seller is uninformed (i.e., $\informedcoeffseller = 0$), and the buyer is $\informedcoeffbuyer$-informed for $\informedcoeffbuyer \in [0,1)$. There exists a BIC and interim IR mechanism with an approximation ratio of $O(\frac{1}{1-\informedcoeffbuyer})$. 
\end{theorem}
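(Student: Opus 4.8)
The plan is to collapse the problem to a one-dimensional picture and then split into two cases chosen according to publicly known quantities of the information structure. Since the seller is uninformed, $\alpha=0$ forces $\E_{\signal{s}}[\sellervaluationsellerfunc{}(\signal{s})]=0$, so (discarding a measure-zero set) $\sellervaluationsellerfunc{}\equiv 0$ and the seller's value $v_s(\signal{b},\signal{s})=\sellervaluationbuyerfunc{}(\signal{b})$ depends only on the buyer's signal. Hence the seller has a single interim type: any mechanism whose allocation and payment ignore the seller's report is automatically BIC for the seller, and interim IR for the seller collapses to the single inequality $\E_{\signal{b}}[\payment(\signal{b})-\allocation(\signal{b})\sellervaluationbuyerfunc{}(\signal{b})]\ge 0$. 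On the buyer's side only the interim value $\hat v(\signal{b}):=\E_{\signal{s}}[v_b(\signal{b},\signal{s})]=\buyervaluationbuyerfunc{}(\signal{b})+B$ matters, where $B:=\E_{\signal{s}}[\buyervaluationsellerfunc(\signal{s})]$, and crucially $\hat v(\signal{b})\ge B$ for every $\signal{b}$. Writing $V_b:=\E_{\signal{b}}[\buyervaluationbuyerfunc{}(\signal{b})]$ and $V_s:=\E_{\signal{b}}[\sellervaluationbuyerfunc{}(\signal{b})]$, buyer $\beta$-informedness reads $1-\beta=\frac{B}{V_b+B}$, i.e. $V_b+B=\frac{B}{1-\beta}$ (so $B>0$ whenever $\beta<1$); also $OPT=\E[\max\{v_b,v_s\}]\le\E[v_b]+\E[v_s]=(V_b+B)+V_s$.

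With this setup I would analyze two report-independent mechanisms and take whichever is better for the given information structure. In the case $V_s\le B$, use the posted-price mechanism that always trades at price $B$ (the direct mechanism with $\allocation\equiv 1$, $\payment\equiv B$): it is strongly budget-balanced; BIC for the buyer is immediate since nothing depends on the report, and the buyer's interim utility is $\hat v(\signal{b})-B=\buyervaluationbuyerfunc{}(\signal{b})\ge 0$; BIC/interim IR for the seller reduce to $B-V_s\ge 0$, which holds by the case assumption. Its welfare is $\E_{\signal{b},\signal{s}}[v_b]=V_b+B$, while $OPT\le V_b+B+V_s\le V_b+2B=(2-\beta)(V_b+B)$ (using $V_b+B=\frac{B}{1-\beta}$), for an approximation ratio of at most $2-\beta\le 2$. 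In the complementary case $V_s>B$, use the mechanism that never trades (trivially BIC, interim IR, budget-balanced); its welfare is $\E[v_s]=V_s$, while $OPT\le V_b+B+V_s=\frac{B}{1-\beta}+V_s<\frac{V_s}{1-\beta}+V_s=\frac{2-\beta}{1-\beta}V_s$, for an approximation ratio of at most $\frac{2-\beta}{1-\beta}$.

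Combining, the better of the two mechanisms always attains an approximation ratio of at most $\frac{2-\beta}{1-\beta}=1+\frac{1}{1-\beta}=O\!\left(\frac{1}{1-\beta}\right)$, which proves the theorem. There is no genuinely hard step; the one point that needs care is \emph{why the fixed price $B$ is the right choice}: $B$ is exactly the ex-ante contribution of the seller's signal to the buyer's value, which is simultaneously a lower bound on the buyer's interim value (so every buyer type stays individually rational while trading) and, in the regime $V_s\le B$, enough revenue to keep the uninformed seller individually rational; and in the complementary regime the seller's expected value already dominates $V_b+B$ up to a factor $\frac{1}{1-\beta}$, which is precisely what makes never trading an $O(\frac{1}{1-\beta})$-approximation. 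I would also note that this mechanism is not optimal for the edge — the lower bound in the next subsection is only $\Omega(\sqrt{1/(1-\beta)})$ — but it already yields the claimed guarantee.
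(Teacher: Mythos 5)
Your proof is correct and takes essentially the same approach as the paper's: the same case split (your $V_s \le B$ vs.\ $V_s > B$ is the paper's $\mu_s \le \mu_2$ vs.\ $\mu_s > \mu_2$), the same posted price $B=\mu_2$ with always-trade in the first case and no-trade in the second, and the same arithmetic yielding $\max\{2-\beta,\, 1+\tfrac{1}{1-\beta}\} = 1+\tfrac{1}{1-\beta}$. The only cosmetic difference is that you present the trade mechanism directly as $(\allocation\equiv 1,\ \payment\equiv B)$ and check BIC/interim IR outright, whereas the paper phrases the same mechanism as a posted price and argues that ``both accept'' is an equilibrium.
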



\begin{proof}[Proof of Theorem~\ref{thm:seller-uninformed-possibility}]

By the statement, we have $v_s(\signal{b}, \signal{s}) = \sellervaluationbuyerfunc{}(\signal{b}), \, v_b(\signal{b}, \signal{s}) = \buyervaluationbuyerfunc{}(\signal{b}) + \buyervaluationsellerfunc{}(\signal{s})$.
Let  $\mu_s = \E_{\signal{b}}[\sellervaluationbuyerfunc{}(\signal{b})], \, \mu_1 = \E_{\signal{b}}[\buyervaluationbuyerfunc{}(\signal{b})],\, \mu_2 = \E_{\signal{s}}[\buyervaluationsellerfunc{}{}(\signal{s})]$. Since the buyer is $\informedcoeffbuyer$-informed for $\informedcoeffbuyer \in [0,1)$, we have that $\mu_1 = \mu_2 \cdot \frac{\informedcoeffbuyer}{(1-\informedcoeffbuyer)}$.  All together, that gives that 
$$\E[v_b(\signal{b},\signal{s})]= \mu_1 + \mu_2 \hspace{1cm} \text{and} \hspace{1cm} \E[v_s(\signal{b})]= \mu_s$$

We consider two cases. In the first case, where $\mu_s > \mu_2$ we do not trade the item, and the approximation ratio is at most:

$$\frac{OPT}{ALG} \leq \frac{\mu_s +\mu_1+\mu_2}{\mu_s} = 1+ \frac{\mu_2 \cdot \frac{\informedcoeffbuyer}{(1-\informedcoeffbuyer)}+\mu_2}{\mu_s} \leq 1 + \frac{\mu_2\cdot(\frac{1}{1-\informedcoeffbuyer})}{\mu_2} = 1+ \frac{1}{1-\informedcoeffbuyer},$$

where the first equality follows from $\mu_1 = \mu_2 \cdot \frac{\informedcoeffbuyer}{(1-\informedcoeffbuyer)}$ and the following inequality from $\mu_s > \mu_2$ case.

In the second case, $\mu_s \leq \mu_2$, we set a posted price of $p=\mu_2$. We claim that both agents always accepting the trade is an equilibrium. Given that the buyer always accepts the offer, the seller's best response is also to accept, as their expected value is at most $p$. Similarly, the buyer, knowing the seller will always accept, has an expected value of at least the price $p$. In this case, the approximation ratio of $p=\mu_2$ is at most:

$$\frac{OPT}{ALG} \leq \frac{\mu_s +\mu_1+\mu_2}{\mu_1+\mu_2} \leq  \frac{\mu_2\cdot\frac{2-\informedcoeffbuyer}{1-\informedcoeffbuyer}}{\mu_2 \cdot (\frac{1}{1-\informedcoeffbuyer})} = 2-\informedcoeffbuyer \leq  2.$$

Combining both cases, we conclude that the approximation ratio is $O(\frac{1}{1-\informedcoeffbuyer})$.

\end{proof}

\subsection{An Impossibility for the $(0,\informedcoeffbuyer)$ Edge}\label{subsec-impossibility-0-beta}

\begin{theorem}\label{thm:imposs-uninformed-seller}
    Fix $\informedcoeffbuyer \geq 0.9$. There exists an information structure where the seller is fully uninformed and the buyer is $\informedcoeffbuyer$-informed, for which no BIC and interim IR mechanism can provide an approximation ratio better than $\Omega (\frac{1}{\sqrt{1-\informedcoeffbuyer}})$.
\end{theorem}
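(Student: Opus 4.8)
The plan is to obtain Theorem~\ref{thm:imposs-uninformed-seller} as a corollary of the first part of Theorem~\ref{thm:buyer-informed-inpossibility}, instantiated at $\alpha = 0$. In the construction underlying that theorem the seller's own-signal function is the constant $\gamma\cdot\mu$ with $\gamma=\frac{\alpha}{1-\alpha}$, so taking $\alpha=0$ makes this constant $0$: the seller becomes genuinely uninformed ($\mathbb{E}_{x_s}[v_s^s(x_s)]=0$), while the buyer remains $\beta$-informed, so the resulting instance is a legitimate $(0,\beta)$-information structure. (As in the proof of Theorem~\ref{thm:buyer-informed-inpossibility}, one perturbs the functions at a single signal value to make the decomposition unique; this changes nothing since it happens with probability zero, and with $\gamma=0$ the constraint $\mathbb{E}[v_s^s]=0$ already pins down $v_s^s\equiv 0$.) Plugging $\frac{1}{1-\alpha}=1$ into the bound of Theorem~\ref{thm:buyer-informed-inpossibility}, we conclude that for every admissible $k,m$ there is a $(0,\beta)$-structure for which no BIC and interim IR mechanism beats
\[
\frac{\tfrac1\beta}{\tfrac1k+\tfrac{1-\beta}{\beta}+\tfrac{2k}{m}+2k\cdot\tfrac{1-\beta}{\beta}}.
\]

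Next I would tune $k$ and $m$. Write $\delta=1-\beta$. The numerator $\tfrac1\beta\in[1,\tfrac{10}{9}]$ is $\Theta(1)$ and only helps, so it suffices to make the denominator $\Theta(\sqrt\delta)$. Since $\beta\ge 0.9$ we have $\tfrac{1-\beta}{\beta}=\Theta(\delta)$, so the denominator is, up to constants, $\tfrac1k+\tfrac{2k}{m}+k\delta$. Choosing $m$ of order $k^2$ (e.g. $m=2k^2$, giving $\tfrac{2k}{m}=\tfrac1k$) makes the middle term no larger than the first, and then the first and third terms are balanced by $k\asymp 1/\sqrt\delta$, concretely $k=\ceil{1/\sqrt\delta}$, which yields $\tfrac1k=O(\sqrt\delta)$ and $k\delta=O(\sqrt\delta)$. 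Hence the denominator is $\Theta(\sqrt\delta)$ and the lower bound is $\Omega(1/\sqrt\delta)=\Omega(1/\sqrt{1-\beta})$. One must verify the constraints $k\ge 2$, $m\ge 3$ of Theorem~\ref{thm:buyer-informed-inpossibility}: $\beta\ge 0.9$ forces $\delta\le 0.1$, hence $1/\sqrt\delta\ge\sqrt{10}>3$, so $k\ge 4$ and $m=2k^2\ge 32$, as required.

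I expect the only real work here to be bookkeeping: propagating constants through the substitution $\alpha=0$ and the choice $k=\ceil{1/\sqrt\delta}$, $m=2k^2$, bounding $\ceil{1/\sqrt\delta}\le 2/\sqrt\delta$ (valid since $\delta\le 1$) to get a clean constant in front of $1/\sqrt{1-\beta}$, and checking the integrality constraints — all elementary. There is no new conceptual ingredient; the analytically hard content (the per-signal geometric construction and the allocation bound of Lemma~\ref{lemma:buyer-has-signal-impossibility-small-alloc}) is inherited wholesale from Theorem~\ref{thm:buyer-informed-inpossibility}. The restriction $\beta\ge 0.9$ is used precisely (and only) to keep $\tfrac{1-\beta}{\beta}$ comparable to $1-\beta$ and to guarantee $k\ge 2$; if one wants the sharpest constant, one instead picks $k=\ceil{c/\sqrt\delta}$ and $m=\ceil{c'/\delta}$ for absolute constants $c,c'$ minimizing $\tfrac1k+\tfrac{2k}{m}+(2k+1)\delta$, a one-variable calculus exercise.
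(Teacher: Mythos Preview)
Your proposal is correct and takes essentially the same approach as the paper: apply Theorem~\ref{thm:buyer-informed-inpossibility} with $\alpha=0$, $m=2k^2$, and $k$ of order $1/\sqrt{1-\beta}$. The only cosmetic difference is that the paper sets $k=\lceil 1/\sqrt{\lambda}\rceil$ with $\lambda=\tfrac{1-\beta}{\beta}$ rather than your $\delta=1-\beta$; since $\beta\ge 0.9$ these differ by a bounded constant factor and the resulting arithmetic is the same.
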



To prove the theorem, we apply the first part of Theorem~\ref{thm:buyer-informed-inpossibility} with the following parameters; $\informedcoeffseller=0, \,\informedcoeffbuyer \geq 0.9,\, m=2k^2,$ and $k= \lceil \frac{1}{\sqrt{\lambda}}\rceil$, where $\lambda = \frac{1-\informedcoeffbuyer}{\informedcoeffbuyer}$. Then for the information structure $\mathcal{I}_{k,m, \informedcoeffseller, \informedcoeffbuyer}$ there is no BIC and interim IR mechanism with an approximation ratio better than:
$$
        \frac{\frac{1}{\informedcoeffbuyer}}{\frac{1}{k}\frac{1}{1-\informedcoeffseller} + \frac{1-\informedcoeffbuyer}{\informedcoeffbuyer} + 2k\frac{1}{m}+2k\frac{1-\informedcoeffbuyer}{\informedcoeffbuyer}} = \frac{1+ \lambda}{\frac{2}{k}+ \lambda+ 2k\lambda}  \geq \frac{1+\lambda}{2\sqrt{\lambda}+ \lambda + 2\cdot\lambda \left( \frac{1}{\sqrt{\lambda}} +1 \right)} = \frac{1+\lambda}{3\lambda + 4\sqrt{\lambda}} \geq \frac{1}{7\sqrt{\lambda}} \geq \frac{\frac{1}{7}\cdot\sqrt{0.9}}{\sqrt{1-\informedcoeffbuyer}}
$$

\section{An Uninformed Buyer: The $(\informedcoeffseller,0)$ Edge}\label{sec:uninformed-buyer}


Recall that if both players are fully uninformed then a (trivial) constant approximation mechanism exists. This section shows that when the seller is only slightly informed, then we can no longer construct constant approximation mechanisms.

\begin{theorem}\label{thm:impossibility-alpha-0-edge}
    For every $\informedcoeffseller > 0$, and every $c>1$, there exists an information structure where the buyer is totally uninformed and the seller is $\informedcoeffseller$-informed, and no BIC and interim IR mechanism provides an approximation ratio better than $c$.
\end{theorem}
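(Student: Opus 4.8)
The plan is to obtain this statement as a corollary of the first part of Theorem~\ref{thm:uninformed-buyer-impossibility}. That part already hands us, for every $k,m\in\mathbb{N}$ with $k\geq 4$ and $m\geq 5$, an information structure $\mathcal{I}_{k,m,\informedcoeffseller,\informedcoeffbuyer}$ in which the seller is $\informedcoeffseller$-informed and the buyer is $\informedcoeffbuyer$-informed, together with an explicit lower bound on the approximation ratio of any BIC and interim IR mechanism. The only move needed is to specialize this family to $\informedcoeffbuyer=0$ (a totally uninformed buyer), keep the prescribed $\informedcoeffseller>0$, and then let $k$ and $m$ grow.

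Concretely, setting $\informedcoeffbuyer=0$ forces $\informceoffbuyer=\informedcoeffbuyer/(1-\informedcoeffbuyer)=0$, so the bound from Theorem~\ref{thm:uninformed-buyer-impossibility} collapses to
\[
\frac{1}{\tfrac{1}{k}(1+\informceoff)+\tfrac{4}{m}+\tfrac{2}{k^{2}}\informceoff},
\]
where $\informceoff=(1-\informedcoeffseller)/\informedcoeffseller$ is a finite constant since $\informedcoeffseller>0$. Every term in the denominator is a fixed multiple of $1/k$, $1/m$, or $1/k^{2}$, so the denominator vanishes as $k,m\to\infty$. Taking for instance $m=k^{2}$ and $k=\max\{4,\ceil{2c(1+\informceoff)}\}$, and using $4+2\informceoff\leq k(1+\informceoff)$ for $k\geq 4$, one checks that the denominator is at most $1/c$, so the lower bound is at least $c$. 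Since $k\geq 4$ and $m=k^{2}\geq 16\geq 5$, the hypotheses of Theorem~\ref{thm:uninformed-buyer-impossibility} are met, and we get an $(\informedcoeffseller,0)$-information structure on which no BIC and interim IR mechanism beats $c$.

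I expect essentially no hard step here, since all the real work --- constructing the instance and proving the allocation bound that drives the impossibility --- is inside the first part of Theorem~\ref{thm:uninformed-buyer-impossibility}, which is deferred to the appendix. The only points that warrant care are (i) checking that the construction behind that theorem genuinely allows $\informedcoeffbuyer=0$ (its statement requires only $\informedcoeffbuyer<1$, but one should verify the argument never divides by $\informedcoeffbuyer$), and (ii) making the parameter choice honor both $k\geq 4$ and $m\geq 5$ while ensuring the $\informedcoeffseller$-dependent constant $\informceoff$, which blows up as $\informedcoeffseller\to 0$, is still dominated by the growth of $k$ for the fixed $\informedcoeffseller$ and $c$ we are given. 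Both are routine once the explicit bound is available.
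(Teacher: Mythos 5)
Your proposal is correct and follows essentially the same route as the paper: apply the first part of Theorem~\ref{thm:uninformed-buyer-impossibility} with $\informedcoeffbuyer=0$ (so $\informceoffbuyer=0$) and choose $k,m$ large enough relative to $\informceoff$ and $c$; the paper uses $m=4k$ and $k=\lceil 4c(3\informceoff+2)\rceil$ while you use $m=k^2$ and $k=\max\{4,\lceil 2c(1+\informceoff)\rceil\}$, but both parameter choices are valid and the arithmetic checks out.
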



To prove the theorem, we apply the first part of Theorem~\ref{thm:uninformed-buyer-impossibility}. Fix $\informedcoeffseller >0$, and $c>1$. Let $\informedcoeffbuyer=0, \, \sellerratioinformed = \frac{1-\informedcoeffseller}{\informedcoeffseller}, \,\informceoffbuyer =\frac{\informedcoeffbuyer}{1-\informedcoeffbuyer} =0, \, k=\lceil 4c(3\sellerratioinformed +2) \rceil, \,  m= 4k$.
By the first part, there is an information structure $\mathcal{I}_{k,m, \informedcoeffseller, \informedcoeffbuyer}$ where the buyer is fully uninformed and the seller is $\informedcoeffseller$-informed, and no BIC and interim IR mechanism has an approximation ratio better than:
    \begin{equation*}
\frac{\informceoffbuyer+1}{\frac{1}{k}(1+ \informceoff)+ \informceoffbuyer  +\frac{4}{m}+\informceoffbuyer\cdot \frac{2}{k}+\frac{2}{k}\frac{1}{k}\cdot\informceoff}= \frac{1}{\frac{1}{k}(1+ \informceoff)+ \frac{1}{k}+\frac{2}{k}\frac{1}{k}\cdot\informceoff}\geq \frac{1}{\frac{1}{k}(2+ \informceoff +2\informceoff)} > c.
    \end{equation*}

\section{The Interior of the Square}\label{sec-interior}

Up until now, we only discussed information structures on the edges of Figure \ref{fig:intro-information-rectangle}, i.e., we had at least one player that was informed, or at least one player that was uninformed. In this section we consider information structures where both players are partially but not fully informed. That is, in this section, we study some areas in the interior of the square depicted in Figure \ref{fig:intro-information-rectangle}. The main take-home message of this section is that the positive results we achieved in the previous section are very sensitive. They require the seller to be either fully informed or fully uninformed. If the seller gains or loses even a tiny bit of information, the possible approximation ratio deteriorates significantly.

\begin{proposition}\label{prop:left-corner-lb}
    For every $\informedcoeffseller > 0$ and $\informedcoeffbuyer < 1$, there exists an $(\informedcoeffseller,\informedcoeffbuyer)$-information structure where no BIC and interim IR mechanism can provide an approximation ratio better than $\frac{1}{2\informedcoeffbuyer}$.
\end{proposition}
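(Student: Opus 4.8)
The plan is to obtain Proposition~\ref{prop:left-corner-lb} as a corollary of the first part of Theorem~\ref{thm:uninformed-buyer-impossibility}, which already supplies, for every $k\geq 4$, $m\geq 5$ and every admissible pair with $\informedcoeffseller>0$, $\informedcoeffbuyer<1$, an explicit $(\informedcoeffseller,\informedcoeffbuyer)$-information structure $\mathcal{I}_{k,m,\informedcoeffseller,\informedcoeffbuyer}$ on which no BIC and interim IR mechanism beats the stated ratio. So no new construction is required; the only work is to choose $k$ and $m$ appropriately.

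First I would dispose of the trivial regimes. If $\informedcoeffbuyer\geq \tfrac12$ then $\tfrac{1}{2\informedcoeffbuyer}\leq 1$ and the claim is immediate, since every mechanism has approximation ratio at least $1$; and if $\informedcoeffbuyer=0$ the statement is precisely Theorem~\ref{thm:impossibility-alpha-0-edge}. So it remains to treat $\informedcoeffbuyer\in(0,\tfrac12)$, where I record the identity $\frac{\informedcoeffbuyer}{1-\informedcoeffbuyer}+1=\frac{1}{1-\informedcoeffbuyer}$. Fix $\informedcoeffseller>0$ and $\informedcoeffbuyer\in(0,\tfrac12)$, abbreviate $\lambda=\frac{\informedcoeffbuyer}{1-\informedcoeffbuyer}\in(0,1)$ and $\eta=\frac{1-\informedcoeffseller}{\informedcoeffseller}\geq 0$, and apply Theorem~\ref{thm:uninformed-buyer-impossibility}(1): it yields an $(\informedcoeffseller,\informedcoeffbuyer)$-information structure $\mathcal{I}_{k,m,\informedcoeffseller,\informedcoeffbuyer}$ for which no BIC and interim IR mechanism attains a ratio below
$$
\frac{\lambda+1}{\tfrac{1}{k}(1+\eta)+\lambda+\tfrac{4}{m}+\lambda\cdot\tfrac{2}{k}+\tfrac{2}{k^2}\,\eta}.
$$

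The key observation is that, with $\informedcoeffseller,\informedcoeffbuyer$ (hence $\lambda,\eta$) held fixed, every term in the denominator except $\lambda$ vanishes as $k,m\to\infty$, so the denominator can be forced below $2\lambda$. Concretely I would take $k=\lceil 8(1+\eta)/\lambda\rceil+8$ and $m=\lceil 16/\lambda\rceil+5$ (so that $k\geq 4$, $m\geq 5$), and verify the routine inequality $\tfrac{1}{k}(1+\eta)+\tfrac{4}{m}+\lambda\cdot\tfrac{2}{k}+\tfrac{2}{k^2}\eta\leq\lambda$ — each of the four summands is at most a small fixed fraction of $\lambda$ under these choices. Then the denominator is at most $2\lambda$, so the lower bound is at least $\frac{\lambda+1}{2\lambda}=\frac{1/(1-\informedcoeffbuyer)}{2\informedcoeffbuyer/(1-\informedcoeffbuyer)}=\frac{1}{2\informedcoeffbuyer}$, which is exactly the claimed bound on $\mathcal{I}_{k,m,\informedcoeffseller,\informedcoeffbuyer}$.

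There is no genuine obstacle here: the proposition is essentially bookkeeping on top of Theorem~\ref{thm:uninformed-buyer-impossibility}, and the only thing to watch is that the factor $2$ in the target $\tfrac{1}{2\informedcoeffbuyer}$ is precisely the slack needed to absorb the $O(1/k)+O(1/m)$ error terms while allowing $k$ to grow like $1/\lambda$ as $\informedcoeffbuyer\to 0$. If one prefers a lighter write-up, one can simply say ``choose $k,m$ large enough'' and invoke the limiting value $\frac{\lambda+1}{\lambda}=\frac{1}{\informedcoeffbuyer}>\frac{1}{2\informedcoeffbuyer}$ rather than pinning down explicit constants.
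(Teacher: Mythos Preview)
Your proposal is correct and follows essentially the same approach as the paper: both invoke Theorem~\ref{thm:uninformed-buyer-impossibility}(1) and choose $k,m$ large enough (depending on $\informedcoeffseller,\informedcoeffbuyer$) so that the error terms in the denominator are dominated by $\lambda=\tfrac{\informedcoeffbuyer}{1-\informedcoeffbuyer}$, yielding the bound $\tfrac{\lambda+1}{2\lambda}=\tfrac{1}{2\informedcoeffbuyer}$. The paper picks $k=\lceil(3\eta+4\lambda+5)/\lambda\rceil$, $m=k^2$ and does a single chain of inequalities without separating out the trivial regimes $\informedcoeffbuyer\geq\tfrac12$ and $\informedcoeffbuyer=0$, but the substance is identical.
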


Fix some $\beta>0$ and recall that for $(0,\beta)$-information structures we get an approximation ratio of $\frac 1 {1-\beta}$ (Theorem~\ref{thm:seller-uninformed-possibility}). The proposition shows that if the buyer has a very small amount of information, then the approximation ratio deteriorates significantly. That is, even if $\alpha>0$ is very small, the possible approximation ratio for $(\alpha,\beta)$-information structures is $O(\frac 1 \beta)$, which goes to $\infty$ as $\beta$ goes to $0$. Returning to Figure \ref{fig:intro-information-rectangle}, we get that none of the points in the bottom-left corner admit a good approximation ratio except for those that are on the edge $(0,\beta)$.


\begin{proof}[Proof of Proposition~\ref{prop:left-corner-lb}]
    We apply the first part of Theorem~\ref{thm:uninformed-buyer-impossibility} with the parameters $k=\lceil \frac{3\informceoff+4\informceoffbuyer+5}{\informceoffbuyer} \rceil$ and $m=k^2$, where  $\informceoffbuyer =\frac{\informedcoeffbuyer}{1-\informedcoeffbuyer}$ and $ \sellerratioinformed = \frac{1-\informedcoeffseller}{\informedcoeffseller}$. This implies the existence of an information structure $\mathcal{I}_{k,m,\informedcoeffseller, \informedcoeffbuyer}$ for which no BIC and interim IR mechanism can have an approximation ratio better than:

\begin{equation*}
\frac{\informceoffbuyer+1}{\frac{1}{k}(1+ \informceoff)+ \informceoffbuyer  +\frac{4}{m}+\informceoffbuyer\cdot \frac{2}{k}+\frac{2}{k^2}\cdot\informceoff}\geq \frac{\informceoffbuyer+1}{\frac{1}{k}(5+ 3\informceoff +2\informceoffbuyer) + \informceoffbuyer} \geq \frac{\informceoffbuyer +1}{2\informceoffbuyer} = \frac{1}{2\informedcoeffbuyer}.
\end{equation*}
\end{proof}

The next proposition has similar qualitative applications (but with somewhat worse parameters). The approximation ratio that we get for $(1,\beta)$-information structures is $O(\frac 1 \beta)$ (Theorem~\ref{thm:possibility-seller-informed}). The proposition shows that if the seller is only somewhat less informed, then the approximation ratio significantly deteriorates: for values of $\alpha$ that approach $1$ but are still strictly smaller to $1$, the approximation ratio approaches infinity. We get that the only points in the top right corner of Figure \ref{fig:intro-information-rectangle} that admit a good approximation are those on the edge $(1,\beta)$.

\begin{proposition}\label{prop:right-corner-lb}
    For every $\informedcoeffseller \in (0.9,1)$ and $ \informedcoeffbuyer \in [1-(1-\informedcoeffseller)^3,1) $, there exists an $(\informedcoeffseller,\informedcoeffbuyer)$-information structure where no BIC and interim IR mechanism can provide an approximation ratio better than $\frac{0.15}{1-\informedcoeffseller}$.
 \end{proposition}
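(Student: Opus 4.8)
The plan is to derive this from the first part of Theorem~\ref{thm:uninformed-buyer-impossibility}, exactly as Proposition~\ref{prop:left-corner-lb} was derived but now keeping the seller partially informed rather than fully informed. Recall that the bound supplied by that theorem for the instance $\mathcal{I}_{k,m,\informedcoeffseller,\informedcoeffbuyer}$ is
\[
\frac{\informceoffbuyer+1}{\frac{1}{k}(1+\informceoff)+\informceoffbuyer+\frac{4}{m}+\informceoffbuyer\cdot\frac{2}{k}+\frac{2}{k^2}\cdot\informceoff},
\]
where $\informceoffbuyer=\frac{\informedcoeffbuyer}{1-\informedcoeffbuyer}$ and $\informceoff=\frac{1-\informedcoeffseller}{\informedcoeffseller}$. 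The difference from Proposition~\ref{prop:left-corner-lb} is that now $\informedcoeffseller<1$, so $\informceoff>0$; the key observation is that the hypothesis $\informedcoeffseller>0.9$ makes $\informceoff=\frac{1-\informedcoeffseller}{\informedcoeffseller}<\frac{1-\informedcoeffseller}{0.9}$ small, and the hypothesis $\informedcoeffbuyer\ge 1-(1-\informedcoeffseller)^3$ makes $\informceoffbuyer=\frac{\informedcoeffbuyer}{1-\informedcoeffbuyer}\ge \frac{1-(1-\informedcoeffseller)^3}{(1-\informedcoeffseller)^3}$ large --- at least on the order of $(1-\informedcoeffseller)^{-3}$, hence much larger than $\informceoff$.

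First I would choose the free parameters $k$ and $m$. As in the previous proof, I would take $m=k^2$ (so $\frac{4}{m}=\frac{4}{k^2}$) and pick $k$ large enough --- of order $\frac{1+\informceoff}{\informceoffbuyer}$, e.g. $k=\left\lceil\frac{C(1+\informceoff)}{\informceoffbuyer}\right\rceil$ for a small absolute constant $C$ --- so that every term in the denominator other than $\informceoffbuyer$ itself is at most a small fraction of $\informceoffbuyer$. Concretely, $\frac{1}{k}(1+\informceoff)\le\frac{\informceoffbuyer}{C}$, and since $k\ge 1$ the remaining terms $\frac{4}{k^2}$, $\frac{2\informceoffbuyer}{k}$, $\frac{2\informceoff}{k^2}$ are each bounded by a constant multiple of $\frac{\informceoffbuyer}{k}\le\frac{\informceoffbuyer}{C}$ (using $\informceoff\le\informceoffbuyer$, which holds because $\informedcoeffbuyer\ge1-(1-\informedcoeffseller)^3$ forces $\informceoffbuyer\ge\informceoff$ whenever $\informedcoeffseller>0.9$ --- a short inequality to check). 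Then the whole denominator is at most $\informceoffbuyer(1+O(1/C))$, so the bound is at least $\frac{\informceoffbuyer+1}{\informceoffbuyer(1+O(1/C))}\ge\frac{1}{1+O(1/C)}\cdot\frac{1}{1-\informedcoeffbuyer}$; choosing $C$ appropriately and then weakening $\frac{1}{1-\informedcoeffbuyer}$ to $\frac{(1-\informedcoeffseller)^{-3}}{\cdots}$ via $\informedcoeffbuyer\ge1-(1-\informedcoeffseller)^3$ gives a bound of the form $\frac{\text{const}}{1-\informedcoeffseller}$. I would then track the constants carefully to land on $\frac{0.15}{1-\informedcoeffseller}$, which is what the statement asks for; the numerical slack in the hypotheses ($\informedcoeffseller>0.9$, the cube in the exponent) is exactly what provides room for this.

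The main obstacle is bookkeeping rather than conceptual: one must verify that with the chosen $k$ the denominator really is dominated by $\informceoffbuyer$ by a large enough margin that the surviving constant is $\ge 0.15$, and that $k,m$ satisfy the side constraints of Theorem~\ref{thm:uninformed-buyer-impossibility} (namely $k\ge 4$, $m\ge 5$), which follows from $\informedcoeffseller>0.9$ and $\informedcoeffbuyer$ close to $1$ making $k$ large. I expect the proof to be a two-line display analogous to the one in Proposition~\ref{prop:left-corner-lb}: substitute the parameters, bound the denominator termwise, and simplify.

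\begin{proof}[Proof of Proposition~\ref{prop:right-corner-lb}]
We apply the first part of Theorem~\ref{thm:uninformed-buyer-impossibility} with $m=k^2$ and $k=\left\lceil\frac{40(1+\informceoff)}{\informceoffbuyer}\right\rceil$, where $\informceoffbuyer=\frac{\informedcoeffbuyer}{1-\informedcoeffbuyer}$ and $\informceoff=\frac{1-\informedcoeffseller}{\informedcoeffseller}$. Since $\informedcoeffseller>0.9$ we have $\informceoff<\frac{1-\informedcoeffseller}{0.9}$, and since $\informedcoeffbuyer\ge 1-(1-\informedcoeffseller)^3$ we have $\informceoffbuyer\ge\frac{1-(1-\informedcoeffseller)^3}{(1-\informedcoeffseller)^3}\ge\frac{1}{(1-\informedcoeffseller)^3}-1$; in particular $\informceoffbuyer\ge\informceoff$ and $k\ge 40$, so the side constraints $k\ge 4$, $m\ge 5$ hold. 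With these choices, $\frac{1}{k}(1+\informceoff)\le\frac{\informceoffbuyer}{40}$, and using $k\ge 40$, $m=k^2$, and $\informceoff\le\informceoffbuyer$ one gets $\frac{4}{m}+\informceoffbuyer\cdot\frac{2}{k}+\frac{2}{k^2}\cdot\informceoff\le\frac{\informceoffbuyer}{40}$. Hence there is an information structure $\mathcal{I}_{k,m,\informedcoeffseller,\informedcoeffbuyer}$ for which no BIC and interim IR mechanism can have an approximation ratio better than
\[
\frac{\informceoffbuyer+1}{\frac{1}{k}(1+\informceoff)+\informceoffbuyer+\frac{4}{m}+\informceoffbuyer\cdot\frac{2}{k}+\frac{2}{k^2}\cdot\informceoff}\geq\frac{\informceoffbuyer+1}{\informceoffbuyer\left(1+\frac{1}{20}\right)}\geq\frac{1}{1+\frac{1}{20}}\cdot\frac{1}{1-\informedcoeffbuyer}\geq\frac{0.9}{1.05\cdot(1-\informedcoeffseller)}\geq\frac{0.15}{1-\informedcoeffseller},
\]
where the third inequality uses $\frac{\informceoffbuyer+1}{\informceoffbuyer}=\frac{1}{1-\informedcoeffbuyer}$ and the fourth uses $1-\informedcoeffbuyer\le(1-\informedcoeffseller)^3\le(1-\informedcoeffseller)$ together with $(1-\informedcoeffseller)^2\le 1$.
\end{proof}
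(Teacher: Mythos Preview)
Your approach invokes the wrong theorem, and the error is structural rather than cosmetic. The bound supplied by Theorem~\ref{thm:uninformed-buyer-impossibility} always carries a lone $\informceoffbuyer$ term in its denominator, so for \emph{any} choice of $k,m$ the expression is at most
\[
\frac{\informceoffbuyer+1}{\informceoffbuyer}\;=\;\frac{1}{\informedcoeffbuyer}.
\]
In the regime of this proposition $\informedcoeffbuyer\ge 1-(1-\informedcoeffseller)^3>0.999$, so this cap is barely above $1$ and can never reach $\frac{0.15}{1-\informedcoeffseller}$, which grows without bound as $\informedcoeffseller\to 1$. Your final display hides this via the false identity ``$\frac{\informceoffbuyer+1}{\informceoffbuyer}=\frac{1}{1-\informedcoeffbuyer}$'' (the correct value is $\frac{1}{\informedcoeffbuyer}$, not $\frac{1}{1-\informedcoeffbuyer}$). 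Your claim $k\ge 40$ is likewise wrong: since $\informceoffbuyer$ is of order $(1-\informedcoeffseller)^{-3}$ while $1+\informceoff<2$, the choice $k=\lceil 40(1+\informceoff)/\informceoffbuyer\rceil$ equals $1$, so even the side constraint $k\ge 4$ of the theorem fails.

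The paper instead applies Theorem~\ref{thm:buyer-informed-inpossibility}. Its denominator contains the terms $\frac{1}{k(1-\informedcoeffseller)}$, $\frac{1-\informedcoeffbuyer}{\informedcoeffbuyer}$, $\frac{2k}{m}$, and $2k\cdot\frac{1-\informedcoeffbuyer}{\informedcoeffbuyer}$, none of which is bounded away from zero when $\informedcoeffbuyer$ is close to $1$. Taking $k=\lceil (1-\informedcoeffseller)^{-2}\rceil$ and $m=20k\lceil(1-\informedcoeffseller)^{-1}\rceil$ makes each of these terms $O(1-\informedcoeffseller)$, after which the numerator $\frac{1}{\informedcoeffbuyer}\ge 1$ delivers the required $\Omega\!\bigl(\tfrac{1}{1-\informedcoeffseller}\bigr)$ lower bound.
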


\begin{proof}[Proof of Proposition~\ref{prop:right-corner-lb}]
We apply the first part of Theorem~\ref{thm:buyer-informed-inpossibility} with the parameters $k=\lceil \left(\frac{1}{1-\informedcoeffseller}\right)^2 \rceil$ and $m= 20k\cdot \lceil \frac{1}{1-\informedcoeffseller} \rceil$. This implies the existence of an information structure $\mathcal{I}_{k,m,\informedcoeffseller, \informedcoeffbuyer}$ for which no BIC and interim IR mechanism can have an approximation ratio better than:

     \begin{equation*}
        \frac{\frac{1}{\informedcoeffbuyer}}{\frac{1}{k}\frac{1}{1-\informedcoeffseller} + \frac{1-\informedcoeffbuyer}{\informedcoeffbuyer} + 2k\frac{1}{m}+2k\frac{1-\informedcoeffbuyer}{\informedcoeffbuyer}} \geq \frac{1}{(1-\informedcoeffseller) + \frac{1-\informedcoeffseller}{\informedcoeffbuyer} + \frac{(1-\informedcoeffseller)}{10}+2(\left(\frac{1}{1-\informedcoeffseller}\right)^2 +1)\frac{1-\informedcoeffbuyer}{\informedcoeffbuyer}} \geq \frac{1}{(1-\informedcoeffseller)(1+\frac{1}{0.9}+\frac{1}{10} + \frac{4}{0.9})} > \frac{0.15}{1-\informedcoeffseller}
    \end{equation*}
where the inequalities hold since $\informedcoeffbuyer \geq  1-(1-\informedcoeffseller)^3$, and so $(1-\informedcoeffbuyer) \leq (1-\informedcoeffseller)^3$, and $\informedcoeffbuyer \geq \informedcoeffseller \geq 0.9$.
\end{proof}


\section{Information Structures with Polynomial Functions}\label{sec:poly-function}


In this section, we consider the case where the signals are independently sampled from a uniform distribution on $[0,1]$, and that both agents' valuations are polynomials functions of both signals (i.e., $\buyervaluationbuyerfunc{}(\signal{b}), \buyervaluationsellerfunc{}(\signal{s}), \sellervaluationsellerfunc{} (\signal{s})$, and  $\sellervaluationbuyerfunc{}(\signal{b})$ are all polynomials):

\begin{subequations}\label{equ:valuations-linear}
\begin{equation}\label{equ:linear-functions-buyer-val}
    v_b(\signal{b}, \signal{s}) = \sum_{i=1}^{k} a_i\cdot\signal{b}^i + \sum_{i=1}^k b_i \cdot \signal{s}^i + \constantbuyerval
\end{equation}

\begin{equation}\label{equ:linear-functions-seller-val}
    v_s(\signal{b}, \signal{s}) = \sum_{i=1}^{k} c_i\cdot\signal{s}^i + \sum_{i=1}^k d_i \cdot \signal{b}^i + \constantsellerval
\end{equation}
\end{subequations}

We show that there is a mechanism that achieves an approximation ratio of $O(k^2)$ (Subsection \ref{subsec-polynomial-mechanism}), and that every mechanism achieves an approximation ratio of $\Omega(k)$ (Subsection \ref{subsec-polynomial-impossiblity}). In particular, for linear functions, we can guarantee a constant approximation ratio.

\subsection{An $O(k^2)$ Approximation Mechanism}\label{subsec-polynomial-mechanism}

\begin{theorem}\label{thm:polynomial-degree-squared}
    Suppose that $v_s,v_b$ are polynomials of maximum degree $k$, and that the signals are independently drawn from a uniform distribution over $[0,1]$. Then, there exists a BIC and interim IR mechanism that guarantees an approximation ratio of $O(k^2)$. In particular, when $v_s,v_b$ are linear functions, the approximation ratio is constant.
\end{theorem}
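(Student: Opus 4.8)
The plan is to reduce the theorem to a single structural fact about polynomials plus a short two-case posted-price argument, with the factor $k^2$ coming entirely from that structural fact. Using the freedom in the decomposition (shifting a constant between $\buyervaluationbuyerfunc{}$ and $\buyervaluationsellerfunc{}$, and between $\sellervaluationsellerfunc{}$ and $\sellervaluationbuyerfunc{}$, as discussed after Figure~\ref{fig:intro-information-rectangle}), I would first assume all four functions $\buyervaluationbuyerfunc{},\buyervaluationsellerfunc{},\sellervaluationsellerfunc{},\sellervaluationbuyerfunc{}$ are non-negative polynomials of degree at most $k$ on $[0,1]$; this is possible because the valuations are non-negative, so for instance $\min_{[0,1]}\buyervaluationbuyerfunc{}+\min_{[0,1]}\buyervaluationsellerfunc{}=\min v_b\ge 0$, and one can transfer $\min_{[0,1]}\buyervaluationbuyerfunc{}$ from $\buyervaluationbuyerfunc{}$ into $\buyervaluationsellerfunc{}$ (similarly for the seller). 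Write $A=\E[\buyervaluationbuyerfunc{}(\signal{b})]$, $B=\E[\buyervaluationsellerfunc{}(\signal{s})]$, $C=\E[\sellervaluationsellerfunc{}(\signal{s})]$, $D=\E[\sellervaluationbuyerfunc{}(\signal{b})]$, so that $\E[v_b]=A+B$, $\E[v_s]=C+D$, and $\mathrm{OPT}=\E[\max(v_b,v_s)]\le (A+B)+(C+D)$.

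The only substantive ingredient is the following: for a non-negative polynomial $p$ of degree at most $k\ge 1$ on $[0,1]$, $\|p\|_\infty\le 8k^2\,\E_{x\sim U[0,1]}[p(x)]$. I would prove this from the Markov brothers' inequality: rescaled to $[0,1]$ it gives $\|p'\|_\infty\le 2k^2\|p\|_\infty$, so if $p$ attains its maximum at $x_0\in[0,1]$ then $p(x)\ge\|p\|_\infty/2$ for every $x$ with $|x-x_0|\le \tfrac{1}{4k^2}$; such $x$ form a subset of $[0,1]$ of measure at least $\tfrac{1}{4k^2}$, and integrating yields the claim (for $k=0$ all valuations are deterministic and the theorem is trivial). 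In particular, setting $\kappa:=\|\sellervaluationsellerfunc{}\|_\infty+\|\sellervaluationbuyerfunc{}\|_\infty=\max_{\signal{b},\signal{s}}v_s(\signal{b},\signal{s})$, we get $\kappa\le 8k^2(C+D)$, while trivially $C+D\le\kappa$.

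I would then split into two cases. \textbf{Case A: $\kappa\ge (A+B)/2$.} Then $C+D\ge \kappa/(8k^2)\ge (A+B)/(16k^2)$, so the mechanism that never trades has welfare $C+D$ and approximation ratio $\big((A+B)+(C+D)\big)/(C+D)\le 1+16k^2=O(k^2)$. \textbf{Case B: $\kappa< (A+B)/2$.} Then $C+D\le\kappa<(A+B)/2$, hence $\mathrm{OPT}<\tfrac32(A+B)$. I would post the price $p=\kappa$. "Always accept" is weakly dominant for the seller: for any $\signal{s}$ and any buyer strategy, the seller's conditional expected value given that the buyer accepts is at most $\|\sellervaluationsellerfunc{}\|_\infty+\|\sellervaluationbuyerfunc{}\|_\infty=\kappa=p$, so accepting never gives negative utility; this also yields interim IR and DSIC for the seller. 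Since the seller accepts regardless of $\signal{s}$, the event "seller accepts" is uninformative, so a BIC buyer accepts exactly when $\buyervaluationbuyerfunc{}(\signal{b})+B\ge p$, i.e. $\buyervaluationbuyerfunc{}(\signal{b})\ge \kappa-B$. As $\kappa<(A+B)/2$ we have $\kappa-B<(A-B)/2$. If $\kappa\le B$ the buyer always accepts, the item is always traded, the welfare is $\E[v_b]=A+B$, and the ratio is at most $\tfrac32$. Otherwise $\kappa>B$, which forces $A>B$ and $0<\kappa-B<A/2$; then the buyer accepts whenever $\buyervaluationbuyerfunc{}(\signal{b})\ge A/2$, so the welfare is at least $\E[\buyervaluationbuyerfunc{}(\signal{b})\cdot\mathbf{1}\{\buyervaluationbuyerfunc{}(\signal{b})\ge A/2\}]\ge \E[\buyervaluationbuyerfunc{}(\signal{b})]-A/2=A/2$, while $\mathrm{OPT}<\tfrac32(A+B)<3A$ since $A>B$, giving ratio at most $6$. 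Thus Case B is $O(1)$, Case A is $O(k^2)$, and for $k=1$ every bound is an absolute constant.

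The genuinely non-routine step is the structural lemma: recognizing that the Markov brothers' inequality is exactly what controls the belief-updating blow-up that dooms generic interior information structures (the hard instances of Propositions~\ref{prop:left-corner-lb} and~\ref{prop:right-corner-lb} use exponential, not polynomial, functions). Everything else is elementary; the one point requiring a little care is that, because monotonicity is not assumed, the buyer's acceptance region $\{\signal{b}:\buyervaluationbuyerfunc{}(\signal{b})\ge\kappa-B\}$ need not be an interval — but the argument only uses that it contains $\{\signal{b}:\buyervaluationbuyerfunc{}(\signal{b})\ge A/2\}$, together with the trivial inequality $\E[\buyervaluationbuyerfunc{}\cdot\mathbf{1}\{\buyervaluationbuyerfunc{}\ge A/2\}]\ge\E[\buyervaluationbuyerfunc{}]-A/2$.
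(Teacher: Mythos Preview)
Your proof is correct and takes a genuinely different route from the paper's.

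\textbf{What the paper does.} The paper sets $\gamma=(k+1)^2$ and posts the price $p=\E[v_b]/(k+1)$. Its structural fact is elementary: from $v_s(1,1)=\sum c_i+\sum d_i+e_s\le (k+1)\bigl(\sum c_i/(i+1)+\sum d_i/(i+1)+e_s\bigr)=(k+1)\E[v_s]$ it concludes $\max v_s\le (k+1)\E[v_s]$, and then splits further according to which term dominates $\E[v_b]$. In the ``trade'' branch the paper only gets welfare $\ge \tfrac{3}{4}\E[v_b]/(k+1)$, i.e.\ an $O(k)$ ratio there, and the overall $O(k^2)$ comes from the interplay with the $\gamma=(k+1)^2$ threshold in the no-trade branch. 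Note that the paper's coefficient-wise inequality tacitly uses $c_i,d_i\ge 0$ (otherwise $\sum c_i\le(k+1)\sum c_i/(i+1)$ can fail), so its $\max v_s\le(k+1)\E[v_s]$ step leans on more structure than pure monotonicity.

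\textbf{What you do differently and what it buys.} Your key lemma is the $\|p\|_\infty\le 8k^2\,\E[p]$ bound for non-negative polynomials, obtained from the Markov brothers' inequality; this needs only $p\ge 0$, not monotonicity or sign conditions on coefficients, which is exactly why your argument goes through even without the monotonicity assumption (matching the paper's footnote that all positive results hold without it). Your posted price $p=\kappa=\max v_s$ makes ``always accept'' a dominant strategy for the seller outright, so your Case~B collapses to a clean $O(1)$ analysis with no residual factor of $k$; the entire $k^2$ lives in Case~A. The paper's approach, by contrast, avoids invoking Markov brothers' and is completely self-contained when coefficients are non-negative, at the cost of a more intricate two-level case split in the trade branch and an extra $k$ factor there. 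In short: your route is shorter and more robust to the shape of the polynomials; the paper's route is more elementary but relies on the specific coefficient structure.
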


\begin{proof}
We divide the analysis into two cases. In the first case, the expected value of the seller is large enough in comparison to the expected value of the buyer, so the mechanism that never trades the item achieves a good approximation ratio. In the second case, the expected value of the seller is small relative to the buyer's, and so the item must sometimes be traded in order to achieve a good approximation ratio.

Let $\applinearproof = (k+1)^2$. We show that in both cases the approximation ratio is at most $ 1+ \applinearproof$. In the first case, we have that $\mathbb E[v_s] \geq \frac{\mathbb E[v_b]}{\applinearproof}$. Since the optimal welfare is at most $\mathbb E[v_s] + \mathbb E[v_b]$, the approximation ratio of the mechanism that never trades the item is at most $\frac{\mathbb E[v_s] + \mathbb E[v_b]}{\mathbb E[v_s]} \leq 1+ \applinearproof$.

For the remainder of the proof we assume that $\mathbb E[v_s] < \frac{\mathbb E[v_b]}{\applinearproof}$. By Equation~\eqref{equ:valuations-linear} and the assumption that the signals are drawn from a uniform distribution over $[0,1]$, this inequality can be expressed as:

\begin{align}\label{equ:linear-case-of-trade}
    \sum_{i=1}^k \frac{c_i}{i+1} + \sum_{i=1}^k \frac{d_i}{i+1} + \constantsellerval &< \frac{\sum_{i=1}^k \frac{a_i}{i+1} + \sum_{i=1}^k \frac{b_i}{i+1} + \constantbuyerval}{\applinearproof} \nonumber
    \\ \iff \nonumber\\
    \sum_{i=1}^k c_i\frac{(k+1)}{i+1} + \sum_{i=1}^k d_i\frac{(k+1)}{i+1} + (k+1)\cdot \constantsellerval & < \frac{\sum_{i=1}^k a_i\frac{(k+1)}{i+1} + \sum_{i=1}^k b_i\frac{(k+1)}{i+1} + (k+1)\cdot \constantbuyerval}{\applinearproof}
\end{align}

\begin{lemma}\label{lemma:equ-strategies-thresholds}
    For a posted price $p$, the players' equilibrium strategies are thresholds $( \thsbuyer,\thsseller)$.Specifically, the seller agrees to trade if and only if $\signal{s} \leq \thsseller$, and the buyer agrees to a trade if and only if $\signal{b} \geq \thsbuyer$.
\end{lemma}

Consider a posted price mechanism with price $p$, by Lemma~\ref{lemma:equ-strategies-thresholds}, the agents follow threshold strategies $(\thsbuyer,\thsseller)$, where, $\thsbuyer$ is the buyer's threshold and $\thsseller$ is the seller's threshold. 
In equilibrium, these strategies are best responses.  Then if the seller's threshold is $\thsseller \in [0,1]$, then the buyer will choose a threshold $\thsbuyer \in [0,1]$ such that Inequality~(\ref{inequ:buyer-stratgy-non-negative}) is satisfied. Similarly, if the buyer's threshold is 
$\thsbuyer \in [0,1]$, the seller will choose a threshold 
$\thsseller \in [0,1]$ such that Inequality~(\ref{inequ:seller-stratgy-non-negative}) is satisfied.

\begin{subequations}\label{ineq:ths-non-negative-profit}
\begin{equation}\label{inequ:buyer-stratgy-non-negative}
    \forall \signal{b} \in [0,1] \text{ s.t. } \signal{b}\geq \thsbuyer: \qquad \mathbb E_{\signal{s}}[v_b(\signal{b}, \signal{s})| \signal{s} \leq \thsseller] \geq p \, \iff \,  
    \sum_{i=1}^k a_i\cdot{\signal{b}}^i + \sum_{i=1}^k \frac{b_i\cdot {\thsseller}^i}{i+1} +\constantbuyerval \geq p
\end{equation}

\begin{equation}\label{inequ:seller-stratgy-non-negative}
   \forall \signal{s} \in [0,1] \text{ s.t. } \signal{s}\leq \thsseller: \qquad  \mathbb E_{\signal{b}}[v_s(\signal{s}, \signal{b})| \signal{b} \geq \thsbuyer] \leq p \, \iff \,  
    \sum_{i=1}^k c_i\cdot{\signal{s}}^i + \sum_{i=1}^k d_i\cdot\frac{ \left(1-{\thsbuyer}^{i+1}\right)}{(i+1)(1-\thsbuyer)} +\constantsellerval \leq p
\end{equation}
\end{subequations}

Recall that we now assume that the expected value of the buyer is much larger than the expected value of the seller (i.e., Inequality~\eqref{equ:linear-case-of-trade} holds). 
We now divide the analysis into two cases, based on which of the terms dominates the expected value of the buyer: (1) $\frac{a_i}{i+1}$ for some $i \in [k]$ or (2) $\sum_{i=1}^k \frac{b_i}{i+1} + \constantbuyerval$. In all cases, we analyze the same posted price mechanism with price $p=\frac{\mathbb E[v_b]}{k+1}$.

In essence, we are dividing into the case where the seller's signal has a greater influence on the buyer's value, or the buyer's signal is a $k$-approximation.  Depending on the case, we can then use this to bound the buyer's valuation for each of $k$ terms in the $a_i$ sum and the second half of the sum:
\begin{align}
    \text{Using case (1)} &&\sum_{i=1}^k\frac{b_i}{i+1} + \constantbuyerval &= \frac{k+1}{k+1} \cdot \left( \sum_{i=1}^k\frac{b_i}{i+1} + \constantbuyerval \right) &\geq \frac{1}{k+1} \left( \sum_{i=1}^k\frac{a_i}{i+1} + \sum_{i=1}^k \frac{b_i}{i+1} + \constantbuyerval \right) \label{eq:case1kterms} \\ 
    \text{Using case (2)} && \max \{\frac{a_i}{i+1} | \, i \in [k]\} &= \frac{k+1}{k+1} \cdot \max \{\frac{a_i}{i+1} | \, i \in [k]\} &>  \frac{1}{k+1} \left( \sum_{i=1}^k\frac{a_i}{i+1} + \sum_{i=1}^k \frac{b_i}{i+1} + \constantbuyerval \right) \label{eq:case2kterms}
\end{align}

\begin{enumerate}
    \item $\max \{\frac{a_i}{i+1} | \, i \in [k]\} \leq  \sum_{i=1}^k \frac{b_i}{i+1} + \constantbuyerval$. In this case, we show that both players accept a trade, resulting in an approximation ratio of at least $1+\frac{1}{\applinearproof}$. First, observe that for $\applinearproof = (k+1)^2$, the seller's maximal value is smaller than the price $p$:
    \begin{align*}
        v_s(x_s,x_b) & \le  v_s(1,1) = \sum_{i=1}^k c_i +\sum_{i=1}^k d_i +\constantsellerval & x_s \leq 1 \\ 
        & \le   (k+1)\cdot\left(\sum_{i=1}^k \frac{c_i}{i+1} +\sum_{i=1}^k \frac{d_i}{i+1} +\constantsellerval\right) & i \leq k \\
        &\leq (k+1)\frac{\mathbb E[v_b]}{\applinearproof} = \frac{\mathbb E[v_b]}{(k+1)} = p.    & \text{by Eq.~\eqref{equ:linear-case-of-trade} and $\gamma = (k+1)^2$}
    \end{align*}
    
    Hence, the seller should accept a trade regardless of the buyer's strategy. The buyer's expected value, when their signal is $\signal{b}$, given that the seller accepts the offer regardless of their signal, is equal to 

    \begin{align*}
    \sum_{i=1}^k a_i\cdot{\signal{b}}^i + \sum_{i=1}^k\frac{b_i}{i+1} + \constantbuyerval & \geq 0 + \sum_{i=1}^k\frac{b_i}{i+1} + \constantbuyerval & x_b \geq 0 \\    
    &\geq \frac{1}{k+1}\left(\sum_{i=1}^k\frac{b_i}{i+1} + \constantbuyerval + \sum_{i=1}^k\frac{a_i}{i+1} \right) & \text{by Eq. \eqref{eq:case1kterms}} \\
    &= \frac{\mathbb E[v_b]}{k+1} = p.
    \end{align*}
    
    Thus, both players accept a trade in equilibrium. 
    \item $\max \{\frac{a_i}{i+1} | \, i \in [k]\} >  \sum_{i=1}^k \frac{b_i}{i+1} + \constantbuyerval$. Let $j =\arg \max \{ \frac{a_i}{i+1} | \, i \in [k] \}$.
     We show that, in this case, the seller's best response is to always accept a trade regardless of the value of the buyer's threshold. We also show that for every threshold strategy of the seller, the buyer's threshold is at most $\left(\frac{1}{j+1}\right)^{\frac{1}{j}}$. I.e.,  the buyer accepts a trade when their signal is $\signal{b} \geq\left(\frac{1}{j+1}\right)^{\frac{1}{j}}$ and potentially even when their signal is lower.

We use monotonicity to lower-bound the buyer's value using just the portion that their own signal contributes to and a single term of the polynomial:
\begin{equation}\label{eq:buyer1termLB}
    v_b(\signal{b}, \signal{s}) 
    \quad = \quad  a_j\cdot \signal{b}^j + \underbrace{\sum_{i=1, i \neq j}^{k} a_i\cdot\signal{b}^i + \sum_{i=1}^k b_i\cdot\signal{s}^i + \constantbuyerval}_{\geq 0} \quad \geq \quad  a_j \cdot \signal{b}^j .
    \end{equation}
     
    Before proving that these are the thresholds in equilibrium, we show that if the players play these strategies, we get a good approximation ratio.
    The welfare of every mechanism that trades the item when the buyer's signal is at least $\left(\frac{1}{j+1}\right)^{\frac{1}{j}}$ is at least:
    \begin{align*}
    \mathbb E[v_b(\signal{b}, \signal{s})&\cdot \indicator_{\left[\signal{b} 
    \geq \left(\frac{1}{j+1}\right)^{\frac{1}{j}}\right]}] 
     \geq \int_{\left(\frac{1}{j+1}\right)^{\frac{1}{j}}}^{1} a_j\cdot x^j \d x  & \text{by Eq. \eqref{eq:buyer1termLB}} \\
    &= \frac{a_j}{j+1} \cdot x^{j+1} \bigg |_{\left(\frac{1}{j+1}\right)^{\frac{1}{j}}}^{1} 
    =  \frac{a_j}{j+1} \cdot \left(1- \left( \frac{1}{j+1}\right)^{\frac{j+1}{j}}\right) \\
    & \geq\frac{a_j}{j+1}\cdot\frac{3}{4} & \text{$\left( \frac{1}{j+1}\right)^{\frac{j+1}{j}} \leq 0.25$ for $j \geq 1$}\\
    &\geq \frac{3}{4} \left( \frac{\sum_{i=1}^k\frac{a_i}{i+1} +\sum_{i=1}^k \frac{b_i}{i+1} + \constantbuyerval}{k+1}\right) = \frac{3}{4}\left(\frac{\mathbb E[v_b]}{k+1}\right) & \text{by Eq. \eqref{eq:case2kterms}} 
    \end{align*}
    Therefore, the approximation ratio in this case is at most:
    $$
    \frac{\mathbb E[v_s] + \mathbb E[v_b]}{\frac{3}{4(k+1)}\mathbb E[v_b]} \leq \frac{4(k+1)}{3}+ \frac{\frac{1}{\applinearproof}\mathbb E[v_b]}{\frac{3}{4(k+1)}\mathbb E[v_b]} = \frac{4(k+1)}{3}(1+\frac{1}{\applinearproof})
    $$
    Next, we prove that for the posted price $p=\frac{\mathbb E[v_b]}{k+1}$, the seller always accepts and the buyer's threshold satisfies $\thsbuyer \geq \left(\frac{1}{j+1}\right)^{\frac{1}{j}}$ in equilibrium. A similar argument to the one used in the first case shows that the maximal value of the seller (when both signals are $1$) is smaller than or equal to the price, and so the seller's best response, regardless of the buyer's threshold, is to always accept the offer.
    By Eq. \eqref{eq:buyer1termLB}, the value of a buyer with signal $\signal{b} \geq  \left(\frac{1}{j+1}\right)^{\frac{1}{j}}$ is at least
    \begin{align*}
        a_j\cdot{\signal{b}}^j \geq 
        a_j\cdot{\left(\frac{1}{j+1}\right)} 
        \geq \frac{1}{k+1}\left(\sum_{i=1}^k \frac{a_i}{i+1} + \sum_{i=1}^k \frac{b_i}{i+1}   +\constantbuyerval\right)
    \end{align*}
    Note that this value is at least the price $p$.
\end{enumerate}

Thus, in each case, the approximation ratio is at most $\max \{ 1+\applinearproof, \,\frac{4(k+1)}{3}\left(1+\frac{1}{\applinearproof}\right) \}$ for $\applinearproof = (k+1)^2$. The approximation ratio is, therefore, at most $1+(k+1)^2$.
\end{proof}

\begin{proof}[Proof of Lemma~\ref{lemma:equ-strategies-thresholds}]
Consider a posted price mechanism with price $p$. Note that if the buyer accepts a trade at price $p$ when their signal is $\thsbuyer$, they will also accept the trade for any higher signal $\signal{b} \geq \thsbuyer$, since the buyer's valuation function is monotonically non-decreasing in their signal.

Similarly, if the seller accepts a trade at price $p$ when their signal is $\thsseller$, they will also agree to trade for any lower signal $\signal{s} \leq \thsseller$. Therefore, an equilibrium strategy profile for the two agents can be described by two thresholds, $(\thsbuyer,\thsseller) \in [0,1]^2$: one threshold for the buyer and one threshold for the seller. A seller accepts a trade if and only if their value is at most $\thsseller$, and the buyer accepts a trade if and only if their value is at least $\thsbuyer$.
\end{proof}




\subsection{An $\Omega(k)$ Impossibility}\label{subsec-polynomial-impossiblity}

\begin{theorem}\label{thm:polynomial-lowerbound}
    For every $k\in \mathbb{N}$, there exist polynomials $v_b,v_s$ of degree $k$ such that no BIC and interim IR mechanism can achieve an approximation ratio better than $k$.
\end{theorem}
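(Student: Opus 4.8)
The plan is to exhibit a single explicit instance in which the seller's information rent provably swallows all possible gains from trade. Take both valuations to depend only on the seller's signal: $v_s(\signal{b},\signal{s}) = \signal{s}^k$ and $v_b(\signal{b},\signal{s}) = k\cdot\signal{s}^k$. These are polynomials of degree $k$, the signals are uniform on $[0,1]$, the seller is fully informed and the buyer is completely uninformed, i.e.\ we sit at the corner $(\informedcoeffseller,\informedcoeffbuyer) = (1,0)$. Since $v_b = k\,v_s \ge v_s$ pointwise, $OPT = \E[\max\{v_b,v_s\}] = \E[v_b] = \tfrac{k}{k+1}$, while the ``never trade'' mechanism is trivially BIC, interim IR and (strongly) budget balanced and attains welfare $\E[v_s] = \tfrac{1}{k+1}$. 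Hence it suffices to prove that \emph{every} BIC and interim IR mechanism trades with probability zero; then its welfare is exactly $\tfrac{1}{k+1}$ and its approximation ratio is exactly $k$.

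So let $(\allocation,\payment)$ be BIC and interim IR, and write $\bar\allocation(\signal{s}) := \E_{\signal{b}}[\allocation(\signal{b},\signal{s})]$ and $\bar\payment(\signal{s}) := \E_{\signal{b}}[\payment(\signal{b},\signal{s})]$; since $v_s,v_b$ do not depend on $\signal{b}$, the welfare, the seller's BIC/IR, and (after integrating over $\signal{b}$) the buyer's IR depend on $(\allocation,\payment)$ only through $\bar\allocation,\bar\payment$, so I would work with the latter throughout. Feeding $v_s(\signal{b},\signal{s})=\signal{s}^k$, $v_s(\signal{b},0)=0$ and $\partial v_s/\partial\signal{s} = k\signal{s}^{k-1}$ into the Myersonian seller payment identity (Lemma~\ref{lemma:seller-payment-formula}) gives $\bar\payment(\signal{s}) = \bar\payment(0) + \signal{s}^k\bar\allocation(\signal{s}) - \int_0^{\signal{s}}\bar\allocation(z)\,k z^{k-1}\,dz$. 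Seller interim IR at $\signal{s}=1$ reads $\bar\payment(1)-\bar\allocation(1)\ge0$, which after substituting the identity at $\signal{s}=1$ becomes $\bar\payment(0) \ge k\int_0^1 \bar\allocation(z)\,z^{k-1}\,dz$. Integrating the identity over $\signal{s}\in[0,1]$ and swapping the order of integration in the double integral yields $\E[\payment] = \int_0^1\bar\payment(\signal{s})\,d\signal{s} \ge (k+1)\int_0^1 \bar\allocation(\signal{s})\,\signal{s}^k\,d\signal{s}$.

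Buyer interim IR, integrated over $\signal{b}$, gives $\E[\payment] \le \E[\allocation\, v_b] = k\int_0^1\bar\allocation(\signal{s})\,\signal{s}^k\,d\signal{s}$. Chaining the last two estimates forces $\int_0^1\bar\allocation(\signal{s})\,\signal{s}^k\,d\signal{s} \le 0$; as the integrand is nonnegative and $\signal{s}^k>0$ for almost every $\signal{s}$, this means $\bar\allocation\equiv0$ a.e., i.e.\ $\allocation\equiv0$ a.e.\ and the item is never traded. Therefore the mechanism's welfare equals $\E[v_s]=\tfrac{1}{k+1}$, and the approximation ratio is $OPT/\tfrac1{k+1} = k$, which proves the theorem.

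The single place where care is needed is the middle paragraph, and the point of the computation is conceptual: because $v_s=\signal{s}^k$ pushes almost all of the seller's value-mass toward $\signal{s}=1$, the seller's interim information rent is a factor $k$ larger than the seller's expected value conditional on trade — exactly enough to cancel the factor-$k$ surplus wired into $v_b = k\,v_s$ — so seller-BIC together with buyer-IR leave no slack for beneficial trade. The ``averaging over $\signal{b}$'' step is routine (one could instead just carry $\E_{\signal{b}}[\cdot]$'s through unchanged, since Lemma~\ref{lemma:seller-payment-formula} is already stated in that form), and it is only there so the one-dimensional Myerson calculus applies verbatim.
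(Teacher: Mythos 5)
Your proposal is correct and uses the same instance ($v_s = \signal{s}^k$, $v_b = k\signal{s}^k$) and the same ingredients (Myersonian seller payment identity, seller IR at $\signal{s}=1$, buyer IR) as the paper's proof. The only difference is presentational: you sandwich $\E[\payment]$ between a seller-side lower bound of $(k+1)\int \signal{s}^k\bar\allocation(\signal{s})\,d\signal{s}$ and a buyer-side upper bound of $k\int \signal{s}^k\bar\allocation(\signal{s})\,d\signal{s}$, whereas the paper substitutes the payment identity directly into the buyer's interim utility and simplifies it to $-\int_0^1 \allocation(\signal{s})\signal{s}^k\,d\signal{s}$; these are the same computation reorganized, and both conclude $\allocation\equiv 0$ a.e.
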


\begin{proof}[Proof of Theorem~\ref{thm:polynomial-lowerbound}]

We prove our impossibility for the following construction $\mathcal{I}_{k}$.  
Let the seller's signal $\signal{s}$ be drawn uniformly from $[0,1]$. Consider the following valuations:

$$
v_s(\signal{s})  = \sellervaluationsellerfunc{}(\signal{s}) = \signal{s}^k \hspace{2cm} \text{and} \hspace{2cm} 
v_b(\signal{s})  = \buyervaluationsellerfunc{}(\signal{s}) = k \signal{s}^k .
$$

Since the buyer does not have a signal in $\mathcal{I}_{k}$, the seller payment formula ~\eqref{lemma:seller-payment-formula} simplifies to the following expression for $\signal{s} \in [0,1]$, where $v'_s(z) = \frac{\partial v_s(\signal{s})}{\partial\signal{s}} |_{\signal{s} = z}$:

\begin{equation}\label{equ:(1,0)-payment-BIC}
\payment(\signal{s}) = \payment(0) + v_s (\signal{s}) \cdot \allocation(\signal{s}) - v_s(0) \cdot \allocation(0) - \int_{0}^{\signal{s}} {\allocation(z) v'_s(z)} \, dz.
\end{equation}

Combining the payment formula ~\eqref{equ:(1,0)-payment-BIC} with the seller interim IR condition, we get for $\signal{s} \in [0,1]$ that:

\begin{equation}\label{ineq:(1,0)-IR-bound}
\payment(\signal{s}) - v_s({\signal{s}}) \cdot \allocation(\signal{s}) \geq 0 \iff \payment(0) - v_s({0}) \cdot \allocation(0) - \int_{0}^{\signal{s}} {\allocation(z) v'_s(z)} \, dz  \geq 0 .
\end{equation}

We now consider the buyer's expected utility:
\begin{align}
    & \mathbb{E}_{\signal{s}}{[\allocation(\signal{s})v_b(\signal{s}) - \payment(\signal{s})]} = \int_{0}^{1}\left[ \allocation(\signal{s}) v_b({\signal{s}})  - \payment(\signal{s}) \right] \, d\signal{s} \nonumber\\
    & = \int_{0}^{1}\left[ \allocation(\signal{s}) \left(v_b({\signal{s}}) - v_s(\signal{s}) \right)  - \payment(0) + v_s(0) \cdot \allocation(0) + \int_{0}^{\signal{s}} {\allocation(z) v'_s(z)} \, dz \right] \, d\signal{s} \quad \tag*{Seller payment formula ~\eqref{equ:(1,0)-payment-BIC}}\nonumber\\
    & \leq \int_{0}^{1}\left[ \allocation(\signal{s}) \left(v_b({\signal{s}}) - v_s(\signal{s}) \right) - \int_{0}^1 {\allocation(z) v'_s(z)} \, dz  + \int_{0}^{\signal{s}} {\allocation(z) v'_s(z)} \, dz \right] \, d\signal{s} \tag*{Seller interim IR ~\eqref{ineq:(1,0)-IR-bound} for \text{$\signal{s} = 1$}} \nonumber\\
    & = \int_{0}^{1}\allocation(\signal{s}) \left(v_b({\signal{s}}) - v_s(\signal{s}) \right) \, d\signal{s} - \int_{0}^{1} \left[ \int_{0}^1{\allocation(z) v'_s(z)} \, dz \right ] d\signal{s} +  \int_{0}^{1} \left[\int_{0}^{\signal{s}} {\allocation(z) v'_s(z)} \, dz \right] \, d\signal{s} \nonumber\\
    & = \int_{0}^{1}\allocation(\signal{s}) \left(v_b({\signal{s}}) - v_s(\signal{s}) \right) \, d\signal{s} - \int_{0}^1{\allocation(z) v'_s(z)} \, dz \cdot \int_{0}^{1} d\signal{s} +  \int_{0}^{1} \left[\int_{z}^1 {\allocation(z) v'_s(z)} \, d\signal{s} \right] \, dz \nonumber\\
    & = \int_{0}^{1}\allocation(\signal{s}) \left(v_b({\signal{s}}) - v_s(\signal{s}) \right) \, d\signal{s} - \int_{0}^1{\allocation(z) v'_s(z)} \, dz + \int_{0}^{1} {(1-z)\allocation(z) v'_s(z)} \, dz \nonumber\\
    & = \int_{0}^{1}\allocation(\signal{s}) \left(v_b({\signal{s}}) - v_s(\signal{s}) - \signal{s}v'_s(s) \right) \, d\signal{s} \tag*{Substituting variable \text{$z$} with \text{$\signal{s}$}} \nonumber\\
    & = \int_{0}^{1}\allocation(\signal{s}) \left( k \signal{s}^k - \signal{s}^k - \signal{s} \cdot k \signal{s}^{k-1}\right)\, d\signal{s} \tag*{Plugging in \text{$v_b(\signal{s})$ and $v_s(\signal{s})$}} \nonumber \\
    & = - \int_{0}^{1}\allocation(\signal{s}) \signal{s}^k \, d\signal{s}. \label{ineq:(1,0)-expected-buyer-utility-bound}
\end{align}

    Notice that if $\allocation(\signal{s}) \neq  0$ for any $\signal{s}$, then our bound from ~\eqref{ineq:(1,0)-expected-buyer-utility-bound} implies that the buyer's expected utility will become negative, which violates the buyer's interim IR constraint. As such, we can argue that the only mechanism that satisfies BIC and interim IR for both the buyer and the seller is the mechanism that never trades, i.e., $\allocation(\signal{s}) = \payment(\signal{s}) = 0$ for all $\signal{s} \in [0,1]$. Finally, since $v_b(\signal{s}) \geq v_s(\signal{s})$ for all $\signal{s} \in [0,1]$, the approximation ratio can be computed as:
    \[
    \frac{OPT}{ALG} = \frac{\mathbb{E}_{\signal{s}} [v_b(\signal{s})]}{\mathbb{E}_{\signal{s}} [v_s(\signal{s})]} = k.
    \] 
    

    \end{proof}

\bibliographystyle{ACM-Reference-Format} 
\bibliography{refs}

\appendix
\section{Obstacles to Designing Mechanisms for Bilateral Trade with Interdependent Values}

\subsection{The Necessity of Bayesian IC and Interim IR in Interdependent Bilateral Trade}\label{subsec-obstacle-ex-post}
In this section, we motivate our incentive concepts: Bayesian incentive-compatibility as our truthfulness notion and interim individual rationality.  We explain why stronger notions like dominant-strategy, ex-post incentive compatibility, or ex-post individual rationality cannot be meaningfully applied. First, we define these other notions.

\begin{definition}[Ex-Post Incentive Compatibility (EPIC)]\label{def:EPIC}
    A direct mechanism $(\allocation, \payment)$ is 
{\em ex-post incentive-compatible} if for every $\signal{b}$ and $\signal{s} \in [0,1]$, 
\begin{equation*}
\allocation(\signal{b},\signal{s}) \cdot v_{b}(\signal{b},\signal{s}) - \payment(\signal{b},\signal{s}) \geq \allocation(\signal{b}',\signal{s}) \cdot v_{b}(\signal{b},\signal{s}) - \payment(\signal{b}',\signal{s}),
\label{ineq:EPIC-buyer}
\end{equation*}
and
\begin{equation*}
\payment(\signal{b},\signal{s}) - \allocation(\signal{b},\signal{s}) \cdot v_{s}(\signal{b},\signal{s}) \geq \payment(\signal{b},\signal{s}') - \allocation(\signal{b},\signal{s}') \cdot v_{s}(\signal{b},\signal{s}).
\label{ineq:EPIC-seller}
\end{equation*}
\end{definition}

\begin{definition}[Ex-Post Individually Rational (Ex-post IR)]\label{def:EPIR}
    A direct mechanism $(\allocation, \payment)$ is 
{\em ex-post individually rational} for every $\signal{b}$ and $\signal{s} \in [0,1]$, 
\begin{equation*}
\allocation(\signal{b},\signal{s}) \cdot v_{b}(\signal{b},\signal{s}) - \payment(\signal{b},\signal{s})\geq 0, \hspace{1cm} \text{and} \hspace{1cm} \payment(\signal{b},\signal{s}) - \allocation(\signal{b},\signal{s}) \cdot v_{s}(\signal{b},\signal{s}) \geq 0.
\label{ineq:EPIR}
\end{equation*}
\end{definition}

We show that in bilateral trade with interdependent values, ex-post IC and ex-post IR may be overly restrictive, to the extent that a mechanism satisfying these constraints may no longer be able to approximate the optimal expected welfare. Of course, one can also consider stronger notions like dominant-strategy incentive-compatible (DSIC) mechanisms, but since every DSIC mechanism is also ex-post IC, our impossibility below applies to DSIC mechanisms as well.

\begin{claim}\label{claim:EPIC-EPIR-impossibility}
    For every $c > 2$, there exists an information structure where no ex-post IC and IR mechanism achieves a constant approximation ratio better than $c$, while there exists a BIC and interim IR mechanism that achieves almost optimal social welfare.
\end{claim}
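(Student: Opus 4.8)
The plan is to build an information structure that is interdependent only on a small ``blocking'' region, designed so that \emph{ex-post} individual rationality together with the monotonicity forced by ex-post incentive compatibility eliminates all trade, while the trivial always-trade posted price---whose outcome does not depend on the reports and is therefore automatically BIC---is interim IR and nearly optimal. Concretely, take $x_b,x_s\sim U[0,1]$ independent and (step-function) valuations $v_b(x_b,x_s)=\eta_{bb}(x_b)+\eta_{bs}(x_s)$, $v_s(x_b,x_s)=\eta_{sb}(x_b)+\eta_{ss}(x_s)$ with $\eta_{bb}(x_b)=\varepsilon x_b$ and $\eta_{ss}(x_s)=\varepsilon x_s$ (tiny strictly increasing terms whose only role is to force strict monotonicity of any ex-post allocation), $\eta_{bs}(x_s)\approx 0$ on $[0,\tfrac12)$ and $\eta_{bs}(x_s)=N$ on $[\tfrac12,1]$, and $\eta_{sb}(x_b)\approx 0$ on $[0,\tfrac12)$ and $\eta_{sb}(x_b)=1$ on $[\tfrac12,1]$, where $\varepsilon$ is a small constant and $N\ge c$ is large. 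Then all four component functions are non-decreasing and non-negative, $v_s>v_b$ holds exactly on the rectangle $R:=[\tfrac12,1]\times[0,\tfrac12)$ (there $v_s\approx 1 > v_b\approx 0$), and $v_b>v_s$ everywhere else, by roughly $N$ on $\{x_s\ge\tfrac12\}$.

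The first step is to show that no ex-post IC and IR mechanism trades. Ex-post IR (Definition~\ref{def:EPIR}) gives $q(x_b,x_s)\,v_s(x_b,x_s)\le t(x_b,x_s)\le q(x_b,x_s)\,v_b(x_b,x_s)$ at every profile, hence $q\equiv 0$ on $R$. Since $v_s$ is strictly increasing in $x_s$, the standard single-dimensional argument applied to ex-post IC for the seller makes $q(x_b,\cdot)$ non-increasing; together with $q|_R=0$ (pick, for $x_b\ge\tfrac12$ and $x_s>0$, a smaller $x_s''\in[0,\tfrac12)$) this forces $q\equiv 0$ on $\{x_b\ge\tfrac12\}$. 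Symmetrically, since $v_b$ is strictly increasing in $x_b$, ex-post IC for the buyer makes $q(\cdot,x_s)$ non-decreasing; together with the previous step (pick $x_b''=\tfrac12$) this forces $q\equiv 0$ everywhere. So the only ex-post IC and IR mechanism keeps the item with the seller, with welfare $\mathbb{E}[v_s]$, which tends to $\tfrac12$ as $\varepsilon\to0$ and does not grow with $N$; since $OPT=\mathbb{E}[\max\{v_b,v_s\}]\ge\mathbb{E}[v_b]\to N/2$, the ratio is at least $N\ge c$ once $\varepsilon$ is small enough.

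The second step is the positive side: use the posted price that always trades, at a price $p$ chosen in the interval $[\max_{x_s}\mathbb{E}_{x_b}[v_s(x_b,x_s)],\ \min_{x_b}\mathbb{E}_{x_s}[v_b(x_b,x_s)]]$, which is nonempty because its left endpoint is $\approx\tfrac12$ and its right endpoint is $\approx N/2$. This mechanism is BIC (allocation and payment are independent of the reported signals, so the BIC inequalities hold with equality) and interim IR (by the choice of $p$), and its welfare is $\mathbb{E}[v_b]$. Since $v_b\ge v_s$ off the measure-$\tfrac14$ rectangle $R$, on which $v_s-v_b\approx 1$, we get $OPT\le\mathbb{E}[v_b]+\tfrac14$, so the approximation ratio is $1+O(1/N)$---almost optimal. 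Taking $N=\lceil c\rceil$ and $\varepsilon$ sufficiently small then yields, for every $c>2$, an information structure on which every ex-post IC and IR mechanism has ratio $\ge c$ while a BIC and interim IR mechanism has ratio $1+O(1/c)$, proving the claim.

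The step I expect to be the main obstacle is making the ``ex-post forces no trade'' argument airtight for randomized mechanisms: I need the (possibly fractional) allocation to be monotone in each coordinate purely as a consequence of ex-post IC, and this is precisely why the construction inserts the strictly increasing $\varepsilon$-terms $\eta_{bb},\eta_{ss}$---without them, if an agent's value did not depend on their own signal at all, ex-post IC would leave the allocation unconstrained in that coordinate and a clever randomized mechanism could trade on part of the ``good'' region. A secondary nuisance is the bookkeeping that simultaneously keeps all four component functions non-decreasing and all valuations non-negative, drives $\mathbb{E}[v_b]/\mathbb{E}[v_s]\to\infty$, and keeps $\{v_s>v_b\}$ of bounded probability; the parameters above are chosen to do all three, but the constants have to be verified.
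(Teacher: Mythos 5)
Your argument is correct, but it takes a genuinely different, and more elaborate, route than the paper's.

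The paper's construction is one-dimensional: only the seller has a signal, with $v_b(\signal s)=c\,\signal s$ and $v_s(\signal s)=\signal s+\epsilon$. At the boundary $\signal s=0$ the seller's value $\epsilon$ strictly exceeds the buyer's value $0$, so ex-post IR alone pins $\allocation(0)=\payment(0)=0$; ex-post IC for the seller, together with the strict monotonicity of $v_s$ in $\signal s$, makes $\allocation$ non-increasing in $\signal s$, and hence $\allocation\equiv 0$. This boundary-value trick needs monotonicity in only one coordinate, which is exactly what lets the paper dispense with the issue you correctly anticipated: since the buyer has no signal at all, ex-post IC for the buyer is vacuous, but the argument never uses it. Your two-signal ``blocking rectangle'' construction solves that same issue by injecting the strictly increasing $\varepsilon$-terms so that ex-post IC yields monotonicity in \emph{both} coordinates, and then propagating $\allocation\equiv 0$ out of $R$ first in $\signal s$ and then in $\signal b$. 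This works, and it is a genuinely different construction, but it pays in bookkeeping for something the paper gets for free. On the positive side the two proofs are essentially the same: a posted price at which both parties always accept in the interim, which is trivially BIC (the outcome is report-independent) and gives welfare $\E[v_b]$, hence ratio tending to $1$.

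One minor slip worth flagging: the claim that $v_s>v_b$ ``exactly on $R$'' and $v_b>v_s$ everywhere else is not quite right. On $[0,\tfrac12)^2\cap\{\signal s>\signal b\}$ you have $v_s-v_b=\varepsilon(\signal s-\signal b)>0$. This does not damage the argument, because that region contributes only $O(\varepsilon)$ to $\E[\max(0,v_s-v_b)]$, so the bound $\mathrm{OPT}\le \E[v_b]+\tfrac14+O(\varepsilon)$ and the resulting ratio $1+O(1/N)$ still go through; but you should state the decomposition of $\{v_s>v_b\}$ correctly rather than as an equality with $R$.
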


\begin{proof}
    Let the seller's signal $\signal{s}$ be drawn uniformly from $[0,1]$. Fix some $\epsilon>0$ and let $c > 2 + 2 \epsilon$. Consider the following valuations:
    $$
        v_b(\signal{s})  =  c\signal{s} \hspace{2cm} \text{and} \hspace{2cm} 
        v_s(\signal{s})  = \signal{s} + \epsilon.
    $$
    Consider now any mechanism $\mechanism = (\allocation,\payment)$ that satisfies ex-post IC and ex-post IR. 
    We start by considering the ex-post IR conditions at $\signal{s} = 0$:
    \begin{align*}
        \allocation(0) \cdot v_b(0) - \payment(0) \geq 0 \Rightarrow \payment(0) = 0, \\
        \payment(0) - \allocation(0) \cdot v_s(0) \geq 0 \Rightarrow \allocation(0) = 0.
    \end{align*}

    We now argue that in fact $\allocation(\signal{s}) = 0$ and $\payment(\signal{s}) = 0$ for all $\signal{s} \in [0,1]$. This is because the functions $\allocation$, $\payment$ are non-increasing in $\signal{s}$ (which is due to the ex-post IC condition and the non-decreasing monotonicity of $v_s$), which combined with $\allocation(0) = 0$ and $\payment(0) = 0$, implies that $\allocation(\signal{s}) = 0$ and $\payment(\signal{s})=0$ for all $\signal{s}\in[0,1]$.  Hence the no-trade no-payment mechanism is the only mechanism that satisfies ex-post IC and ex-post IR. We now compute the approximation ratio of that mechanism:
    \[
    \frac{OPT}{ALG} = \frac{\E_{\signal{s}}[\max(v_s(\signal{s}),v_b(\signal{s}))]}{\E_{\signal{s}}[v_s(\signal{s})]} \geq \frac{\E_{\signal{s}}[v_b(\signal{s})]}{\E_{\signal{s}}[v_s(\signal{s})]} =  \frac{c\cdot \E_{\signal{s}}[\signal{s}]}{\epsilon + \E_{\signal{s}}[\signal{s}]} = \frac{c}{2 \epsilon + 1} \approx c.
    \] 
    
    We now provide a BIC and interim IR mechanism that almost achieves the optimal expected social welfare. The mechanism posts a price of $q=1 + \epsilon$. Notice that in this mechanism, the dominant strategy of the seller is to always accept the trade (since $v_s(\signal{s}) \leq q$ for all $\signal{s}\in[0,1]$), and this reveals no information about their signal to the buyer. 
    
    The mechanism is also BIC for the buyer simply because the buyer is fully uninformed and the seller always accepts regardless of their signal. We only need to check the buyer's interim IR condition. If they select to always reject the trade, then their expected utility is trivially 0. If they select to accept the trade then their expected utility is:
    \[
    \mathbb{E}_{\signal{s}}[v_b(\signal{s}) - q] = \frac{c}{2} - 1 - \epsilon = \frac{c-2-2\epsilon}{2} > 0.
    \]
    As a result, the buyer prefers to always accept the trade, and the mechanism satisfies buyer IR as well. Note that the optimal welfare is at most the $E[v_b]+E[v_s]=\frac {c}2 + \frac {1+\epsilon} 2$. Our posted price mechanism always trades the item, so its welfare is $\frac c 2$. Thus, its approximation ratio approaches $1$ as $c$ goes to infinity, and $\epsilon$ to $0$.
\end{proof}



\subsection{A Posted Price Mechanism which Never Trades}\label{subsec-obstacle-posted-price}

We show an example of simple linear valuation functions where no posted price mechanism admits an equilibrium in which the item is traded with positive probability.

\begin{example}\label{example:no-non-trivial-equilibrium}
   Let the valuations be as follows:
    \begin{align*}
            v_b(\signal{b}, \signal{s}) = \signal{b} + 2\cdot\signal{s}\\
            v_s(\signal{b}, \signal{s}) = \signal{s}+ 2\cdot\signal{b}
    \end{align*}
\end{example}

Consider a posted price mechanism $p$ for the information structure of Example~\ref{example:no-non-trivial-equilibrium}.
As before, the equilibrium strategies of the players are threshold strategies, as in Lemma \ref{lemma:equ-strategies-thresholds}. If an equilibrium strategy exists where both thresholds lie in $[0,1]$, then the following two inequalities must hold. The first ensures that 
the buyer's profit is non-negative for $\signal{b} = \thsbuyer$, and the second ensures that the seller's profit is non-negative for $\signal{s} = \thsseller$.

\begin{align}
    \thsbuyer + 2\cdot\dfrac{\thsseller}{2} & \geq  p \iff \thsbuyer + \thsseller \geq  p \nonumber\\
    \thsseller + 2\cdot\frac{1+\thsbuyer}{2} & \leq  p \iff \thsseller + 1+\thsbuyer \leq  p \nonumber
\end{align}

However, these two inequalities have no solution. Therefore, for every price $p$, the only equilibrium strategy is to never trade the item.

\section{Proof of Lemma~\ref{lemma:buyer-signal-tec-summation}}\label{sec:app:lemma:aux-lemma}

\begin{proof}[Proof of Lemma~\ref{lemma:buyer-signal-tec-summation}] We bound a term arising from the payment identity, showing that the sum of allocation probabilities has to be smaller than a certain quantity.
\begin{align}
        \sum_{\signal{b}=1}^m \frac{1}{k^{\signal{b}}}&\cdot\int_{1}^{\signal{b}} \allocation(b,\signal{s})\cdot k^b\cdot\ln{k}\d b  = \sum_{\signal{b}=1}^m \frac{1}{k^{\signal{b}}}\cdot \sum_{i =1}^{\signal{b}-1}  \int_{i}^{i+1} \allocation(b,\signal{s})\cdot k^b\cdot\ln{k} \d b \nonumber\\
        & = \sum_{i=1}^{m-1} \left(\int_{i}^{i+1} \allocation(b,\signal{s})\cdot k^b\cdot\ln{k} \d b \right)\cdot \sum_{\signal{b}=i+1}^m \frac{1}{k^{\signal{b}}} \nonumber & \text{order of integration} \\
        & =  \sum_{i=1}^{m-1} \left(\int_{i}^{i+1} \allocation(b, \signal{s})\cdot k^b\cdot\ln{k} \d b \right)\cdot \left(\frac{\frac{1}{k^{i+1}}\left(\frac{1}{k^{m-i}}-1\right)}{\frac{1}{k} -1}\right) \nonumber & \text{geometric sum} \\
        & = \sum_{i=1}^{m-1} \left(\int_{i}^{i+1} \allocation(b,\signal{s})\cdot k^b\cdot\ln{k}\d b \right)\cdot \left(\frac{1}{k^{i}(k-1)} -\frac{1}{k^m\cdot (k-1)}\right) \nonumber \\
        & \geq \sum_{\signal{b}=1}^{m-1} \left( k^{\signal{b}+1} -k^{\signal{b}} \right)\cdot \allocation(\signal{b}, \signal{s})\cdot \left(\frac{1}{k^{\signal{b}}(k-1)} -\frac{1}{k^m\cdot (k-1)}\right) \nonumber & \text{monotonicity of $\allocation(\cdot)$} \\
        & = \sum_{\signal{b}=1}^{m-1} \allocation(\signal{b},\signal{s})\left(1-\frac{1}{k^{m-\signal{b}}}\right). \nonumber
\end{align}
\end{proof}

\section{Proof of Theorem \ref{thm:uninformed-buyer-impossibility} Part I}
\label{sec:app-uninformed-buyer}

In this section, we provide the proof of the first part of Theorem \ref{thm:uninformed-buyer-impossibility}.  For convenience, we restate the theorem. 


\begin{numbredtheorem}{\ref{thm:uninformed-buyer-impossibility} Part I}
For every $k, m \in \mathbb{N}$, there exists an information structure where the seller is $\informedcoeffseller$-informed and the buyer is $\informedcoeffbuyer$-informed such that $\informedcoeffseller > 0, \informedcoeffbuyer < 1, k\geq 4, m\geq 5$, and no BIC and interim IR mechanism can have an approximation ratio better than:
    \begin{equation*}
\frac{\informceoffbuyer+1}{\frac{1}{k}(1+ \informceoff)+ \informceoffbuyer  +\frac{4}{m}+\informceoffbuyer\cdot \frac{2}{k}+\frac{2}{k^2}\cdot\informceoff},
    \end{equation*}
    where $\informceoffbuyer =\frac{\informedcoeffbuyer}{1-\informedcoeffbuyer}$ and $ \sellerratioinformed = \frac{1-\informedcoeffseller}{\informedcoeffseller}$.
\end{numbredtheorem}



\begin{proof}

We construct an information structure $\mathcal{I}_{k,m,\informedcoeffseller, \informedcoeffbuyer}$ which is similar in spirit to the one constructed in Section~\ref{sec:construction:buyer:has:signal}. However, now the informative signal is held by the seller rather than the buyer, which makes the technical details of the proof different.

In the information structure $\mathcal{I}_{k, m, \informedcoeffseller, \informedcoeffbuyer}$, the seller has an informative signal $\signal{s}$. This signal is a discrete random variable with support $[m]\cup \{0\}$. The buyer has a less informative signal $\signal{b}$.  The seller's valuation function is defined as $v_s(\signal{b}, \signal{s}) = \sellervaluationbuyerfunc{}(\signal{b}) + \sellervaluationsellerfunc{}(\signal{s})$. The buyer's valuation is a function of the seller's signal plus a constant: $v_b(\signal{b}, \signal{s}) = \buyervaluationbuyerfunc{}(\signal{b}) + \buyervaluationsellerfunc{}(\signal{s})$, where $\buyervaluationsellerfunc{}(\signal{s})$, is a constant function $\delta \cdot \mu$, with $\mu = \frac{m}{k}, \, \delta=\frac{\informedcoeffbuyer}{1-\informedcoeffbuyer}, \, \informceoff=\frac{\informedcoeffseller}{1-\informedcoeffseller}$, and $\mu = \E_{\signal{b}}[\buyervaluationbuyerfunc{}(\signal{b})]$. 

We now specify the probability mass function of the signal $\signal{s}$, the buyer's valuation function, and the seller's valuation function. 
        \begin{equation*}\label{equ:def-prob-and-val-uninformed-buyer-impossibility}
    \pdfsellersignal(\signal{s}) = \begin{cases}
                 \frac{1}{k^{\signal{s}}} &  \signal{s} \in [m];\\
        1-\sum_{i=1}^m\frac{1}{k^i}  & \signal{s}=0.
           \end{cases} \quad
    v_b(\signal{s}) = \begin{cases}
                 k^{\signal{s}} +\informceoffbuyer\cdot\mu  &   \signal{s} \in [m] ;\\
                 \informceoffbuyer\cdot\mu  &  \signal{s} \notin [m].
           \end{cases} \quad 
    v_s(\signal{b}, \signal{s}) = \begin{cases}
                 k^{\signal{s}-1} + \sellervaluationbuyerfunc{}(\signal{b})  &   \signal{s} \in [m] ;\\
                 \sellervaluationbuyerfunc{}(\signal{b}) &  \signal{s} \notin [m].
           \end{cases}           
\end{equation*}

As in Section~\ref{sec:construction:buyer:has:signal}, the informative signal in $\mathcal{I}_{k,m, \informedcoeffseller, \informedcoeffbuyer}$ is not distributed uniformly over $[0,1]$ for presentation, but a simple adjustment fixes this.
Now, let $\mechanism= (\allocation, \payment)$ be a BIC and interim IR mechanism for this information structure. 
We first state Lemma~\ref{lemma:uninformed-buyer-interim-ir-bound-on-alloc}. We then prove the first part of the theorem by applying the lemma. The proof of Lemma~\ref{lemma:uninformed-buyer-interim-ir-bound-on-alloc} is deferred to Appendix~\ref{sec:app-uninformed-buyer}.

\begin{lemma}\label{lemma:uninformed-buyer-interim-ir-bound-on-alloc}
    If $\mechanism$ is interim IR, then:
    $$
    \informceoffbuyer\cdot\mu\cdot\frac{2}{k} + \E_{\signal{b}}[\sellervaluationbuyerfunc{}(\signal{b})]\cdot\frac{2}{k} + 4  \geq \sum_{\signal{s}=1}^m \E_{\signal{b}}[\allocation(\signal{b},\signal{s})].
    $$
\end{lemma}

We start by computing the expected value of both agents:
    \begin{equation*}
        \E[v_b(\signal{s})] = \informceoffbuyer\cdot \mu +\sum_{\signal{s}=1}^m \frac{1}{k^{\signal{s}}}\cdot k^{\signal{s}} = m(\informceoffbuyer + 1) \hspace{1cm} \text{and} \hspace{1cm}  \E[v_s(\signal{b}, \signal{s})] = \E_{\signal{b}}[\sellervaluationbuyerfunc{}(\signal{b})] + \sum_{\signal{s}=1}^m \frac{1}{k^{\signal{s}}}\cdot k^{\signal{s}-1} = \frac{m}{k}(1+ \informceoff).
    \end{equation*}
    The expected welfare of $\mechanism$'s is:
    \begin{align*}
        ALG & = \E[v_s(\signal{b},\signal{s})] + \E_{\signal{b}}[\sum_{\signal{s}=0}^m \pdfsellersignal(\signal{s})\cdot\allocation(\signal{b},\signal{s})\left( v_b(\signal{s}) -v_s(\signal{b},\signal{s}) \right)]\\
        & \leq \frac{m}{k}(1+ \informceoff)+ \informceoffbuyer\cdot\ m +(1-\frac{1}{k})\cdot\sum_{\signal{s}=1}^m \E_{\signal{b}}[\allocation(\signal{b},\signal{s})]\\
        & \leq \frac{m}{k}(1+ \informceoff)+ \informceoffbuyer\cdot\ m +\sum_{\signal{s}=1}^m \E_{\signal{b}}[\allocation(\signal{b},\signal{s})].
    \end{align*}

Note that the optimal welfare is at least the buyer's expected value. Thus, using Lemma~\ref{lemma:uninformed-buyer-interim-ir-bound-on-alloc} the approximation ratio of $\mechanism$ is at least:

    \begin{equation*}
        \frac{OPT}{ALG} \geq \frac{m(\informceoffbuyer+1)}{\frac{m}{k}(1+ \informceoff)+ \informceoffbuyer\cdot m  +4+\informceoffbuyer\cdot m\frac{2}{k}+\frac{2}{k}\frac{m}{k}\cdot\informceoff} = \frac{\informceoffbuyer+1}{\frac{1}{k}(1+ \informceoff)+ \informceoffbuyer  +\frac{4}{m}+\informceoffbuyer\cdot \frac{2}{k}+\frac{2}{k^2}\cdot\informceoff}.
    \end{equation*}

    \end{proof}


Before proving Lemma~\ref{lemma:uninformed-buyer-interim-ir-bound-on-alloc}, we state a technical lemma we use for the proof.

\begin{lemma}\label{lemma:seller-has-signal-tec}
For every $\signal{b}$:
    $$
    \sum_{\signal{s}=1}^m \frac{1}{k^{\signal{s}}}\cdot \left(\int_0^1 \allocation(\signal{b},t) \d t +\int_{1}^{\signal{s}} \allocation(\signal{b}, t)\cdot k^{t-1}\ln{k}\d t \right ) \leq \frac{1}{k-1}\cdot \allocation(\signal{b},0) + \sum_{\signal{s}=1}^{m-1} \allocation(\signal{b},\signal{s})\left(\frac{1}{k}-\frac{1}{k^{m-\signal{s}+1}}\right). 
    $$
\end{lemma}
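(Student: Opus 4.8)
The plan is to follow the same route as the proof of Lemma~\ref{lemma:buyer-signal-tec-summation} in Appendix~\ref{sec:app:lemma:aux-lemma}, with two adjustments: here there is an extra term $\int_0^1 \allocation(\signal{b},t)\,\d t$ inside the parenthesis, and the weight appearing in the seller's payment identity is $k^{t-1}\ln k$ rather than $k^{t}\ln k$. Fix $\signal{b}$ throughout. First I would split the left-hand side exactly as $S_1+S_2$, where $S_1=\bigl(\sum_{\signal{s}=1}^m k^{-\signal{s}}\bigr)\int_0^1 \allocation(\signal{b},t)\,\d t$ and $S_2=\sum_{\signal{s}=1}^m k^{-\signal{s}}\int_1^{\signal{s}}\allocation(\signal{b},t)\,k^{t-1}\ln k\,\d t$, and bound each piece separately.

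For $S_1$: the geometric sum satisfies $\sum_{\signal{s}=1}^m k^{-\signal{s}}=\frac{1-k^{-m}}{k-1}\le\frac{1}{k-1}$, and since $\allocation(\signal{b},\cdot)$ is non-increasing in the seller's signal (by the Myersonian analysis for the seller recalled after Lemma~\ref{lemma:seller-payment-formula}), we get $\int_0^1 \allocation(\signal{b},t)\,\d t\le\allocation(\signal{b},0)$. Hence $S_1\le\frac{1}{k-1}\allocation(\signal{b},0)$, which accounts for the first term on the right-hand side.

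For $S_2$ (the $\signal{s}=1$ term vanishes), I would write $\int_1^{\signal{s}}=\sum_{i=1}^{\signal{s}-1}\int_i^{i+1}$ and swap the order of summation to obtain $S_2=\sum_{i=1}^{m-1}\bigl(\int_i^{i+1}\allocation(\signal{b},t)\,k^{t-1}\ln k\,\d t\bigr)\cdot\sum_{\signal{s}=i+1}^m k^{-\signal{s}}$. Evaluating the inner geometric sum gives $\sum_{\signal{s}=i+1}^m k^{-\signal{s}}=\frac{1}{k^i(k-1)}-\frac{1}{k^m(k-1)}$, which is strictly positive for $i\le m-1$; this positivity is what licenses replacing the integral by an upper bound. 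By monotonicity, $\int_i^{i+1}\allocation(\signal{b},t)\,k^{t-1}\ln k\,\d t\le\allocation(\signal{b},i)\int_i^{i+1}k^{t-1}\ln k\,\d t=\allocation(\signal{b},i)(k^i-k^{i-1})=\allocation(\signal{b},i)\,k^{i-1}(k-1)$. Substituting and simplifying, $k^{i-1}(k-1)\bigl(\frac{1}{k^i(k-1)}-\frac{1}{k^m(k-1)}\bigr)=\frac{1}{k}-\frac{1}{k^{m-i+1}}$, so $S_2\le\sum_{i=1}^{m-1}\allocation(\signal{b},i)\bigl(\frac{1}{k}-\frac{1}{k^{m-i+1}}\bigr)$. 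Adding the two bounds and renaming $i$ as $\signal{s}$ yields the claim.

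I do not expect a genuine obstacle; this is essentially index bookkeeping. The two points needing care are: confirming that the weight $\sum_{\signal{s}=i+1}^m k^{-\signal{s}}$ is nonnegative before bounding the integrals from above, and getting the exponent shift right ($m-i+1$ rather than $m-i$) — this is precisely where the $k^{t-1}$ weight, as opposed to $k^{t}$, makes the statement differ from Lemma~\ref{lemma:buyer-signal-tec-summation}.
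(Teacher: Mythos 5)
Your proposal is correct and follows essentially the same route as the paper's proof: split off the $\int_0^1$ piece, swap the order of summation on the remaining piece, evaluate the geometric sums $\sum_{\signal{s}=i+1}^m k^{-\signal{s}}$, and use monotonicity of $\allocation(\signal{b},\cdot)$ to upper-bound each integral by its left-endpoint value. The arithmetic and exponent bookkeeping all check out against the paper's derivation.
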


We now continue with the proof of Lemma~\ref{lemma:uninformed-buyer-interim-ir-bound-on-alloc}.

\begin{proof}[Proof of Lemma~\ref{lemma:uninformed-buyer-interim-ir-bound-on-alloc}]By Myerson's Lemma for the seller (Lemma~\ref{lemma:seller-payment-formula}), the payment function satisfies Equation~\eqref{equ:payment-formula-seller-uninformed-buyer-impossibility}, for every $\signal{s} \in [m]$. Equation ~(\ref{ineq:uninformed-buyer-impossibility-bound-on-p}) is by $\mechanism$'s interim IR for the seller when $\signal{s}=0$.


\begin{align}
    \E_{\signal{b}} \left[ \payment(\signal{b},\signal{s}) \right] 
    &= \E_{\signal{b}} \bigg[
        \payment(\signal{b},0) 
        - v_s(\signal{b},0) \cdot \allocation(\signal{b},0) 
        + v_s(\signal{b},\signal{s}) \cdot \allocation(\signal{b},\signal{s})  \notag \\
    &\quad \quad - \int_{0}^{\signal{s}} \allocation(\signal{b},t) 
        \left. \frac{\partial v_s(\signal{b},\signal{s})}{\partial \signal{s}} \right|_{\signal{s}=t} \, d t 
    \bigg] \notag \\
    &= \E_{\signal{b}} \bigg[
        \payment(\signal{b},0) 
        - v_s(\signal{b},0) \cdot \allocation(\signal{b},0) 
        + v_s(\signal{b},\signal{s}) \cdot \allocation(\signal{b},\signal{s}) \notag \\
    &\quad \quad - \int_0^1 \allocation(\signal{b},t)\, d t  
        - \int_1^{\signal{s}} \allocation(\signal{b},t) \cdot k^{t-1} \ln{k} \, d t
    \bigg].
    \label{equ:payment-formula-seller-uninformed-buyer-impossibility}
\end{align}

\begin{align}\label{ineq:uninformed-buyer-impossibility-bound-on-p}
    \E_{\signal{b}} \left[ \payment(\signal{b},\signal{s}) \right] & \geq \E_{\signal{b}} \left[ v_s(\signal{b},\signal{s}) \cdot \allocation(\signal{b},\signal{s})-\int_0^1 \allocation(\signal{b},t)\d t  - \int_1^{\signal{s}} \allocation(\signal{b},t)\cdot k^{t-1}\cdot\ln{k} \d t\right]. 
\end{align}

Since $\mechanism$ is interim IR for the seller, we obtain the following for every $\signal{s} \in[m]$:

\begin{align*}
    &\E_{\signal{b}} 
    \left[-v_s(\signal{b}, \signal{s}) \cdot \allocation(\signal{b}, \signal{s}) + \payment(\signal{b}, \signal{s}) \right] \geq 0 \\[8pt]
    \iff \quad & \E_{\signal{b}} \bigg[
        \payment(\signal{b}, 0) 
        - \allocation(\signal{b},0) \cdot \sellervaluationbuyerfunc{}(\signal{b}) 
        - \int_0^1 \allocation(\signal{b},t)\, d t  
        - \int_1^{\signal{s}} \allocation(\signal{b},t) \cdot k^{t-1} \ln{k} \, d t
    \bigg] \geq 0 \\[8pt]
    \iff \quad & \E_{\signal{b}}[\payment(\signal{b},0)] \geq  
    \E_{\signal{b}} \bigg[
        \allocation(\signal{b},0) \cdot \sellervaluationbuyerfunc{}(\signal{b}) 
        + \int_0^1 \allocation(\signal{b},t) \, d t  
        + \int_1^{\signal{s}} \allocation(\signal{b},t) \cdot k^{t-1} \ln{k} \, d t
    \bigg].
\end{align*}

As $\allocation(\signal{b},\signal{s})$ is monotone non-increasing in $\signal{s}$, applying the previous inequality at $\signal{s}=m$ provides the following bound on $\payment(\signal{b},0)$:
\begin{align}\label{ineq:uninformed-buyer-bound-on-p-zero}
    \E_{\signal{b}}[\payment(\signal{b},0)] 
    &    \geq \E_{\signal{b}}\left[\allocation(\signal{b},0)\cdot\sellervaluationbuyerfunc{}(\signal{b}) 
    + \sum_{\signal{s}=1}^{m-1} \int_{\signal{s}}^{\signal{s}+1} \allocation(\signal{b},t)\cdot k^{t-1}\ln{k}\d t \right] \notag \\
    &\geq \E_{\signal{b}}\left[\allocation(\signal{b},0)\cdot\sellervaluationbuyerfunc{}(\signal{b}) 
    + \sum_{\signal{s}=1}^{m-1} \allocation(\signal{b},\signal{s}+1)\cdot k^{\signal{s}-1}(k-1)\right].    
\end{align}

Next, we use the interim IR condition for the buyer to conclude the proof of the lemma.

\begin{align}\label{ineq:uninformed-buyer-impos-interim-buyer}
    &  \E_{\signal{b}}\left[\pdfsellersignal(0)\cdot\left(\allocation(\signal{b},0)\cdot v_b(\signal{b},0) -\payment(\signal{b},0)\right) + \sum_{\signal{s}=1}^m \frac{1}{k^\signal{s}}\left((k^{\signal{s}}+ \informceoffbuyer\cdot \mu)\cdot\allocation(\signal{b},\signal{s}) -\payment(\signal{b}, \signal{s})\right)\right] \geq 0 .
\end{align}

We now analyze the left hand side of Inequality~\eqref{ineq:uninformed-buyer-impos-interim-buyer}.  
\begin{align}
    &\E_{\signal{b}} \bigg[
        \pdfsellersignal(0) \cdot \left(
            \allocation(\signal{b},0) \cdot v_b(\signal{b},0) 
            - \payment(\signal{b},0) 
        \right) 
        + \sum_{\signal{s}=1}^m \frac{1}{k^{\signal{s}}} \left(
            (k^{\signal{s}} + \informceoffbuyer \cdot \mu) \cdot \allocation(\signal{b},\signal{s}) 
            - \payment(\signal{b}, \signal{s}) 
        \right) 
    \bigg] \notag \\
    &\leq \informceoffbuyer \cdot \mu 
    + \E_{\signal{b}} \bigg[
        -\pdfsellersignal(0) \cdot \payment(\signal{b},0) 
        + \sum_{\signal{s}=1}^m \allocation(\signal{b},\signal{s}) 
        - \frac{\payment(\signal{b},\signal{s})}{k^{\signal{s}}}
    \bigg] \notag \\
    &\leq \informceoffbuyer \cdot \mu 
    + \E_{\signal{b}} [\sellervaluationbuyerfunc{}(\signal{b})] 
    + \E_{\signal{b}} \bigg[
        -\payment(\signal{b},0) 
        + \sum_{\signal{s}=1}^m \allocation(\signal{b},\signal{s}) \left(1 - \frac{1}{k}\right) \notag \\
    &\quad\quad
        + \sum_{\signal{s}=1}^m \frac{
            \int_0^1 \allocation(\signal{b},t) \, d t 
            + \int_1^{\signal{s}} \allocation(\signal{b},t) \cdot k^{t-1} \ln{k} \, d t
        }{k^{\signal{s}}}
    \bigg].
    \label{ineq:seller-has-signal-first-half}
\end{align}

The second Inequality is by using Equality~\eqref{equ:payment-formula-seller-uninformed-buyer-impossibility} and a bound of $1$ on the allocation function.
We now continue to bound some of this expression:
\begin{align*}
    &  \E_{\signal{b}}\left[-\payment(\signal{b},0) +\sum_{\signal{s}=1}^m \allocation(\signal{b},\signal{s})\left(1-\frac{1}{k}\right) +  \sum_{\signal{s}=1}^m \frac{\int_0^1 \allocation(\signal{b},t)\d t + \int_1^{\signal{s}} \allocation(\signal{b},t)\cdot k^{t-1}\cdot\ln{k} \d t }{k^{\signal{s}}}\right]\\
     & \leq \E_{\signal{b}}\left[-\left(\sum_{\signal{s}=1}^{m-1} \allocation(\signal{b},\signal{s}+1)\cdot k^{\signal{s}-1}(k-1)\right) + \sum_{\signal{s}=1}^m \allocation(\signal{b},\signal{s})(1-\frac{1}{k}) + \sum_{\signal{s}=1}^m \frac{\int_0^1 \allocation(\signal{b},t)\d t + \int_1^{\signal{s}} \allocation(\signal{b},t)\cdot k^{t-1}\cdot\ln{k} \d t }{k^{\signal{s}}}\right]\\
     & \leq \E_{\signal{b}}\left[-\sum_{\signal{s}=1}^{m-1} \allocation(\signal{b},\signal{s}+1)\cdot k^{\signal{s}-1}(k-1) + \sum_{\signal{s}=1}^m \allocation(\signal{b},\signal{s})(1-\frac{1}{k}) + \frac{1}{k-1}\cdot \allocation(\signal{b},0) + \sum_{\signal{s}=1}^{m-1} \allocation(\signal{b},\signal{s})\left(\frac{1}{k}-\frac{1}{k^{m-\signal{s}+1}}\right)\right]\\
     & \leq \E_{\signal{b}} \left[\sum_{\signal{s}=2}^{m-1} \allocation(\signal{b},\signal{s})\left(1-\frac{1}{k^{m-\signal{s}+1}}-k^{\signal{s}-2}(k-1)\right) +\allocation(\signal{b},0)\frac{1}{k-1} + \allocation(\signal{b},1)\cdot\left(1-\frac{1}{k^m}\right)\right]\\
     & \leq \E_{\signal{b}}\left[2k + (2-k)\cdot\sum_{\signal{s} =1}^{m}\allocation(\signal{b},\signal{s})\right]\leq \E_{\signal{b}}\left[2k -\frac{k}{2}\cdot\sum_{\signal{s}=1}^m \allocation(\signal{b},\signal{s})\right] = 2k -\frac{k}{2}\sum_{\signal{s}=1}^m \E_{\signal{b}}[\allocation(\signal{b},\signal{s})].
\end{align*}

The first inequality is by Inequality~\eqref{ineq:uninformed-buyer-bound-on-p-zero}, and the second is by Lemma~\ref{lemma:seller-has-signal-tec}. The forth inequality is due to $k^{\signal{s}-2} \geq 1$ for $\signal{s} \geq 2$.
Now, if $\mechanism$ satisfies interim IR for the buyer, then it must be that 
$$
 \informceoffbuyer\cdot\mu + \E_{\signal{b}}[\sellervaluationbuyerfunc{}(\signal{b})] + 2k -\frac{k}{2}\sum_{\signal{s}=1}^m \E_{\signal{b}}[\allocation(\signal{b},\signal{s})] \geq 0 \iff  \informceoffbuyer\cdot\mu\cdot\frac{2}{k} + \E_{\signal{b}}[\sellervaluationbuyerfunc{}(\signal{b})]\cdot\frac{2}{k} + 4  \geq \sum_{\signal{s}=1}^m \E_{\signal{b}}[\allocation(\signal{b},\signal{s})].
$$

\end{proof}

\begin{proof}[Proof of Lemma~\ref{lemma:seller-has-signal-tec}]We bound a term arising from the payment identity, showing that the sum of allocation probabilities has to be smaller than a certain quantity.
    \begin{align}
    \label{equ:uninformed-buyersum-of-payments}
    \sum_{\signal{s}=1}^m & \frac{1}{k^{\signal{s}}}\cdot  \left(\int_0^1 \allocation(\signal{b},t) \d t +\int_{1}^{\signal{s}} \allocation(\signal{b}, t)\cdot k^{t-1}\ln{k}\d t \right ) \nonumber \\
        &= \sum_{\signal{s}=1}^m \frac{1}{k^{\signal{s}}}\cdot\int_0^1 \allocation(\signal{b},t) \d t + \sum_{\signal{s}=1}^m \frac{1}{k^{\signal{s}}}\cdot \sum_{i =1}^{\signal{s}-1}  \int_{i}^{i+1} \allocation(\signal{b},t)\cdot k^{t-1}\ln{k}\d t \nonumber\\
        & = \sum_{\signal{s}=1}^m \frac{1}{k^{\signal{s}}}\cdot\int_0^1 \allocation(\signal{b},t) \d t +\sum_{i=1}^{m-1} \left(\int_{i}^{i+1} \allocation(\signal{b},t)\cdot k^{t-1}\ln{k} \d t \right)\cdot \sum_{\signal{s}=i+1}^{m} \frac{1}{k^{\signal{s}}} \nonumber \\
        & =  \left(\frac{\frac{1}{k}\left(\frac{1}{k^m}-1\right)}{\frac{1}{k}-1}\right)\cdot\int_0^1 \allocation(\signal{b},t) \d t + \sum_{i=1}^{m-1} \left(\int_{i}^{i+1} \allocation(\signal{b},t)\cdot k^{t-1}\ln{k} \d t \right)\cdot \left(\frac{\frac{1}{k^{i+1}}\left(\frac{1}{k^{m-i}}-1\right)}{\frac{1}{k} -1}\right) \nonumber & \text{geometric sum} \\
        & =  \left(\frac{k^m -1}{k^m(k-1)}\right)\cdot\int_0^1 \allocation(\signal{b},t) \d t +\sum_{i=1}^{m-1} \left(\int_{i}^{i+1} \allocation(\signal{b},t)\cdot k^{t-1}\ln{k}\d t  \right)\cdot \left(\frac{1}{k^{i}(k-1)} -\frac{1}{k^m\cdot (k-1)}\right) \nonumber\\
        & \leq \frac{1}{k-1}\cdot \allocation(\signal{b},0) + \sum_{\signal{s}=1}^{m-1} \left( k^{\signal{s}} -k^{\signal{s}-1} \right)\cdot \allocation(\signal{b},\signal{s})\cdot \left(\frac{1}{k^{\signal{s}}(k-1)} -\frac{1}{k^m\cdot (k-1)}\right)\nonumber\\
        & = \frac{1}{k-1}\cdot \allocation(\signal{b},0) + \sum_{\signal{s}=1}^{m-1} \allocation(\signal{b},\signal{s})\left(\frac{1}{k}-\frac{1}{k^{m-\signal{s}+1}}\right).
        \end{align}
The last inequality is due to the monotonicity of $\allocation(\signal{s})$ (which is non-increasing in this case). 
\end{proof}

\end{document}